\newcommand{\R}{\mathbb{R}}
\newcommand{\N}{\mathbb{N}}
\newcommand{\E}{\mathbb{E}}
\newcommand{\hhoracle}{\operatorname{\textsc{HH-Oracle}}}
\newcommand{\sketchalg}{\operatorname{\textsc{SketchAlg}}}
\newcommand{\counter}{\mathsf{counter}}
\newcommand{\hDS}{\mathcal{C}}
\newcommand{\QDist}{\mathcal{D}}
\DeclareMathOperator*{\median}{\mathsf{median}}
\newcommand{\eps}{\varepsilon}
\newcommand{\mydef}{:=}
\newcommand{\Var}{\mathrm{Var}}
\newcommand{\Err}{\mathrm{Err}}
\newcommand{\sF}{\mathcal{F}}
\newcommand{\sA}{\mathcal{A}}
\newcommand{\sD}{\mathcal{D}}
\newtheorem{theorem}{Theorem}[section]
\newtheorem{lemma}[theorem]{Lemma}
\newtheorem{corollary}[theorem]{Corollary}
\newtheorem{claim}[theorem]{Claim}
\theoremstyle{definition}
\newtheorem{remark}[theorem]{Remark}
\title{(Learned) Frequency Estimation Algorithms under Zipfian Distribution}
\date{}
\author{Anders Aamand%
	\thanks{BARC, University of Copenhagen, \protect\url{aa@di.ku.dk}}
\and Piotr Indyk%
	\thanks{CSAIL, MIT, \protect\url{indyk@mit.edu}} 
\and Ali Vakilian\thanks{University of Wisconsin-Madison, \protect\url{vakilian@wisc.edu}}}
\begin{document}
\maketitle

\begin{abstract}
The frequencies of the elements in a data stream are an important statistical measure and the task of estimating them arises in many applications within data analysis and machine learning. Two of the most popular algorithms for this problem, Count-Min and Count-Sketch, are widely used in practice. 

In a recent work [Hsu et al., ICLR'19], it was shown empirically that augmenting Count-Min and Count-Sketch with a machine learning algorithm leads to a significant reduction of the estimation error. The experiments were complemented with an analysis of the expected error incurred by Count-Min (both the standard and the augmented version) when the input frequencies follow a Zipfian distribution. Although the authors established that the  learned version of Count-Min has lower estimation error than its standard counterpart, their analysis  of the standard  Count-Min algorithm was not tight. Moreover, they provided no similar analysis for Count-Sketch. 

In this paper we resolve these problems. First, we provide a simple tight analysis of the expected error incurred by Count-Min. Second, we provide the first error bounds for both the standard and  the augmented version of Count-Sketch. These bounds are nearly tight and again demonstrate an improved performance of the learned version of Count-Sketch. 

In addition to demonstrating tight gaps between the aforementioned algorithms, we believe that our bounds for the standard versions of Count-Min and Count-Sketch are of independent interest. In particular, it is a typical practice to set the number of hash functions in those algorithms to $\Theta (\log n)$. In contrast, our results show that to minimize the \emph{expected} error, the number of hash functions should be a constant, strictly greater than $1$. 
\end{abstract}

\pagenumbering{gobble}
\newpage
\setcounter{page}{1}
\pagenumbering{arabic}

\section{Introduction}

The last few  years have  witnessed  a rapid  growth in using machine learning methods to solve ``classical''  algorithmic problems.  For example, they have been used to improve the performance of
data structures~\cite{kraska2017case, mitz2018model}, 
online algorithms~\cite{lykouris2018competitive,purohit2018improving,pmlr-v97-gollapudi19a,kodialam2019optimal,chawla2019learning,angelopoulos2020online,lattanzi2020online,rohatgi2020near,antoniadis2020online}, 
combinatorial optimization~\cite{khalil2017learning,balcan2018learning,mitzenmacher2020scheduling}, 
similarity search~\cite{wang2016learning,dong2019learning}, 
compressive sensing~\cite{mousavi2015deep,bora2017compressed} and 
streaming algorithms~\cite{hsu2018learningbased,indyk2019learning,jiang2020learningaugmented,cohen2020composable}.
Multiple frameworks for designing and analyzing such algorithms have been proposed \cite{ailon2011self,gupta2017pac,balcan2018dispersion,alabi2019learning}.
The rationale behind this line of research is that machine learning makes it possible to adapt the behavior of the algorithms to inputs from a specific data distribution, making them more efficient or more accurate in specific applications. 



In this paper we focus on learning-augmented streaming algorithms for frequency estimation. The latter problem is formalized as follows: given a sequence $S$ of elements from some universe $U$,   construct a data structure that for any element $i \in U$ computes an estimation $\tilde{f}_i$ of  $f_i$, the number of times $i$ occurs in $S$. Since counting data  elements is a very common subroutine, frequency estimation algorithms have found applications in many areas, such as machine learning, network measurements and computer security. Many of the most popular algorithms for this problem, such as  Count-Min  (CM)~\cite{cormode2005improved}  or Count-Sketch (CS)~\cite{charikar2002finding} are based on hashing.  Specifically, these algorithms hash stream elements into $B$ buckets, count the number of items hashed into each bucket, and use the bucket value as an estimate of item frequency. To improve the accuracy, the algorithms use $k>1$ such hash functions and  aggregate the answers. These  algorithms have several useful properties:  they can handle item  deletions (implemented by decrementing the respective counters), and some of them (Count-Min)  never underestimate the true frequencies, i.e., $\tilde{f}_i \ge f_i$.


In a recent work~\cite{hsu2018learningbased}, the  authors showed that the aforementioned algorithm can be improved by augmenting them with machine learning. Their approach  is as follows.  During  the  training phase, they construct a classifier (neural network) to detect whether an element is ``heavy'' (e.g., whether $f_i$ is among top $k$ frequent items). After such a classifier is trained, they scan  the  input stream, and apply the  classifier to each element $i$. If  the element  is  predicted to be heavy,   it is allocated a unique bucket, so that an  exact value of $f_i$ is computed. Otherwise, the element is forwarded to a ``standard'' hashing data structure $\hDS$, e.g., CM or CS. To  estimate  $\tilde{f}_i$, the algorithm either returns  the exact count  $f_i$ (if $i$ is allocated a unique bucket) or an estimate provided by the data structure $\hDS$.\footnote{See Figure~\ref{fig:alg} for a generic implementation of the learning-based algorithms of~\cite{hsu2018learningbased}.}   
An empirical evaluation, on networking and query log data sets, shows that this approach can reduce  the overall estimation error. 


The  paper also presents a preliminary analysis of the algorithm. Under the common assumption  that the frequencies follow the Zipfian law, i.e.,\footnote{In fact we will assume that $f_i=1/i$. This is just a matter of scaling and is convenient as it removes the dependence of the length of the stream in our bounds} $f_i \propto 1/i$, for $i=1,\dots,n$ for some $n$, and further that item $i$ is queried with probability proportional to its frequency, the expected error incurred by the learning-augmented version of CM is shown  to be asymptotically lower than that of the ``standard'' CM.\footnote{This  assumes that the error rate for the ``heaviness'' predictor is sufficiently low.} However, the exact magnitude of the gap between the error incurred by the  learned and standard CM algorithms was left as an open problem. Specifically, \cite{hsu2018learningbased} only shows that the expected error of  standard CM with $k$ hash functions and a total of $B$ buckets is between 
$\frac{k }{B  \log(k)}$ 
and 
$\frac{k \log^{(k+2)/(k-1)} (kn/B) }{B}$.
Furthermore, no such analysis was presented for CS.


\subsection{Our results}\label{sec:ourresults} In this paper we resolve the aforementioned questions left open in~\cite{hsu2018learningbased}. Assuming that the frequencies follow a Zipfian law, we show:
\begin{itemize}
\item An  asymptotically  tight bound of $\Theta(\frac{ k\log(kn/B)}{B})$ for the expected error incurred by the CM algorithm with $k$ hash functions and a total of $B$ buckets. Together with a prior bound for Learned CM (Table~\ref{tbl:results}), this shows that learning-augmentation improves the error of CM by  a factor of $\Theta( \log(n)/\log(n/B))$ if the heavy hitter oracle is perfect.
\item The first error bounds for CS and Learned CS (see Table~\ref{tbl:results}). In particular, we show that for Learned CS, a single hash function as in~\cite{hsu2018learningbased} leads to an asymptotically optimal error bound, improving over standard CS by a  factor of $\Theta( \log(n)/\log(n/B))$  (same as CM).
\end{itemize}

We highlight that our results are presented assuming that we use a \emph{total} of $B$ buckets. With $k$ hash functions, the range of each hash functions is therefore $[B/k]$. We make this assumption since we wish to compare the expected error incurred by the different sketches when the total sketch size is fixed. 
\begin{table}[h!]
\centering	
\resizebox{.85\textwidth}{!}{%
\renewcommand{\arraystretch}{1.5}
\begin{tabular}{l|l l} 
\toprule
& $k=1$ & $k > 1$\\
\midrule
\textsf{\textbf{Count-Min (CM)}} & $\Theta \left( \frac{\log n}{B} \right)$~\cite{hsu2018learningbased}	&	$\Theta \left(\frac{k \cdot \log({{kn}\over B})}{B} \right)$ \\
\textsf{\textbf{Learned Count-Min (L-CM)}} & $\Theta \left(\frac{\log^2({n\over B})}{B\log n} \right)$~\cite{hsu2018learningbased}	&	$\Omega \left( \frac{\log^2 (\frac{n}{B})}{B\log n} \right)$~\cite{hsu2018learningbased}\\
\textsf{\textbf{Count-Sketch (CS)}} & $\Theta \left( \frac{  \log B}{B} \right)$ & $\Omega \left(\frac{k^{1/2}  }{B\log k } \right)$ and $O \left(\frac{k^{1/2}  }{B} \right)$\\ 
\textsf{\textbf{Learned Count-Sketch (L-CS)}} & $\Theta \left( \frac{\log \frac{n}{B}}{B \log n} \right)$ & $\Omega \left( \frac{\log \frac{n}{B}}{B \log n} \right)$ \\ 
\bottomrule
\end{tabular}
}
\caption{This table summarizes our and previously known results on the expected frequency estimation error of Count-Min (CM), Count-Sketch (CS) and their learned variants (i.e., L-CM and L-CS) that use $k$ functions and overall space $k\times {B\over k}$ under Zipfian distribution.  For CS, we assume that $k$ is odd (so that the median of $k$ values is well defined). 
}\label{tbl:results}
\end{table}

For our results on L-CS in~\Cref{tbl:results} we initially assume that the heavy hitter oracle is \emph{perfect}, i.e., that it makes no mistakes when classifying the heavy items. This is unlikely to be the case in practice, so we complement the results with an analysis of L-CS when the heavy hitter oracle may err with probability at most $\delta$ on each item. As $\delta$ varies in $[0,1]$, we obtain a smooth trade-off between the performance of L-CS and its classic counterpart. Specifically, as long as $\delta=O(1/\log B)$, the bounds are as good as with a perfect heavy hitter oracle.

In addition to clarifying the gap between the learned and standard variants of popular frequency estimation algorithms, our results provide  interesting insights about the algorithms themselves. For example, for both  CM and CS, the number of hash  functions $k$  is often selected to be $\Theta(\log n)$, in order to guarantee that {\em every}  frequency is estimated up  to a certain error bound. In contrast, we show that if instead the goal is to bound the {\em expected} error, then setting $k$ to a constant (strictly greater than $1$) leads to the asymptotic optimal performance.  
We remark that the same phenomenon holds not only for a Zipfian query distribution but in fact for an arbitrary distribution on the queries (see Remark~\ref{remark:gen-query-distribution}). 

Let us make the above comparison with previous known bounds for CM and CS a bit more precise. With frequency vector $\textbf{f}$ and for an element $x$ in the stream, we denote by $\textbf{f}_{-x}^{(B)}$, the vector obtained by setting the entry corresponding to $x$ as well as the $B$ largest entries of $\textbf{f}$ to $0$. The classic technique for analysing CM and CS (see, e.g.,~\cite{charikar2002finding}) shows that using a single hash function and $B$ buckets, with probability $\Omega(1)$, the error when querying the frequency of an element $x$ is $O(\|\textbf{f}_{-x}^{(B)}\|_1/B)$ for CM and $O(\|\textbf{f}_{-x}^{(B)}\|_2/\sqrt{B})$ for CS.  By creating $O(\log(1/\delta))$ sketches and using the median trick, the error probability can then be reduced to $\delta$.  For the Zipfian distribution, these two bounds become $O(\log(n/B)/B)$ and $O(1/B)$ respectively, and to obtain them with high probability for all elements we require a sketch of size $\Omega(B \log n)$. Our results imply that to obtain similar bounds on the expected error, we only require a sketch of size $O(B)$ and a constant number of hash functions. The classic approach described above does not yield tight bounds on the expected errors of CM and CS when $k>1$ and to obtain our bounds we have to introduce new and quite different techniques as to be described in Section~\ref{sec:techniques}.

Our techniques are quite flexible. To illustrate this, we study the performance of the classic Count-Min algorithm with one and more hash functions, as well as its learned counterparts, in the case where the input follows the following more general Zipfian distribution with exponent $\alpha>0$. This distribution is defined by $f_i \propto 1/i^\alpha$ for $i\in [n]$. We present the precise results in~\Cref{tbl:results3} in~\Cref{appendix}.

In Section~\ref{sec:experiment}, we complement our theoretical bounds with empirical evaluation of standard and learned variants of Count-Min and Count-Sketch on a synthetic dataset, thus providing a sense of the constant factors of our asymptotic bounds.

\subsection{Related work}
The frequency estimation problem and the closely related heavy hitters problem are two of the most fundamental problems in the field of streaming algorithms~\cite{cormode2005improved, cormode2005summarizing, charikar2002finding, muthukrishnan2005data, cormode2008finding, cormode2010methods, berinde2010space, minton2014improved, braverman2016beating, larsen2016heavy, anderson2017high, braverman2017bptree, bhattacharyya2018optimal}. In addition to the aforementioned hashing-based algorithms (e.g.,~\cite{cormode2005improved,charikar2002finding}), multiple non-hashing algorithms were also proposed, e.g.,~\cite{misra1982finding,manku2002approximate,metwally2005efficient}. These algorithms often exhibit better accuracy/space tradeoffs, but do not posses many of the properties of hashing-based methods, such as the ability to handle deletions as well as insertions. 

Zipf law is a common modeling tool used to evaluate the performance of frequency estimation algorithms, and has been used in many papers in this area, including~\cite{manku2002approximate,metwally2005efficient,charikar2002finding}. In its general form it postulates that $f_i$ is proportional to $1/i^\alpha$ for some exponent parameter $\alpha>0$. In this paper we focus mostly on the ``original'' Zipf law where $\alpha=1$. We do, however, study Count-Min for more general values of $\alpha$ and the techniques introduced in this paper can be applied to other values of the exponent $\alpha$ for Count-Sketch as well.

\subsection{Our techniques}\label{sec:techniques}
Our main contribution is our analysis of the standard Count-Min and Count-Sketch algorithms for Zipfians with $k>1$ hash functions. Showing the improvement for the learned counterparts is relatively simple (for Count-Min it was already done in~\cite{hsu2018learningbased}). In both of these analyses we consider a fixed item $i$ and bound $\E[|f_i-\tilde f_i|]$ whereupon linearity of expectation leads to the desired results. 
In the following we assume that $f_j=1/j$ for each $j\in [n]$ and describe our techniques for bounding $\E[|f_i-\tilde f_i|]$ for each of the two algorithms. 
\subparagraph*{Count-Min.} 
With a single hash function and $B$ buckets it is easy to see that the head of the Zipfian distribution, namely the items of frequencies $(f_j)_{j\in [B]}$, contribute with $\log B/B$ to the expected error $\E[|f_i-\tilde f_i|]$, whereas the light items contribute with $\log (n/B)/B$. Our main observation is that with more hash functions the expected contribution from the heavy items drops to $1/B$ and so, the main contribution comes from the light items. To bound the expected contribution of the heavy items to the error $|f_i-\tilde f_i|$ we bound the probability that the contribution from these items is at least $t$, then integrate over $t$. The main observation is that if the error is at least $t$ then for each of the hash functions, either there exist $t/s$ items in $[B]$ hashing to the same bucket as $i$ or there is an item $j\neq i$ in $[B]$ of weight at most $s$ hashing to the same bucket as $i$. By a union bound, optimization over $s$, and some calculations, this gives the desired bound. The lower bound follows from simple concentration inequalities on the contribution of the tail. In contrast to the analysis from~\cite{hsu2018learningbased} which is technical and leads to suboptimal bounds, our analysis is short, simple, and yields completely tight bounds in terms of all of the parameters $k,n$ and $B$.


\subparagraph*{Count-Sketch.} Simply put, our main contribution is an improved understanding of the distribution of random variables of the form $S=\sum_{i=1}^n f_i \eta_i \sigma_i$. 
Here the $\eta_i\in\{0,1\}$ are i.i.d Bernouilli random variables and the $\sigma_i\in \{-1,1\}$ are independent Rademachers, that is, $\Pr[\eta_i=1]=\Pr[\eta_i=-1]=1/2$. 
Note that the counters used in CS are random variables having precisely this form. 
Usually such random variables are studied for the purpose of obtaining large deviation results. In contrast, in order to analyze CS, we are interested in a fine-grained picture of the distribution within a ``small'' interval $I$ around zero, say with $\Pr[S\in I]=1/2$. 
For example, when proving a lower bound on $\E[|f_i-\tilde f_i|]$, we must establish a certain \emph{anti-concentration} of $S$ around $0$. 
More precisely we find an interval $J\subset I$ centered at zero such that $\Pr[S\in J]=O(1/\sqrt{k})$. Combined with the fact that we use $k$ independent hash functions as well as properties of the median and the binomial distribution, this gives that $\E[|f_i-\tilde f_i|]=\Omega(|J|)$. 
Anti-concentration inequalities of this type are in general notoriously hard to obtain but it turns out that we can leverage the properties of the Zipfian distribution, specifically its heavy head. 
For our upper bounds on $\E[|f_i-\tilde f_i|]$ we need strong lower bounds on $\Pr[S\in J]$ for intervals $J\subset I$ centered at zero. 
Then using concentration inequalities we can bound the probability that half of the $k$ relevant counters are smaller (larger) than the lower (highter) endpoint of $J$, i.e., that the median does not lie in $J$. Again this requires a precise understanding of the distribution of $S$ within $I$.

\subsection{Structure of the paper}
In~\Cref{sec:prelim} we describe the algorithms Count-Min and Count-Sketch. We also formally define the estimation error  that we will study as well as the Zipfian distribution. In~\Cref{sec:countmin,sec:countsketch} we provide our analyses of the expected error of Count-Min and Count-Sketch. In~\Cref{sec:learnedcountsketch} we analyze the performance of learned Count-Sketch both when the heavy hitter oracle is perfect and when it may misclassify each item with probability at most $\delta$. In~\Cref{sec:experiment} we present our experiments. Finally, in~\Cref{appendix}, we analyse Count-Min for the generalized Zipfian distribution with exponent $\alpha>0$ both in the classic and learned case and prove matching lower bounds for the learned algorithms. 

\section{Preliminaries}\label{sec:prelim}
We start out by describing the sketching algorithms Count-Min and Count-Sketch. Common to both of these algorithms is that we sketch a stream $S$ of elements coming from some universe $U$ of size $n$. For notational convenience we will assume that $U=[n]\mydef\{1,\dots,n\}$. If item $i$ occurs $f_i$ times then either algorithm outputs an estimate $\tilde f_i$ of $f_i$.

\subparagraph*{Count-Min.} We use $k$ independent and uniformly random hash functions $h_1,\dots,h_k:[n] \to [B]$. Letting $C$ be an array of size $[k] \times [B]$ we let $C[\ell,b]=\sum_{j\in [n]}[h_\ell(j)=b]f_j$. When querying $i\in [n]$ the algorithm returns $\tilde f_i=\min_{\ell \in [k]} C[\ell,h_\ell(i)]$. Note that we always have that $\tilde f_i \geq f_i$.

\subparagraph*{Count-Sketch.} We pick independent and uniformly random hash functions $h_1,\dots,h_k:[n] \to [B]$ and $s_1,\dots,s_k:[n] \to  \{-1,1\}$. Again we initialize an array $C$ of size $[k] \times [B]$ but now we let $C[\ell,b]=\sum_{j\in [n]}[h_\ell(j)=b]s_\ell(j)f_j$. When querying $i\in [n]$ the algorithm returns the estimate $\tilde f_i=\median_{\ell \in [k]} s_{\ell}(i) \cdot C[\ell,h_{\ell}(i)]$.
\begin{remark} 
The bounds presented in~\Cref{tbl:results} assumes that the hash functions have codomain $[B/k]$ and not $[B]$, i.e., that the \emph{total} number of buckets is $B$. In the proofs to follows we assume for notational ease that the hash functions take value in $[B]$ and the claimed bounds follows immediately by replacing $B$ by $B/k$. 
\end{remark} 

\subparagraph*{Estimation Error.} To measure and compare the overall accuracy of different frequency estimation algorithms, we will use the {\em expected} estimation error which is defined as follows: let $\sF = \{f_1, \cdots, f_n\}$ and $\tilde{\sF}_{\sA} = \{\tilde{f}_1, \cdots, \tilde{f}_n\}$ respectively denote the actual frequencies and the estimated frequencies obtained from algorithm $\sA$ of items in the input stream. We remark that when $\sA$ is clear from the context we denote $\tilde{\sF}_{\sA}$ as $\tilde{\sF}$. Then we define
\begin{align}\label{eq:expected-error-objective}
  	\Err(\sF, \tilde{\sF}_{\sA}) \mydef  \E_{i\sim \QDist} |f_i - \tilde{f}_i|,
\end{align}
where $\QDist$ denotes the {\em query distribution} of the items. Here, similar to previous work (e.g.,~\cite{roy2016augmented,hsu2018learningbased}), we assume that the query distribution $\QDist$ is the same as the frequency distribution of items in the stream, i.e., for any $i^*\in [n]$, $\Pr_{i\sim \QDist}[i = i^*] \propto f_{i^*}$ (more precisely, for any $i^*\in [n]$, $\Pr_{i\sim \QDist}[i = i^*] = f_{i^*}/ N$ where $N = \sum_{i\in [n]} f_i$ denotes the total sum of all frequencies in the stream). 

\begin{remark}\label{remark:gen-query-distribution}
As all upper/lower bounds in this paper are proved by bounding the expected error when estimating the frequency of a single item, $\E[|\tilde f_i-f_i|]$, then using linearity of expectation, in fact we obtain bounds for \emph{any} query distribution $(p_i)_{i\in [n]}$. 
\end{remark}


\subparagraph*{Zipfian Distribution.} 
In our analysis we assume that the frequency distribution of items follows Zipf's law. That is, if we sort the items according to their frequencies with no loss of generality assuming that $f_{1} \geq f_{2} \geq \cdots \geq f_{n}$, then for any $i\in [n]$, $f_{i} \propto {1/i}$. In fact, we shall assume that $f_{i}= 1/i$, which is just a matter of scaling, and which conveniently removes the dependence on the length of the stream in our bounds. 
Assuming that the query distribution is the same as the distribution of the frequencies of items in the input stream (i.e., $\Pr_{i\sim \QDist}[i^*] = f_{i^*}/ N = 1/(i^*\cdot H_n)$ where $H_n$ denotes the $n$-th harmonic number), we can write the expected error in~\cref{eq:expected-error-objective} as follows:
\begin{align}\label{eq:simplified-error}
\Err(\sF, \tilde{\sF}_{\sA}) = \E_{i \sim \sD}[|f_i - \tilde{f}_i|] = {1\over N} \cdot \sum_{i\in [n]} |\tilde{f}_i - f_i| \cdot f_i = {1\over H_n} \cdot \sum_{i\in [n]} |\tilde{f}_i - f_i| \cdot {1\over i} 
\end{align}
Throughout this paper, we present our results with respect to the objective function at the right hand side of~\cref{eq:simplified-error}, i.e., $(1/H_n)\cdot \sum_{i=1}^n {|\tilde{f}_i - f_i| \cdot f_i}$. However, it is easy to use our results to obtain bounds for any query distribution as stated in~\Cref{remark:gen-query-distribution}.

Later we shall study the generalized Zipfian distribution with exponent $\alpha>0$. Sorting the items according to their frequencies, $f_1,\geq f_2\geq \cdots \geq f_n$, it holds for any $i \in [n]$ that $f_i \propto 1/i^\alpha$. Again we present our result with respect to the objective function $\sum_{i=1}^n {|\tilde{f}_i - f_i| \cdot f_i}$.


\begin{figure}[!h]
\begin{minipage}{\textwidth}
	\begin{algorithm}[H]
	{\small
	\caption{Learning-Based Frequency Estimation}
		\begin{algorithmic}[1]
		\Procedure{LearnedSketch}{$B$, $B_h$, $\hhoracle$, $\sketchalg$}
		\For{each stream element $i$} 
		\If{$\hhoracle(i)=1$} \Comment{predicts whether $i$ is heavy (in top $B_h$- frequent items)}
			\If{a unique bucket is already assigned to item $i$}
				\State $\counter_i \leftarrow \counter_i +1$
			\Else
				\State {\bf allocate} a new unique bucket to item $i$ and $\counter_i \leftarrow 1$ 
			\EndIf
		\Else 
			\State {\bf feed} $i$ to $\sketchalg(B - B_h)$ \Comment{an instance of $\sketchalg$ with $B-B_h$ buckets}
		\EndIf
		\EndFor
		\EndProcedure		
		\end{algorithmic}
		}
	\end{algorithm}
\end{minipage}\hfill
\caption{A generic learning augmented algorithm for the frequency estimation problem. $\hhoracle$ denotes a given learned oracle for detecting whether the item is among the top $B_h$ frequent items of the stream and $\sketchalg$ is a given (sketching) algorithm (e.g., CM or CS) for the frequency estimation problem.} \label{fig:alg}
\end{figure}
\subparagraph*{Learning Augmented Sketching Algorithms for Frequency Estimation.} In this paper, following the approach of~\cite{hsu2018learningbased}, the {\em learned} variants of CM and CS are algorithms augmented with a machine learning based {\em heavy hitters} oracle. 
More precisely, we assume that the algorithm has access to an oracle $\hhoracle$ that predicts whether an item is ``heavy'' (i.e., is one of the $B_h$ most frequent items) or not. 
Then, the algorithm treats heavy and non-heavy items differently: (a) a unique bucket is allocated to each heavy item and their frequencies are computed with no error, (b) the rest of items are fed to the given (sketching) algorithm $\sketchalg$ using the remaining $B - B_h$ buckets and their frequency estimates are computed via $\sketchalg$ (see Figure~\ref{fig:alg}). 
We shall assume that  $B_h=\Theta(B-B_h)=\Theta(B)$, that is, we use asymptotically the same number of buckets for the heavy items as for the sketching of the light items. One justification for this assumption is that in any case we can increase both the number of buckets for heavy and light items to $B$ without affecting the overall asymptotic space usage.

Note that, in general the oracle $\hhoracle$ can make errors.  In our analysis we first obtain a theoretical understanding, by assuming that the oracle is perfect, i.e., the error rate is zero. We later complement this analysis, by studying the incurred error  when the oracle misclassifies each item with probability at most $\delta$. 

\section{Tight Bounds for Count-Min with Zipfians}\label{sec:countmin}
For both Count-Min and Count-Sketch we aim at analyzing the expected value of the variable $\sum_{i\in [n]}f_i\cdot|\tilde f_i- f_i|$ where $f_i=1/i$ and $\tilde{f_i}$ is the estimate of $f_i$ output by the relevant sketching algorithm. 
Throughout this paper we use the following notation: For an event $E$ we denote by $[E]$ the random variable in $\{0,1\}$ which is $1$ if and only if $E$ occurs.
We begin by presenting our improved analysis of Count-Min with Zipfians. The main theorem is the following.
\begin{theorem}\label{thm:simplecm}
Let $n,B,k \in \N$ with $k\geq 2$ and $B\leq n/k$. Let further $h_1,\dots,h_k: [n] \to [B]$ be independent  and truly random hash functions. For $i \in [n]$ define the random variable $\tilde{f_i}=\min_{\ell \in [k]} \left( \sum_{j\in [n]} [h_{\ell}(j)=h_\ell(i)]f_j \right)$. For any $i\in [n]$ it holds that $ \E[|\tilde f_i- f_i|]=\Theta \left( \frac{\log \left( \frac{n}{B} \right)}{B} \right)$.
\end{theorem}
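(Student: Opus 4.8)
The plan is to fix an item $i\in[n]$ and analyze $\E[|\tilde f_i - f_i|]$ by splitting the ``noise'' hashed into $i$'s bucket (under each hash function) into a \emph{head} part, coming from the heavy items $[B]$, and a \emph{tail} part, coming from the light items $\{B+1,\dots,n\}$. Since $\tilde f_i - f_i = \min_{\ell\in[k]}\sum_{j\neq i}[h_\ell(j)=h_\ell(i)]f_j \ge 0$, it suffices to bound this minimum. For the tail the intuition (as sketched in Section~\ref{sec:techniques}) is that its contribution is $\Theta(\log(n/B)/B)$, while the head contributes only $O(1/B)$ once $k\ge 2$; since $B\le n/k$ ensures $\log(n/B)=\Omega(1)$, the tail term dominates and gives the stated bound.

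For the \textbf{upper bound}, I would bound $\E\bigl[\min_\ell(\text{head}_\ell)\bigr]$ and $\E\bigl[\min_\ell(\text{tail}_\ell)\bigr]$ separately and add them (using that the min of a sum is at most the sum of mins over the two parts). The tail contribution is easiest: for a single hash function the expected mass of light items landing in $i$'s bucket is $\frac1B\sum_{j>B}\frac1j=\Theta(\log(n/B)/B)$, and taking the min over $k$ functions only helps, so $\E[\min_\ell \text{tail}_\ell]=O(\log(n/B)/B)$. The head contribution is the crux: I would use the tail-integration identity $\E[\min_\ell \text{head}_\ell]=\int_0^\infty \Pr[\min_\ell \text{head}_\ell \ge t]\,dt = \int_0^\infty \Pr[\text{head}_\ell\ge t]^k\,dt$ (independence across $\ell$). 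Following the paper's own hint, the key event ``$\text{head}_\ell\ge t$'' implies that either at least $t/s$ items from $[B]$ collide with $i$, or some single item $j\in[B]$ of weight $\le s$ (i.e.\ $j\gtrsim 1/s$) collides with $i$; a union bound over these two cases, optimizing the threshold $s=s(t)$, bounds $\Pr[\text{head}_\ell\ge t]$, and raising to the $k$-th power and integrating should yield $\E[\min_\ell \text{head}_\ell]=O(1/B)$.

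For the \textbf{lower bound}, it suffices to exhibit $\Omega(\log(n/B)/B)$ just from the tail, for \emph{every} hash function simultaneously (so that the min is also large). I would fix $\ell$ and show that with constant probability the mass of light items hashed to $h_\ell(i)$ is $\Omega(\log(n/B)/B)$; this is a second-moment/Chernoff-type concentration argument on $\sum_{j>B}[h_\ell(j)=h_\ell(i)]f_j$, whose mean is $\Theta(\log(n/B)/B)$ and whose variance is controlled because each $f_j\le 1/B$ is small relative to the mean. To get the min over all $k$ functions to stay large, I would instead argue that $\E[\min_\ell \text{tail}_\ell]\ge c\log(n/B)/B$ directly---e.g.\ by showing each $\text{tail}_\ell$ concentrates around its mean tightly enough (variance $o(\text{mean}^2)$) that all $k$ copies are simultaneously $\Omega(\log(n/B)/B)$ with constant probability, or simply by noting the contribution of a single ``mid-tail'' item hashed into $i$'s bucket under all $k$ functions.

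The main obstacle I anticipate is the head upper bound: getting the case analysis and the optimization over $s$ to produce a clean $O(1/B)$ after raising to the $k$-th power and integrating over $t$, while correctly handling the boundary regimes of $t$ (small $t$ where the bound is trivial versus large $t$ where collisions are rare). The tail lower bound's subtlety---ensuring the min over $k$ functions does not erode the $\log(n/B)/B$ scale---is a secondary concern, but the small-$f_j$ variance bound should make the concentration tight enough that it goes through.
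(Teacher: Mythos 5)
Your overall architecture (the head/tail split at index $B$, the threshold-$s$ union bound plus tail integration for the head, and concentration for the lower bound) matches the paper's proof, but two steps are genuinely broken as stated. First, the inequality you invoke to combine the two parts of the upper bound is backwards: for nonnegative $a_\ell, b_\ell$ one has $\min_\ell(a_\ell+b_\ell) \geq \min_\ell a_\ell + \min_\ell b_\ell$, not $\leq$ (take $a_1=b_2=0$, $a_2=b_1=1$), so bounding $\E[\min_\ell \mathrm{head}_\ell]$ and $\E[\min_\ell \mathrm{tail}_\ell]$ separately and adding them does not bound $\E[|\tilde f_i - f_i|]$. The paper repairs exactly this point: let $b$ be the (random) index minimizing the head contribution and bound the min of the full sums by the full sum at index $b$, i.e.\ $|\tilde f_i-f_i| \leq Y_1+Y_2$ with $Y_1=\min_\ell \mathrm{head}_\ell$ and $Y_2=\mathrm{tail}_b$; since $b$ is a function of the hash values of the head items only, which are independent of the hash values of the tail items, $\E[Y_2\mid b]=\frac{1}{B}\sum_{j\in[n]\setminus([B]\cup\{i\})} f_j = O(\log(n/B)/B)$ even though $b$ is random. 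Second, your dichotomy for the head event has the wrong sign: if $\mathrm{head}_\ell \geq t$, then either at least $t/s$ items of $[B]$ collide with $i$, or some item of weight \emph{greater} than $s$ (i.e.\ index $< 1/s$) collides with $i$. The event you wrote --- a collision with a single item of weight \emph{at most} $s$ --- occurs with probability close to $1$ and yields a vacuous union bound; with the correct event the probability is at most $\frac{1}{sB}$ (in the $n=B$ instance of the paper's Lemma 3.2), and choosing $s=\Theta(t/\log(tB))$ gives $\Pr[\mathrm{head}_\ell \geq t]=O(\log(tB)/(tB))$, whose second (or $k$-th) power integrates to $O(1/B)$ as you intend. (The paper's informal Section 1.3 contains the same reversed phrasing, but its Lemma 3.2 states and uses the correct event.)

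For the lower bound, the pure second-moment variant you propose (``variance $o(\mathrm{mean}^2)$'') is not strong enough: Chebyshev gives failure probability only $O(1/\log^2(n/B)) = O(1/\log^2 k)$ per hash function, and $(1-c/\log^2 k)^k \to 0$ as $k$ grows, so you cannot conclude that all $k$ tail sums are simultaneously large with constant probability. You need the ``Chernoff-type'' branch of your plan made quantitative: since $f_j \leq 1/B$ for $j > B$ and the variance is $O(1/B^2)$, Bennett's (or Bernstein's) inequality gives failure probability $\exp\left(-\Omega\left(\log(n/B)\log\log(n/B)\right)\right) \leq k^{-\Omega(\log\log k)}$ per function, using $B \leq n/k$; this is small enough that the intersection of the $k$ independent events still has probability $\Omega(1)$, which is exactly the paper's argument. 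Your fallback suggestion --- ``the contribution of a single mid-tail item hashed into $i$'s bucket under all $k$ functions'' --- cannot work, since that event has probability $B^{-k}$ and contributes only $f_j B^{-k}$ to the expectation. With these two repairs (the argmin-of-head trick and Bennett in place of Chebyshev), your outline becomes the paper's proof.
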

Replacing $B$ by $B/k$ in~\Cref{thm:simplecm} and using linearity of expectation we obtain the desired bound for Count-Min in the upper right hand side of~\Cref{tbl:results}. The natural assumption that $B\leq n/k$ simply says that the total number of buckets is upper bounded by the number of items. 

To prove~\Cref{thm:simplecm} we start with the following lemma which is a special case of the theorem.
\begin{lemma}\label{simpleanalysis}
Suppose that we are in the setting of~\Cref{thm:simplecm} and further that\footnote{In particular we dispose with the assumption that $B \leq n/k$.} $n=B$. Then 
\begin{align*}
\E[|\tilde{f_i}-f_i|]=O \left( \frac{1}{n} \right).
\end{align*}
\end{lemma}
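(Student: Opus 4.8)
The plan is to reduce the statement to a clean decoupling inequality across two of the $k$ hash functions. First I would observe that since every counter $C[\ell,h_\ell(i)]$ contains the contribution $f_i$ of $i$ itself, we have $\tilde f_i - f_i = \min_{\ell\in[k]} X_\ell$, where $X_\ell \mydef \sum_{j\ne i}[h_\ell(j)=h_\ell(i)]f_j \ge 0$ is the ``collision mass'' that the other items dump into $i$'s bucket under $h_\ell$. In particular $|\tilde f_i - f_i| = \min_{\ell\in[k]} X_\ell$, the $X_\ell$ are nonnegative, and --- crucially --- they are independent and identically distributed because $h_1,\dots,h_k$ are independent. Since adding more hash functions only decreases the minimum of nonnegative quantities, we have $\min_{\ell\in[k]}X_\ell \le \min(X_1,X_2)$, so it suffices to bound $\E[\min(X_1,X_2)]$; this is where the assumption $k\ge 2$ enters.

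The key step is to decouple the two hash functions via the geometric mean. For nonnegative reals $\min(a,b)\le\sqrt{ab}$, so $\E[\min(X_1,X_2)]\le \E[\sqrt{X_1X_2}]=\E[\sqrt{X_1}]\,\E[\sqrt{X_2}]=(\E[\sqrt{X_1}])^2$, using independence together with $X_1\stackrel{d}{=}X_2$. It then remains to bound a single factor $\E[\sqrt{X_1}]$, and here I would exploit subadditivity of the square root. Writing $\eta_j\mydef[h_1(j)=h_1(i)]\in\{0,1\}$ (so that $\sqrt{\eta_j}=\eta_j$ and $\Pr[\eta_j=1]=1/B=1/n$), we get $\sqrt{X_1}=\sqrt{\sum_{j\ne i}\eta_j/j}\le\sum_{j\ne i}\sqrt{\eta_j/j}=\sum_{j\ne i}\eta_j/\sqrt{j}$. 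Taking expectations and using $n=B$,
\[
\E[\sqrt{X_1}]\le \frac1n\sum_{j=1}^n \frac{1}{\sqrt j}\le\frac{2\sqrt n}{n}=\frac{2}{\sqrt n},
\]
since $\sum_{j=1}^n j^{-1/2}\le 2\sqrt n$. Squaring yields $\E[\min(X_1,X_2)]\le 4/n$, and hence $\E[|\tilde f_i-f_i|]=O(1/n)$.

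I expect the only real obstacle to be conceptual rather than computational: recognizing that one should \emph{not} bound each $X_\ell$ separately. A naive Markov estimate on a single $X_\ell$ only sees $\E[X_\ell]=\Theta(\log n / n)$ (the harmonic sum), which is off by a $\log n$ factor and cannot detect the gain from using several hash functions; the whole point is that the minimum over independent copies is far smaller than any single copy, and the geometric-mean/subadditivity combination is precisely what converts that independence into the $O(1/n)$ bound. It is worth noting that this clean trick is special to the regime $n=B$: the same computation for general $n$ would give $O(n/B^2)$, which overshoots the target $\Theta(\log(n/B)/B)$ of~\Cref{thm:simplecm}, so the full theorem will instead require the finer tail-integration argument sketched in~\Cref{sec:techniques} that bounds $\Pr[X_1\ge t]$ by splitting the colliding items into heavy and light ones and integrating over $t$.
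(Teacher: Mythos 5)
Your proof is correct, and it takes a genuinely different route from the paper's. You pass from $\min(X_1,X_2)$ to the geometric mean $\sqrt{X_1X_2}$, decouple via the independence of $h_1$ and $h_2$, and then handle each factor $\E[\sqrt{X_1}]$ by subadditivity of the square root, reducing everything to $\sum_{j=1}^n j^{-1/2}=O(\sqrt n)$; every step checks out (in particular $\sqrt{\eta_j}=\eta_j$ and $\E[\eta_j]=1/B$ are exactly what is needed), and the resulting bound $4/n$ is tight, since each $X_\ell$ exceeds $1/n$ with constant probability independently across $\ell$, so $\E[\min(X_1,X_2)]=\Omega(1/n)$. The paper instead bounds the tail of a single $Z_\ell$: if $Z_\ell\ge t$, then either some item of weight greater than $s$ collides with $i$, or at least $t/s$ items do; a union bound with $s=\Theta(t/\log(tn))$ gives $\Pr[Z_\ell\ge t]=O\left(\frac{\log(tn)}{tn}\right)$, which is then squared by independence and integrated over $t$. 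Your argument is shorter and more elementary, and it requires only $2$-independent hash functions (plus mutual independence of $h_1$ and $h_2$), strictly weaker than the $O(\log B)$-independence the paper notes suffices in Remark~\ref{remark:lowindependence}. What the paper's route buys is the intermediate tail bound itself: $\Pr[Z_\ell\ge t]=O\left(\frac{\log(tB)}{tB}\right)$ is reused verbatim in the Count-Sketch analysis (in the proof of \Cref{unlearnedmultiCS}, the bound on $\Pr[|X_1^{(\ell)}|\ge t]$ is obtained by citing the proof of \Cref{simpleanalysis}), and an expectation-only argument like yours does not supply it. Your closing caveat is also accurate: the geometric-mean trick yields $O(n/B^2)$ for general $B$, which is useful only when $n=O(B)$ --- but that is precisely how \Cref{thm:simplecm} invokes the lemma, with the head $[B]$ of the distribution playing the role of $[n]$.
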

\begin{proof}
It suffices to show the result when $k=2$ since adding more hash functions and corresponding tables only decreases the value of $|\tilde f_i- f_i|$. Define $Z_\ell=\sum_{j\in [n]\setminus\{i\}} [h_\ell(j)=h_\ell(i)]f_j$ for $\ell \in [2]$ and note that these variables are independent. For a given $t\geq 3/n$ we wish to upper bound $\Pr[Z_{\ell}\geq t]$. 
Let $s<t$ be such that $t/s$ is an integer, and note that if $Z_{\ell}\geq t$ then either of the following two events must hold:
\begin{enumerate}
\item[$E_1$:] There exists a $j\in [n]\setminus \{i\}$ with $f_j>s$ and $h_{\ell}(j)=h_{\ell}(i)$.
\item[$E_2$:] The set $\{j\in [n]\setminus \{i\}:h_{\ell}(j)=h_{\ell}(i)\}$ contains at least $t/s$ elements.
\end{enumerate}
To see this, suppose that $Z_\ell\geq t$ and that $E_1$ does not hold. Then 
$$
t\leq Z_\ell =\sum_{j\in [n]\setminus\{i\}} [h_\ell(j)=h_\ell(i)]f_j\leq s |\{j\in [n]\setminus \{i\}:h_{\ell}(j)=h_{\ell}(i)\}|,
$$
so it follows that $E_2$ holds.
By a union bound,
\begin{align*}
\Pr[Z_{\ell}\geq t]\leq \Pr[E_1]+\Pr[E_2]\leq  \frac{1}{ns}+\binom{n}{t/s}n^{-t/s}\leq  \frac{1}{ns}+\left( \frac{es}{t}\right)^{t/s}.
\end{align*}
Choosing $s=\Theta(\frac{t}{\log (tn)})$ such that $t/s$ is an integer, and using $t\geq {3\over n}$, a simple calculation yields that $\Pr[Z_{\ell}\geq t]=O\left( \frac{\log (tn)}{tn} \right)$. Note that  $|\tilde{f_i}-f_i|=\min (Z_1,Z_2)$. As $Z_1$ and $Z_2$ are independent, $\Pr[|\tilde{f_i}-f_i|\geq t]=O\left( \left(\frac{\log (tn)}{tn}\right)^2 \right)$, so 
\begin{align*}
\E[|\tilde{f_i}-f_i|]=\int_{0}^\infty \Pr[Z \geq t ] \, dt\leq \frac{3}{n}+O \left(\int_{3/n}^\infty \left(\frac{\log (tn)}{tn}\right)^2 \, dt \right)=O\left( \frac{1}{n} \right).
\end{align*}
\end{proof}
We can now prove the full statement of~\Cref{thm:simplecm}.
\begin{proof}[Proof of~\Cref{thm:simplecm}]
We start out by proving the upper bound. Let $N_1=[B]\setminus \{i\}$ and $N_2=[n]\setminus ([B] \cup \{i\})$. 
Let $b\in [k]$ be such that $\sum_{j\in N_1} f_j \cdot [h_b(j)=h_b(i)]$ is minimal. Note that $b$ is itself a random variable. We also define
\begin{align*}
Y_1&=\sum_{j\in N_1} f_j\cdot  [h_b(j)=h_b(i)], \text{ and } Y_2=\sum_{j\in N_2} f_j\cdot  [h_b(j)=h_b(i)].
\end{align*}
Then, $|\tilde f_i- f_i|\leq Y_1 +Y_2$. Using~\Cref{simpleanalysis}, we obtain that $\E[Y_1]=O(\frac{1}{B})$. For $Y_2$ we observe that
\begin{align*}
\E[Y_2\mid b]=\sum_{j\in N_2}  \frac{f_j}{B}= O\left( \frac{ \log \left( \frac{n}{B}\right)}{B}\right).
\end{align*}
We conclude that 
\begin{align*}
\E[|\tilde f_i- f_i|]\leq \E[Y_1]+\E[Y_2]=\E[Y_1]+\E[\E[Y_2 \mid b]] =O\left( \frac{ \log \left( \frac{n}{B}\right)}{B}\right).
\end{align*}
Next we prove the lower bound. We have already seen that the main contribution to the error comes from the tail of the distribution. As the tail of the distribution is relatively ``flat'' we can simply apply a concentration inequality to argue that with probability $\Omega(1)$, we have this asymptotic contribution for each of the $k$ hash functions. To be precise, for $j\in [n]$ and $\ell \in [k]$ we define $X_\ell^{(j)}=f_j \cdot \left([h_\ell(j)=h_\ell(i)]-\frac{1}{B} \right)$. Note that the variables $(X_\ell^{(j)})_{j\in [n]}$ are independent. We also define $S_\ell=\sum_{j\in N_2} X_\ell^{(j)}$ for $\ell\in [k]$. Observe that $|X_\ell^{(j)}|\leq f_j \leq \frac{1}{B}$ for $j \geq B$, $\E[X_\ell^{(j)}]=0$, and that
\begin{align*}
\Var[S_\ell]=\sum_{j\in N_2} f_j^2 \left(\frac{1}{B}-\frac{1}{B^2} \right)\leq  \frac{1}{B^2}.
\end{align*}
Applying Bennett's inequality(\Cref{thm:Bennett} of~\Cref{appendix2}), with $\sigma^2= \frac{1}{B^2}$ and $M=1/B$ thus gives that 
\begin{align*}
\Pr[S_\ell\leq -t]\leq \exp\left(-h\left(tB \right)\right).
\end{align*}
Defining $W_\ell=\sum_{j\in N_2} f_j \cdot [h_\ell(j)=h_\ell(i)]$ it holds that $\E[W_\ell]=\Theta \left( \frac{ \log \left( \frac{n}{B}\right)}{B}\right)$ and $S_\ell =W_\ell-\E[W_\ell]$, so putting $t=\E[W_\ell]/2$ in the inequality above we obtain that 
\begin{align*}
\Pr[W_\ell\leq \E[W_\ell]/2]=\Pr[S_\ell\leq -\E[W_\ell]/2]\leq \exp\left(-h\left(\Omega\left( \log \frac{n}{B} \right) \right) \right).
\end{align*}
Appealing to~\Cref{asymptotics} and using that $B\leq n/k$ the above bound becomes
\begin{align}\label{bound}
\Pr[W_\ell\leq \E[W_\ell]/2]
&\leq \exp \left(- \Omega \left( \log \frac{n}{B} \cdot \log \left(\log \frac{n}{B}+1 \right) \right) \right) \nonumber \\
&= \exp(-\Omega (\log k \cdot \log ( \log k +1) ))=k^{-\Omega (\log (\log k+1))}.
\end{align}
By the independence of the events $(W_\ell> E[W_\ell]/2)_{\ell\in [k]}$, we have that
\begin{align*}
\Pr\left[|\tilde f_i- f_i|\geq \frac{\E[W_\ell]}{2} \right]\geq (1-k^{-\Omega (\log (\log k+1))})^k=\Omega(1),
\end{align*}
and so $\E[|\tilde f_i- f_i|]=\Omega(\E[W_\ell])=\Omega\left(\frac{\log\left( \frac{n}{B}\right)}{B}\right)$, as desired.
\end{proof}
\begin{remark}\label{remark:lowindependence} We have stated~\Cref{thm:simplecm} for truly random hash functions but it suffices with $O(\log B)$-independent hashing to prove the upper bound. Indeed, the only step in which we require high independence is in the union bound in~\Cref{simpleanalysis} over the $\binom{n}{t/s}$ subsets of $[n]$ of size $t/s$. To optimize the bound we had to choose $s=t/\log (tn)$, so that $t/s=\log(tn)$. As we only need to consider values of $t$ with $t\leq \sum_{i=1}^n f_i=O(\log n)$, in fact $t/s=O(\log n)$ in our estimates. Finally, we applied~\Cref{simpleanalysis} with $n=B$ so it follows that $O(\log B)$-independence is enough to obtain our upper bound.
\end{remark}

\section{(Nearly) Tight Bounds for Count-Sketch with Zipfians}\label{sec:countsketch}
In this section we proceed to analyze Count-Sketch for Zipfians either using a single or more hash functions. We start with two simple lemmas which for certain frequencies $(f_i)_{i \in [n]}$ of the items in the stream can be used to obtain respectively good upper and lower bounds on $\E[|\tilde f_i-f_i|]$ in Count-Sketch with a single hash function. We will use these two lemmas both in our analysis of standard and learned Count-Sketch for Zipfians.

\begin{lemma}\label{explemma}
Let $w=(w_1,\dots,w_n)\in \R^n$, $\eta_1,\dots,\eta_n$ Bernoulli variables taking value $1$ with probability $p$, and $\sigma_1,\dots ,\sigma_n\in \{-1,1\}$ independent Rademachers, i.e., $\Pr[\sigma_i=1]=\Pr[\sigma_i=-1]=1/2$. Let $S=\sum_{i=1}^n w_i \eta_i \sigma_i$. Then, $\E[|S|]= O \left(\sqrt{p} \|w\|_2 \right)$.
\end{lemma}
\begin{proof}
Using that $\E[\sigma_i\sigma_j]=0$ for $i\neq j$ and Jensen's inequality $\E[|S|]^{2}\leq \E[S^2]=\E \left[\sum_{i=1}^n w_i^2\eta_i \right]=p \|w \|_2^2$, from which the result follows.
\end{proof}

\begin{lemma}\label{lowerexplemma}
Suppose that we are in the setting of~\Cref{explemma}. Let $I\subset [n]$ and let $w_I\in \R^n$ be defined by $(w_I)_i=[i\in I]\cdot w_i$. Then
\begin{align*}
\E[|S|]\geq \frac{1}{2}p\left(1-p \right)^{|I|-1} \|w_I\|_1.
\end{align*}
\end{lemma}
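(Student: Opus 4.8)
The plan is to exploit the anti-concentration that arises when exactly one coordinate indexed by $I$ is ``active''. The guiding intuition is that if precisely one $\eta_i$ with $i\in I$ equals $1$, then the contribution of $I$ to $S$ is a single symmetric term $w_i\sigma_i$, and a symmetric term of magnitude $|w_i|$ cannot be fully cancelled in expectation by the independent remainder of the sum. Summing this effect over the (disjoint) choices of which coordinate is active will reproduce $\|w_I\|_1$ up to the Bernoulli factors.

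Concretely, first, for each $i\in I$ I would define the event $A_i$ that $\eta_i=1$ while $\eta_j=0$ for all $j\in I\setminus\{i\}$. These events are pairwise disjoint, and since the Bernoullis outside $I$ are unconstrained, $\Pr[A_i]=p(1-p)^{|I|-1}$. Second, on $A_i$ the sum splits as $S=w_i\sigma_i+T_i$, where $T_i=\sum_{j\notin I} w_j\eta_j\sigma_j$ collects the coordinates outside $I$; crucially $T_i$ is independent of both $\sigma_i$ and of the event $A_i$, because $A_i$ constrains only the $\eta_j$ for $j\in I$. Third, conditioning further on $T_i$ and averaging over the symmetric sign $\sigma_i$ gives $\E[\,|S|\mid A_i,T_i\,]=\tfrac12|w_i+T_i|+\tfrac12|w_i-T_i|\geq |w_i|$, where the inequality is just $|w_i+T_i|+|w_i-T_i|\geq 2|w_i|$ (triangle inequality), valid for every value of $T_i$. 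Hence $\E[\,|S|\mid A_i\,]\geq |w_i|$.

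Finally, using disjointness of the $A_i$ together with $\E[\,|S|\,]\geq \sum_{i\in I}\E\bigl[|S|\,[A_i]\bigr]=\sum_{i\in I}\Pr[A_i]\,\E[\,|S|\mid A_i\,]$ yields $\E[\,|S|\,]\geq p(1-p)^{|I|-1}\sum_{i\in I}|w_i|=p(1-p)^{|I|-1}\|w_I\|_1$, which is in fact stronger than the stated bound (the factor $\tfrac12$ provides slack, so any slightly lossier handling of the conditioning still suffices). This lemma is not technically hard; the only real step that requires an idea is choosing the right conditioning event $A_i$ so that the $I$-contribution collapses to a single symmetric Rademacher term, after which the symmetrization inequality does all the work. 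The main pitfall to guard against is the independence bookkeeping---making sure that $T_i$, the sign $\sigma_i$, and the defining event $A_i$ genuinely involve disjoint sets of underlying variables so that the conditional average over $\sigma_i$ is legitimate.
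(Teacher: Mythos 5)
Your proof is correct, and it shares the paper's central idea: the disjoint events $A_i$ on which exactly one coordinate of $I$ is active, so that the $I$-contribution to $S$ collapses to the single symmetric term $w_i\sigma_i$, with $\Pr[A_i]=p(1-p)^{|I|-1}$. Where you diverge is in how the remainder of the sum is handled. The paper introduces a global event $E$ that $S_1=\sum_{i\in I}w_i\eta_i\sigma_i$ and $S_2=\sum_{j\notin I}w_j\eta_j\sigma_j$ have the same sign (or $S_2=0$), argues $\Pr[E]\geq 1/2$ by symmetry and that $E$ is independent of each $A_i$, and concludes $|S|\geq|w_i|$ on $A_i\cap E$ --- this is where the factor $\tfrac12$ in the statement comes from. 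You instead condition on $T_i$ and average over the still-uniform sign $\sigma_i$, using
\begin{align*}
\tfrac12|w_i+T_i|+\tfrac12|w_i-T_i|\geq |w_i|,
\end{align*}
which is valid pointwise in $T_i$; your independence bookkeeping ($A_i$ determined by $(\eta_j)_{j\in I}$, $T_i$ by the variables outside $I$, $\sigma_i$ separate from both) is exactly right. The payoff is that your route avoids the lossy sign-agreement event entirely and yields $\E[|S|]\geq p(1-p)^{|I|-1}\|w_I\|_1$, a factor of $2$ stronger than the stated bound; since the lemma is only ever used for $\Omega(\cdot)$ lower bounds, both versions serve equally well in the paper, but yours is the cleaner argument.
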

\begin{proof}
Let $J=[n]\setminus I$, $S_1=\sum_{i\in I} w_i \eta_i \sigma_i$, and $S_2=\sum_{i\in J} w_i \eta_i \sigma_i$. Let $E$ denote the event that $S_1$ and $S_2$ have the same sign or $S_2=0$. Then $\Pr[E]\geq 1/2$ by symmetry. For $i\in I$ we denote by $A_i$ the event that $\{j\in I: \eta_j\neq 0\}=\{i\}$. Then $\Pr[A_i]=p(1-p)^{|I|-1}$ and furthermore $A_i$ and $E$ are independent.  If $A_i\cap E$ occurs, then $|S|\geq |w_i|$ and as the events $(A_i \cap E)_{i\in I}$ are disjoint it thus follows that $\E[|S|]\geq \sum_{i\in I} \Pr[A_i\cap E] \cdot|w_i|\geq  \frac{1}{2}p\left(1-p \right)^{|I|-1} \|w_I\|_1$.
\end{proof}
With these tools in hand, we proceed to analyse Count-Sketch for Zipfians with one and more hash functions in the next two sections.
\subsection{One hash function}
By the same argument as in the discussion succeeding~\Cref{thm:simplecm}, the following theorem yields the desired result for a single hash function as presented in~\Cref{tbl:results}.
\begin{theorem}\label{unlearnedcs1}
Suppose that $B\leq n$ and let $h:[n] \to [B]$ and $s:[n] \to \{-1,1\}$ be truly random hash functions. Define the random variable $\tilde{f_i}=\sum_{j\in [n]} [h(j)=h(i)]s(j)f_j$ for $i \in [n]$. 
Then 
\begin{align*}
\E[|\tilde f_i-s(i)f_i|]=\Theta \left( \frac{ \log B}{B} \right).
\end{align*}
\end{theorem}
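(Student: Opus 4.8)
The plan is to observe that the estimation error is exactly a random variable of the form treated in \Cref{explemma} and \Cref{lowerexplemma}. Writing $\eta_j=[h(j)=h(i)]$ (Bernoulli with $p=1/B$) and $\sigma_j=s(j)$ (Rademacher), we have $\tilde f_i-s(i)f_i=\sum_{j\neq i}f_j\eta_j\sigma_j=:S$ with weights $w_j=f_j=1/j$. Everything then reduces to understanding $\E[|S|]$ for this specific weight vector, and the two lemmas supply the upper and lower ends respectively.

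For the upper bound, the first thing to note is that applying \Cref{explemma} directly is too lossy: since $\|w\|_2^2=\sum_{j\neq i}1/j^2=\Theta(1)$, it only yields $\E[|S|]=O(\sqrt{p}\,\|w\|_2)=O(1/\sqrt{B})$, which is far from the target $\log B/B$. The fix is to split $S=S_{\mathrm{head}}+S_{\mathrm{tail}}$ into a head part over $\{j\le B,\,j\neq i\}$ and a tail part over $\{j>B\}$, and to bound the two differently. For the tail, \Cref{explemma} is exactly the right tool: $\|w_{\mathrm{tail}}\|_2^2=\sum_{j>B}1/j^2=O(1/B)$, so the tail contributes $O(\sqrt{p}\cdot\sqrt{1/B})=O(1/B)$. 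For the head, the key idea is to avoid the $\ell_2$ bound and instead condition on the (random) set $T$ of head indices colliding with $i$; conditioned on $T$, Jensen over the signs gives $\E[|S_{\mathrm{head}}|\mid T]\le (\sum_{j\in T}1/j^2)^{1/2}\le \sum_{j\in T}1/j$, and taking expectation over $T$ turns this into $\sum_{j\le B}(1/B)(1/j)=H_B/B=O(\log B/B)$. Combining the two parts gives $\E[|S|]=O(\log B/B)$.

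For the lower bound I would apply \Cref{lowerexplemma} with $I=\{j\le B:\,j\neq i\}$ and $p=1/B$. Since $|I|\le B$ we have $(1-1/B)^{|I|-1}=\Omega(1)$, and $\|w_I\|_1=\sum_{j\le B,\,j\neq i}1/j=\Theta(\log B)$, so the lemma gives $\E[|S|]\ge \tfrac12\,p\,(1-1/B)^{|I|-1}\|w_I\|_1=\Omega((1/B)\cdot\log B)$, matching the upper bound. Intuitively, with constant probability exactly one head element lands in $i$'s bucket, and the sign-alignment event from the proof of \Cref{lowerexplemma} guarantees the tail does not cancel it, so $|S|$ is at least that element's weight; summing these weights over which single head element collides reproduces the harmonic sum.

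The main obstacle is the head estimate in the upper bound: the naive second-moment method overshoots to $1/\sqrt{B}$, and the resolution is to exploit the heavy head of the Zipfian distribution together with the sparsity of collisions (only $\Theta(1)$ head items collide in expectation) by conditioning on the collision set and bounding the conditional $\ell_2$ norm by the $\ell_1$ norm. It is precisely the replacement of $\ell_2$ by $\ell_1$, harmless here because the collision set is typically tiny, that produces the $\log B$ factor via $H_B$. The tail bound and the lower bound are comparatively routine once \Cref{explemma} and \Cref{lowerexplemma} are in hand.
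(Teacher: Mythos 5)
Your proposal is correct and follows essentially the same route as the paper's proof: the same head/tail split at index $B$, \Cref{explemma} for the tail contribution of $O(1/B)$, and \Cref{lowerexplemma} with $I=[B]\setminus\{i\}$ for the matching lower bound. Your head estimate via conditioning on the collision set and using Jensen plus $\ell_2\leq\ell_1$ is a slightly roundabout version of the paper's one-line triangle-inequality bound $\E[X_1]\leq \frac{1}{B}\sum_{j\in N_1}f_j=O(\log B/B)$, but it reduces to the identical computation.
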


\begin{proof}
Let $i\in [n]$ be fixed. 
We start by defining $N_1=[B] \setminus\{i\}$ and $N_2=[n] \setminus ([B]\cup \{i\})$ and note that
\begin{align*}
|\tilde {f_i}-s(i)f_i|
 \leq \left|\sum_{j\in N_1}[h(j)=h(i)] s(j)f_j\right|+\left|\sum_{j\in N_2}[h(j)=h(i)] s(j)f_j\right|:=X_1+X_2.
\end{align*}
Using the triangle inequality $\E[X_1]\leq \frac{1}{B}\sum_{j \in N_1} f_j = O(\frac{\log B}{ B})$.
Also, by~\Cref{explemma}, $\E[X_2]=O\left( \frac{1}{B}\right)$ and combining the two bounds we obtain the desired upper bound.
For the lower bound we apply~\Cref{lowerexplemma} with $I=N_1$ concluding that
\begin{align*}
\E[|\tilde {f_i}-s(i)f_i|]\geq \frac{1}{2B} \left(1-\frac{1}{B} \right)^{|N_1|-1}\sum_{i\in N_1} f_i=\Omega \left(\frac{\log B}{B} \right).
\end{align*}

\end{proof}
\subsection{Multiple hash functions}
Let $k\in \N$ be odd. For a tuple $x=(x_1,\dots,x_k)\in \R^k$ we denote by $\median x$ the median of the entries of $x$. The following theorem immediately leads to the result on CS with $k\geq 3$ hash functions claimed in~\Cref{tbl:results}.

\begin{theorem}\label{unlearnedmultiCS}
Let $k\geq 3$ be odd, $n\geq kB$, and $h_1,\dots,h_k:[n] \to [B]$ and $s_1,\dots,s_k:[n] \to \{-1,1\}$ be truly random hash functions. Define~$\tilde{f_i}=\median_{\ell\in[k]}\left(\sum_{j\in [n]} [h_\ell(j)=h_\ell(i)]s_\ell(j)f_j\right)$ for $i \in [n]$. 
Assume that\footnote{This very mild assumption can probably be removed at the cost of a more technical proof. In our proof it can even be replaced by $k\leq B^{2-\eps}$ for any $\eps=\Omega(1)$.} $k\leq B$. 
Then 
\begin{align*}
\E[|\tilde {f_i}-s(i)f_i|]=\Omega\left(\frac{1}{B \sqrt k \log k} \right), \quad \text{and} \quad \E[|\tilde {f_i}-s(i)f_i|]=O\left(\frac{1}{B\sqrt{k}}\right)
\end{align*}
\end{theorem}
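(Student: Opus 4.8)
The plan is to reduce everything to a single symmetric random variable and then run a median-of-$k$ argument. Fix $i$ and for each $\ell\in[k]$ set $Z_\ell=\sum_{j\neq i}[h_\ell(j)=h_\ell(i)]\,s_\ell(i)s_\ell(j)f_j$, the signed error of the $\ell$-th counter. Since $s_\ell(i)s_\ell(j)$ is an independent Rademacher for $j\neq i$, each $Z_\ell$ is distributed as $\sum_{j\neq i}\eta_j\sigma_j f_j$ with $\eta_j$ Bernoulli$(1/B)$ and $\sigma_j$ independent Rademachers; the $Z_\ell$ are i.i.d.\ and symmetric about $0$, and $\tilde f_i-s(i)f_i=\median_{\ell}Z_\ell$ since the median commutes with translation. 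Both bounds then follow from sharp estimates of the interval probabilities $\Pr[Z_\ell\in(-a,a)]$ together with binomial tail bounds: writing $q(t)=\Pr[Z_\ell\ge t]$, the event $\{\median_{\ell}Z_\ell\ge t\}$ is exactly $\{\mathrm{Bin}(k,q(t))\ge (k+1)/2\}$, and by symmetry $\E[|\median_{\ell}Z_\ell|]=2\int_0^\infty \Pr[\mathrm{Bin}(k,q(t))\ge(k+1)/2]\,dt$ while $\E[|\median_{\ell}Z_\ell|]\ge 2a\,\Pr[\mathrm{Bin}(k,q(a))\ge(k+1)/2]$.

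The technical heart is a two-sided control of the law of $Z_\ell$ near the origin, which I would obtain through its characteristic function
\[
\phi(\theta)=\E[e^{i\theta Z_\ell}]=\prod_{j\neq i}\Bigl(1-\tfrac1B\bigl(1-\cos(\theta f_j)\bigr)\Bigr).
\]
Two features matter. First, every factor lies in $[1-2/B,1]$, so $\phi(\theta)>0$ for all $\theta$. Second, using $f_j=1/j$ one computes $\sum_j(1-\cos(\theta/j))=\Theta(\theta^2)$ for $|\theta|\le1$ and $\sum_j(1-\cos(\theta/j))=\Theta(\min(\theta,n))$ for $1\le\theta\le n$; the crucial lower bound $\Omega(\theta)$ in the latter regime comes precisely from the $\Theta(\theta)$ heavy-head items with $j\le\theta/2$ (for which $\theta/j\gtrsim1$), the concrete manifestation of ``leveraging the heavy head.'' Consequently $\phi(\theta)\ge\exp(-O(\theta^2/B^2))$ for $|\theta|\le B$, and $\phi(\theta)\le\exp(-\Omega(\min(\theta,n)/B))$ for $\theta\ge 1$.

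For the lower bound I would take $a=\Theta\bigl(1/(B\sqrt k\log k)\bigr)$ and use Esseen's concentration inequality, $\Pr[Z_\ell\in(-a,a)]\le C\,a\int_{-1/(2a)}^{1/(2a)}|\phi(\theta)|\,d\theta$. Since $1/(2a)\le 2n$ (as $n\ge kB$) and $\phi(\theta)\le\exp(-\Omega(\theta/B))$ on $[4,2n]$, the integral is $O(B)$, giving $\Pr[Z_\ell\in(-a,a)]=O(aB)=O\bigl(1/(\sqrt k\log k)\bigr)=o(1/\sqrt k)$. Hence $q(a)\ge\tfrac12-O(1/(\sqrt k\log k))$, so the mean of $\mathrm{Bin}(k,q(a))$ falls below $(k+1)/2$ by only $o(\sqrt k)$, i.e.\ $o(1)$ standard deviations, and binomial anti-concentration gives $\Pr[\median_{\ell}Z_\ell\ge a]=\Omega(1)$ and thus $\E[|\median_{\ell}Z_\ell|]=\Omega(1/(B\sqrt k\log k))$. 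For the upper bound I instead need a matching lower bound on $g(t):=\Pr[|Z_\ell|\le t]$, and here I would exploit $\phi>0$: choosing a triangular bump $h\le\mathbf 1_{[-t,t]}$ with $\hat h\ge0$ and $\hat h(\theta)\ge\Omega(t)$ for $|\theta|\le c/t$, Fourier inversion yields $g(t)\ge\E[h(Z_\ell)]=\tfrac1{2\pi}\int\hat h\,\phi\ge\Omega(t)\int_{-B}^{B}\phi\,d\theta=\Omega(tB)$ for $t\le c/B$ (and $g(t)=\Omega(1)$ for $t\gtrsim1/B$ by Chebyshev, using $\Var[Z_\ell]=O(1/B)$). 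Then $q(t)\le\tfrac12-\Omega(\min(tB,1))$ and a Chernoff bound gives $\Pr[\mathrm{Bin}(k,q(t))\ge(k+1)/2]\le\exp(-\Omega(k\min(tB,1)^2))$; the small-$t$ Gaussian regime contributes $\int_0^{1/B}\exp(-\Omega(kB^2t^2))\,dt=O(1/(B\sqrt k))$, while for $t\gtrsim1/B$ I would use the sharper $q(t)=O(1/(tB))$ (union bound over collisions of the $O(1/t)$ items of weight $>t/2$, plus Chebyshev on the remainder) with the binomial bound $(O(1/(tB)))^{k/2}$ to show the tail contributes only $O(1/(B\sqrt k))$.

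The main obstacle is the anti-concentration estimate, namely bounding $\Pr[Z_\ell\in(-a,a)]$ by $O(1/\sqrt k)$ at scale $a\approx1/(B\sqrt k)$. Naive Littlewood--Offord/Kolmogorov--Rogozin bounds only see the $\Theta(\sqrt k\log k)$ colliding items of weight $\ge a$ and give merely $O(k^{-1/4})$, which is far too weak for the median step; the improvement genuinely requires integrating the contribution of \emph{all} Zipfian scales, which is exactly what $\sum_j(1-\cos(\theta f_j))=\Theta(\min(\theta,n))$ captures. (An elementary argument using only the heavy head directly seems possible but appears to lose the extra $\log k$ factor present in the statement, which is in fact the source of the gap between the $\Omega(1/(B\sqrt k\log k))$ and $O(1/(B\sqrt k))$ bounds.)
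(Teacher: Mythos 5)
Your proposal is correct, but it takes a genuinely different route from the paper's proof. The paper never touches the characteristic function of the full counter: for the upper bound it splits each counter into a heavy part ($j\le B$) and a light part, bounds the heavy tail by domination with Count-Min (via \Cref{simpleanalysis}) and the light tail by Bennett's inequality, and obtains the near-origin mass $\Pr[X^{(\ell)}\in[-t,t]]=\Omega(tB)$ by combining $\Pr[X_1^{(\ell)}=0]=\Omega(1)$ with a black-box invocation of Minton--Price (\Cref{lemma:minton}, itself Fourier-based); for the lower bound it uses an elementary combinatorial anti-concentration argument: condition on the number of colliders from the head $[B\log k]$, reveal them sequentially, and use that the head frequencies are distinct with spacing at least $(B\log k)^{-2}$, so the last collider lands in the length-$\frac{1}{B\sqrt k\log k}$ interval with probability $O(1/\sqrt k)$. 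Your unified Fourier treatment --- positivity of $\phi(\theta)=\prod_j\bigl(1-\tfrac1B(1-\cos(\theta f_j))\bigr)$, the phase-equidistribution estimate $\sum_j(1-\cos(\theta/j))=\Theta(\min(\theta,n))$, Fej\'er-kernel inversion for the lower bound on $\Pr[|Z_\ell|\le t]$, and Esseen's concentration inequality for the upper bound on $\Pr[Z_\ell\in(-a,a)]$ --- replaces all three ingredients at once; the equidistribution sum is exactly where the Zipfian heavy head enters, at the level of phases rather than combinatorics. Notably, your Esseen bound $\Pr[Z_\ell\in(-a,a)]=O(aB)$ is \emph{stronger} than what the paper proves: taking $a=\eps'/(B\sqrt k)$ for a small constant $\eps'$ (legitimate, since $1/a\le 2n$ follows from $n\ge kB$ once $k$ exceeds a constant) gives $q(a)\ge \tfrac12-O(\eps')/\sqrt k$ and hence $\E[|\tilde f_i-s(i)f_i|]=\Omega\bigl(\tfrac{1}{B\sqrt k}\bigr)$, removing the $\log k$ gap that the paper explicitly leaves open. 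Your tail bound $q(t)=O(1/(tB))$ is likewise slightly sharper than the paper's $O(\log(tB)/(tB))$, since sign cancellation lets Chebyshev replace the one-sided Count-Min bound.

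Two local slips, both harmless. First, $\phi(\theta)\ge\exp(-O(\theta^2/B^2))$ for $|\theta|\le B$ is false in the regime $1\ll\theta\ll B$; your own estimate $\sum_j(1-\cos(\theta/j))=\Theta(\theta)$ there gives $\phi(\theta)\ge\exp(-O(\theta/B))$, but the only consequence you use, $\int_{-B}^{B}\phi(\theta)\,d\theta=\Omega(B)$, survives unchanged. Second, Chebyshev with $\Var[Z_\ell]=O(1/B)$ yields $g(t)=\Omega(1)$ only for $t=\Omega(1/\sqrt B)$, not $t\gtrsim 1/B$; the fix is immediate, since $g$ is monotone and your inversion bound already gives $g(c/B)=\Omega(1)$, which is all the Chernoff step for $t\in[c/B,\alpha/B]$ needs. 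Finally, as the paper does at the end of its proof, you should dispose of $k=O(1)$ (and constant $B$) separately --- e.g.\ via \Cref{lowerexplemma} --- since both your binomial anti-concentration step and the constraint $1/a\le 2n$ require $k$ beyond a fixed constant.
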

The assumption $n\geq kB$ simply says that the total number of buckets is upper bounded by the number of items. Again using linearity of expectation for the summation over $i\in [n]$ and replacing $B$ by $B/k$ we obtain the claimed upper and lower bounds of $\frac{\sqrt k}{B \log k}$ and $\frac{\sqrt k}{B}$ respectively. 
We note that even if the bounds above are only tight up to a factor of $\log k$ they still imply that it is asymptotically optimal to choose $k=O(1)$, e.g. $k=3$. To settle the correct asymptotic growth is thus of merely theoretical interest.

In proving the upper bound in~\Cref{unlearnedmultiCS}, we will use the following result by Minton and Price (Corollary 3.2 of~\cite{minton2014improved}) proved via an elegant application of the Fourier transform.
\begin{lemma}[Minton and Price~\cite{minton2014improved}]\label{lemma:minton}
Let $\{X_i: i \in [n]\}$ be independent symmetric random variables such that $\Pr[X_i=0]\geq 1/2$ for each $i$. Let $X=\sum_{i=1}^n X_i$ and $\sigma^2=\E[X^2]=\Var[X]$. For $\eps<1$ it holds that $\Pr[|X|<\eps \sigma]=\Omega(\eps)$
\end{lemma}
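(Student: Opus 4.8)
The plan is to prove the anti-concentration bound by characteristic functions, using the hypothesis $\Pr[X_i=0]\ge 1/2$ to force the characteristic function of $X$ to be \emph{nonnegative everywhere}. Write $\phi_i(t)=\E[e^{itX_i}]$ and $\phi(t)=\E[e^{itX}]=\prod_i \phi_i(t)$. Since each $X_i$ is symmetric, $\phi_i(t)=\E[\cos(tX_i)]$ is real, and splitting according to whether $X_i=0$ gives $\phi_i(t)\ge \Pr[X_i=0]-\Pr[X_i\neq 0]\ge 0$. By independence, $\phi(t)=\prod_i\phi_i(t)\ge 0$ for all $t$. This sign fact is the crucial structural ingredient, and it is exactly where the hypothesis $\Pr[X_i=0]\ge 1/2$ enters.

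The second ingredient I would use is a smooth test function whose Fourier transform is \emph{also} nonnegative, so that both factors appearing in the Fourier-inversion integral have a definite sign. Concretely, set $a=\eps\sigma$ and take the tent function $f(x)=(1-|x|/a)_+$, supported on $[-a,a]$, which is a rescaled self-convolution of an indicator and hence has Fourier transform $\hat f(t)=\frac{4\sin^2(at/2)}{a t^2}\ge 0$. Since $0\le f(x)\le \mathbf 1_{|x|<a}$ and $\hat f\in L^1$, Fourier inversion (with Fubini justified because $|\phi|\le 1$) gives
\[
\Pr[|X|<\eps\sigma]\ \ge\ \E[f(X)]\ =\ \frac{1}{2\pi}\int_{\R}\hat f(t)\,\phi(t)\,dt.
\]

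Now both $\hat f$ and $\phi$ are nonnegative, so I can simply discard the contribution from $|t|>1/\sigma$ and retain only a neighborhood of the origin, where $\phi$ is bounded below. The elementary bound $\cos y\ge 1-y^2/2$ yields $\phi(t)=\E[\cos(tX)]\ge 1-t^2\sigma^2/2\ge 1/2$ whenever $|t|\le 1/\sigma$. On the same range, since $\eps<1$ forces $|at/2|\le \eps/2<1/2$, the bound $\sin y\ge cy$ (for an absolute constant $c>0$ valid on $[0,1/2]$) gives $\hat f(t)\ge c^2 a$. Combining these,
\[
\Pr[|X|<\eps\sigma]\ \ge\ \frac{1}{2\pi}\cdot\frac12\int_{|t|\le 1/\sigma} c^2 a\,dt\ =\ \frac{c^2}{2\pi}\cdot\frac{a}{\sigma}\ =\ \Omega(\eps),
\]
as desired.

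I expect no serious obstacle once the two sign facts are established; the only points requiring care are the justification of Fourier inversion (covered by $\hat f\in L^1$ together with $|\phi|\le1$) and the degenerate case $\sigma=0$, where the statement is vacuous. The genuinely clever step, and the one that makes the whole argument go through, is recognizing that $\Pr[X_i=0]\ge 1/2$ is precisely the condition guaranteeing $\phi\ge 0$; this is what allows us to localize the inversion integral near $t=0$ and lower-bound it without tracking any cancellation.
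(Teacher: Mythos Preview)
Your proof is correct and is exactly the Fourier-analytic argument the paper alludes to. Note that the paper does not actually prove this lemma---it merely quotes it as Corollary~3.2 of Minton and Price~\cite{minton2014improved}, remarking that it is ``proved via an elegant application of the Fourier transform.'' Your write-up reconstructs that argument: the key point, as you identify, is that $\Pr[X_i=0]\ge 1/2$ forces each $\phi_i(t)=\E[\cos(tX_i)]\ge \Pr[X_i=0]-\Pr[X_i\neq 0]\ge 0$, hence $\phi\ge 0$, which lets one drop the tails of the inversion integral and keep only the region $|t|\le 1/\sigma$ where $\phi\ge 1/2$ and $\hat f=\Omega(a)$.
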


\begin{proof}[Proof of~\Cref{unlearnedmultiCS}]
If $B$ (and hence $k$) is a constant, then the results follow easily from~\Cref{explemma}, so in what follows we may assume that $B$ is larger than a sufficiently large constant. We subdivide the exposition into the proofs of the upper and lower bounds.

\paragraph{Upper bound} 
Define $N_1=[B] \setminus \{i\}$ and $N_2=[n]\setminus ([B]\cup \{i\})$. Let for $\ell \in [k]$, $X_1^{(\ell)}=\sum_{j\in N_1}[h_\ell(j)=h_\ell(i)]s_\ell(j)f_j$ and $X_2^{(\ell)}=\sum_{j\in N_2} [h_\ell(j)=h_\ell(i)]s_\ell(j)f_j$ and let $X^{(\ell)}=X_1^{(\ell)}+X_2^{(\ell)}$.

As the absolute error in Count-Sketch with one pair of hash functions $(h,s)$ is always upper bounded by the corresponding error in Count-Min with the single hash function $h$, we can use the bound in the proof of~\Cref{simpleanalysis} to conclude that $\Pr[|X_1^{(\ell)}|\geq t] = O(\frac{\log (tB)}{tB})$, 
when $t\geq 3/B$. Also $\Var[X_2^{(\ell)}] = (\frac{1}{B}-\frac{1}{B^2}) \sum_{j\in N_2}f_j^2\leq \frac{1}{B^2}$,
so by Bennett's inequality (\Cref{thm:Bennett}) with $M=1/B$ and $\sigma^2=1/B^2$ and~\Cref{asymptotics},
\begin{align*}
\Pr[|X_2^{(\ell)}|\geq t]\leq 2\exp \left(-h(tB)\right)\leq 2\exp \left(-\frac{1}{2} tB \log \left(tB+1\right)\right)=O \left( \frac{\log (tB)}{tB} \right),
\end{align*}
for $t\geq \frac{3}{B}$. It follows that for $t\geq 3/B$, 
\begin{align*}
\Pr[|X^{(\ell)}|\geq 2t]\leq \Pr[(|X_1^{(\ell)}|\geq t)] +\Pr(|X_2^{(\ell)}|\geq t)]=O \left( \frac{\log (tB)}{tB} \right).
\end{align*}
Let $C$ be the implicit constant in the $O$-notation above. If $|\tilde{f_i}-s(i)f_i|\geq 2t$, at least half of the values $(|X^{(\ell)}|)_{\ell \in [k]}$ are at least $2t$. For $t\geq 3/B$ it thus follows by a union bound that 
\begin{align}\label{eq:tbig}
\Pr[|\tilde{f_i}-s(i)f_i|\geq 2t ]\leq 2 \binom{k}{\lceil k/2 \rceil}\left( C\frac{\log (tB)}{tB} \right)^{\lceil k/2\rceil}\leq 2 \left( 4C\frac{\log (tB)}{tB} \right)^{\lceil k/2\rceil}.
\end{align}
If $\alpha =O(1)$ is chosen sufficiently large it thus holds that
\begin{align*}
\int_{\alpha/B}^\infty \Pr[|\tilde{f_i}-s(i)f_i|\geq t ] \, dt
&= 2\int_{\alpha/(2B)}^\infty \Pr[|\tilde{f_i}-s(i)f_i|\geq 2t ] \, dt \\
&\leq \frac{4}{B}\int_{\alpha/2}^\infty \left( 4C\frac{\log (t)}{t} \right)^{\lceil k/2\rceil} \, dt \\
&\leq \frac{1}{B2^{k}}\leq  \frac{1}{B\sqrt{k}}.
\end{align*}
Here the first inequality uses~\cref{eq:tbig} and a change of variable. The second inequality uses that $\left(4C \frac{\log t}{t}\right)^{\lceil k/2 \rceil}\leq (C'/t)^{2k/5}$ for some constant $C'$ followed by a calculation of the integral.
Now,
$$
\E[|\tilde{f_i}-s(i)f_i|]=\int_{0}^\infty \Pr[|\tilde{f_i}-s(i)f_i|\geq t ] \, dt,
$$
 so for our upper bound it therefore suffices to show that $\int_{0}^{\alpha/B} \Pr[|\tilde{f_i}-s(i)f_i|\geq t ] \, dt=O\left( \frac{1}{B \sqrt{k}}\right)$. For this we need the following claim:
\begin{claim}\label{intervalclaim}
Let $I\subset \R$ be the closed interval centered at the origin of length $2t$, i.e., $I=[-t,t]$. Suppose that $0< t\leq \frac{1}{2B}$. For $\ell \in [k]$, $\Pr[X^{(\ell)}\in I]= \Omega (tB)$.
\end{claim}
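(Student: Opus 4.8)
The plan is to exploit the \emph{heavy head} of the Zipfian distribution: the head items $N_1=[B]\setminus\{i\}$ collide with $i$ only rarely, so with constant probability the counter $X^{(\ell)}$ collapses onto its tail part $X_2^{(\ell)}$, which is concentrated at the much finer scale $\Theta(1/B)$. This is precisely what produces an anti-concentration density of order $B$ near the origin. Note that a direct application of \Cref{lemma:minton} to the whole sum $X^{(\ell)}$ would only give $\Pr[X^{(\ell)}\in I]=\Omega(t\sqrt B)$, since $\Var[X^{(\ell)}]=\Theta(1/B)$; this is too weak by a factor of $\sqrt B$, and overcoming it is the crux of the argument.

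Concretely, let $E$ be the event that no head item collides with $i$ under $h_\ell$, i.e.\ $h_\ell(j)\neq h_\ell(i)$ for all $j\in N_1$. First I would record that $\Pr[E]=(1-1/B)^{|N_1|}=\Omega(1)$, and that on $E$ all the indicators in $X_1^{(\ell)}$ vanish, so $X^{(\ell)}=X_2^{(\ell)}$. Next, conditioning on the value of $h_\ell(i)$, the event $E$ is determined by the hashes of the head items $N_1$, while $X_2^{(\ell)}$ depends only on the hashes and signs of the tail items $N_2$; since these index sets are disjoint and neither conditional distribution depends on $h_\ell(i)$, the event $E$ and the variable $X_2^{(\ell)}$ are independent. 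Hence
\[
\Pr[X^{(\ell)}\in I]\ \ge\ \Pr\bigl[E\cap\{X_2^{(\ell)}\in I\}\bigr]\ =\ \Pr[E]\cdot\Pr[X_2^{(\ell)}\in I],
\]
and it remains to lower bound $\Pr[X_2^{(\ell)}\in I]$.

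For this I would apply Minton and Price (\Cref{lemma:minton}) to $X_2^{(\ell)}=\sum_{j\in N_2}[h_\ell(j)=h_\ell(i)]s_\ell(j)f_j$. Each summand is symmetric (because of the Rademacher $s_\ell(j)$) and equals $0$ with probability $1-1/B\ge 1/2$, so the hypotheses hold, with $\sigma^2=\Var[X_2^{(\ell)}]=\Theta(1/B^2)$ as computed above; the matching lower bound $\sum_{j\in N_2}f_j^2=\Omega(1/B)$ uses $n\ge kB\ge 3B$. Setting $\eps=t/\sigma$, the assumption $t\le\frac1{2B}$ together with $\sigma=\Theta(1/B)$ guarantees $\eps<1$ for $B$ larger than a constant, so the lemma yields $\Pr[X_2^{(\ell)}\in I]\ge\Pr[|X_2^{(\ell)}|<t]=\Omega(\eps)=\Omega(t/\sigma)=\Omega(tB)$. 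Combining with $\Pr[E]=\Omega(1)$ gives $\Pr[X^{(\ell)}\in I]=\Omega(tB)$, as claimed.

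The main obstacle is exactly the one flagged above: extracting a density linear in $B$ rather than the $\sqrt B$ that a blanket anti-concentration bound on $X^{(\ell)}$ delivers. Conditioning on $E$ is what rescues this, by isolating the low-variance tail and discarding the heavy but rarely-present head contribution. The only technical care needed is to certify that $\sigma=\Var[X_2^{(\ell)}]^{1/2}$ is genuinely $\Theta(1/B)$, so that $\eps=t/\sigma$ is simultaneously $<1$ and $\Theta(tB)$; this reduces to the elementary estimate $\sum_{j=B+1}^{n}1/j^2=\Theta(1/B)$, valid once $n\ge 3B$.
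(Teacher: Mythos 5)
Your proposal is correct and follows essentially the same route as the paper's own proof: with constant probability the head part $X_1^{(\ell)}$ vanishes (no collision of $N_1$ with $i$), this event is independent of the tail sum $X_2^{(\ell)}$, and \Cref{lemma:minton} applied to $X_2^{(\ell)}$ with $\sigma=\Theta(1/B)$ gives $\Pr[X_2^{(\ell)}\in I]=\Omega(tB)$. Your extra verification that $\sum_{j\in N_2}f_j^2=\Omega(1/B)$ (so that $\eps=t/\sigma<1$ under $t\leq\frac{1}{2B}$) is a welcome detail the paper glosses over, but it does not change the argument.
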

\begin{proof}
Note that $\Pr[X_1^{(\ell)}=0]\geq \Pr[\bigwedge_{j\in N_1} (h_{\ell}(j)\neq h_{\ell}(i))] = (1-{1\over B})^{N_1}=\Omega(1)$. 
Secondly $\Var[X_2^{(\ell)}]= (\frac{1}{B}-\frac{1}{B^2})\sum_{j \in N_2}f_j^2\leq \frac{1}{B^2}$.
Using that $X_1^{(\ell)}$ and $X_2^{(\ell)}$ are independent and~\Cref{lemma:minton} with $\sigma^2=\Var[X_2^{(\ell)}]$, it follows that $\Pr[X^{(\ell)}\in I]=\Omega \left(\Pr[X_2^{(\ell)}\in I] \right)=\Omega(tB)$.
\end{proof}
Let us now show how to use the claim to establish the desired upper bound. For this let $0<t \leq \frac{1}{2B}$ be fixed. If $|\tilde{f_i}-s(i)f_i|\geq t$, at least half of the values $(X^{(\ell)})_{\ell \in [k]}$ are at least $t$ or at most $-t$. Let us focus on bounding the probability that at least half are at least $t$, the other bound being symmetric giving an extra factor of $2$ in the probability bound. By symmetry and~\Cref{intervalclaim}, $\Pr[X^{(\ell)}\geq t]=\frac{1}{2}-\Omega(tB)$. For $\ell\in [k]$ we define $Y_{\ell}=[X^{(\ell)}\geq t]$, and we put $S=\sum_{\ell \in [k]}Y_\ell$. Then $\E[S]=k \left(\frac{1}{2}-\Omega(tB) \right)$. If at least half of the values $(X^{(\ell)})_{\ell \in [k]}$ are at least $t$ then $S\geq k/2$. By Hoeffding's inequality (\Cref{thm:Hoeffding}) we can bound the probability of this event by
\begin{align*}
\Pr[S\geq k/2]=\Pr[S-\E[S]=\Omega(ktB)]=\exp(-\Omega(kt^2B^2)).
\end{align*}
It follows that $\Pr[|\tilde{f_i}-s(i)f_i|\geq t]\leq 2\exp(-\Omega(kt^2B^2))$. Thus 
\begin{align*}
\int_{0}^{\alpha/B} \Pr[|\tilde{f_i}-s(i)f_i|\geq t ] \, dt 
&\leq \int_{0}^{\frac{1}{2B}} 2\exp(-\Omega(kt^2B^2)) \, dt+ \int_{\frac{1}{2B}}^{\alpha/B}2\exp(-\Omega(k)) \,dt \\
&\leq\frac{1}{B \sqrt{k}}\int_0^{\sqrt{k}/2} \exp(-t^2)\, dt+\frac{2\alpha \exp(-\Omega(k))}{B}=O\left(\frac{1}{B\sqrt{k}}\right).
\end{align*}
Here the second inequality used a change of variable. The proof of the upper bound is complete.

\paragraph{Lower Bound} 
Fix $\ell \in [k]$ and let $M_1=[B \log k]\setminus \{i\}$ and $M_2=[n]\setminus ([B \log k] \cup \{i\})$. Write
\begin{align*}
S:=\sum_{j\in M_1} [h_\ell(j)=h_\ell(i)]s_\ell(j)f_j
+\sum_{j\in M_2} [h_\ell(j)=h_\ell(i)]s_\ell(j)f_j 
:= S_1+S_2.
\end{align*}
We also define $J:=\{j\in M_1: h_\ell(j)=h_{\ell}(i) \}$. Let $I\subseteq \R$ be the closed interval around $s_\ell(i)f_i$ of length $\frac{1}{ B \sqrt{k}\log k}$. We now upper bound the probability that $S\in I$ conditioned on the value of $S_2$. To ease the notation, the conditioning on $S_2$ has been left out in the notation to follow. Note first that
\begin{align}\label{eq:somethingtobound}
\Pr[S\in I ]=\sum_{r=0}^{|M_1|} \Pr[S\in I \mid |J|=r] \cdot \Pr[|J|=r].
\end{align}
For a given $r\geq 1$ we now proceed to bound $\Pr[S\in I \mid |J|=r]$. This probability is the same as the probability that $S_2+\sum_{j\in R} \sigma_jf_j\in I$, where $R\subseteq M_1$ is a uniformly random $r$-subset and the $\sigma_j$'s are independent Rademachers. Suppose that we sample the elements from $R$ as well as the corresponding signs $(\sigma_i)_{i \in R}$ sequentially, and let us condition on the values and signs of the first $r-1$ sampled elements. At this point at most $\frac{B \log k}{\sqrt{k}}+1 $ possible samples for the last element in $R$ can cause that $S \in I$. Indeed, the minimum distance between distinct elements of $\{f_j: j \in M_1\}$ is at least $1/(B\log k)^2$ and furthermore $I$ has length $\frac{1}{B\sqrt{k} \log k}$. Thus, at most
\begin{align*}
 \frac{1}{ B \sqrt{k}\log k} \cdot (B\log k)^2+1 = \frac{B \log k}{\sqrt{k}}+1
\end{align*}
 choices for the last element of $R$ ensure that $S\in I$.  For $1\leq r \leq (B \log k) /2$ we can thus upper bound
\begin{align*}
\Pr[S\in I \mid |J|=r]\leq \frac{\frac{B \log k}{\sqrt{k}}+1}{|M_1| -r+1}\leq \frac{2}{\sqrt{k}}+\frac{2}{B \log k}\leq \frac{3}{\sqrt{k}}.
\end{align*}
Note that $\mu:=\E[|J|]\leq \log k$ so for $B\geq 6$, it holds that 
\begin{align*}
\Pr[|J| \geq  (B \log k) /2]\leq \Pr\left[|J|\geq \mu\frac{B}{2}\right]\leq \Pr\left[|J|\geq \mu \left(1+\frac{B}{3}\right)\right]\leq  \exp \left( -\mu h(B/3) \right)=k^{-\Omega (h(B/3))},
\end{align*}
where the last inequality follows from the Chernoff bound of~\Cref{thm:Chernoff}. Thus, if we assume that $B$ is larger than a sufficiently large constant, then $\Pr[|J|\geq B \log k /2]\leq k^{-1}$. Finally, $\Pr[|J|=0]=(1-1/B)^{B\log k}\leq k^{-1}$. Combining the above, we can continue the bound in~\eqref{eq:somethingtobound} as follows. 
\begin{align}
\Pr[S\in I ] 
\leq &  \Pr[|J|=0]+\sum_{r=1}^{(B \log k)/2}   \Pr[S\in I \mid |J|=r] \cdot \Pr[|J|=r] \nonumber\\
+&\sum_{r=(B \log k) /2+1}^{|M_1|} \Pr[|J|=r]= O \left(\frac{1}{\sqrt{k}} \right), \label{eq:SinI}
\end{align}
which holds even after removing the conditioning on $S_2$. We now show that with probability $\Omega (1)$ at least half the values $(X^{(\ell)})_{\ell \in [k]}$ are at least $\frac{1}{2B \sqrt k \log k}$. Let $p_0$ be the probability that $X^{(\ell)}\geq \frac{1}{2B \sqrt k \log k}$. This probability does not depend on $\ell \in [k]$ and by symmetry and~\eqref{eq:SinI}, $p_0=1/2-O(1/\sqrt{k})$. Define the function $f:\{0,\dots,k\} \to \R$ by
\begin{align*}
f(t)=\binom{k}{t} p_0^t (1-p_0)^{k-t}.
\end{align*}
Then $f(t)$ is the probability that exactly $t$ of the values $(X^{(\ell)})_{\ell \in [k]}$ are at least $\frac{1}{B \sqrt k \log k}$. Using that $p_0=1/2-O(1/\sqrt{k})$, a simple application of Stirling's formula gives that $f(t)=\Theta \left( \frac{1}{\sqrt{k}} \right)$ for $t=\lceil k/2\rceil,\dots,\lceil k/2+\sqrt{k}\rceil$ when $k$ is larger than some constant $C$. 
It follows that with probability $\Omega(1)$ at least half of the $(X^{(\ell)})_{\ell \in [k]}$ are at least $\frac{1}{B \sqrt k \log k}$ and in particular 
\begin{align*}
\E[|\tilde {f_i}-f_i|]=\Omega\left(\frac{1}{B \sqrt k \log k} \right).
\end{align*}
Finally we handle the case where $k\leq C$. It follows from simple calculations (e.g., using~\Cref{lowerexplemma}) that $X^{(\ell)}=\Omega(1/B)$ with probability $\Omega(1)$. Thus this happens for all $\ell \in [k]$ with probability $\Omega(1)$ and in particular $\E[|\tilde {f_i}-f_i|]= \Omega(1/B)$, which is the desired for constant $k$.
\end{proof}

\section{Learned Count-Sketch for Zipfians}\label{sec:learnedcountsketch}
We now proceed to analyze the learned Count-Sketch algorithm. In~\Cref{sec:learned1} we estimate the expected error when using a single hash function and in~\Cref{sec:learnedk} we show that the expected error only increases when using more hash functions. Recall that we assume that the number of buckets $B_h$ used to store the heavy hitters that $B_h=\Theta(B-B_h)=\Theta(B)$. 
\subsection{One hash function}\label{sec:learned1}
By taking $B_1=B_h=\Theta(B)$ and $B_2=B-B_h=\Theta(B)$ in the theorem below, the result on L-CS for $k=1$ claimed in~\Cref{tbl:results} follows immediately. 

\begin{theorem}\label{thm:lcs1}
Let $h:[n]\setminus [B_1] \to [B_2]$ and $s:[n] \to \{-1,1\}$ be truly random hash functions where $n,B_1,B_2 \in \N$ and\footnote{The first inequality is the standard assumption that we have at least as many items as buckets. The second inequality says that we use at least as many buckets for non-heavy items as for heavy items (which doesn't change the asymptotic space usage).} $n-B_1\geq B_2\geq B_1$. Define the random variable $\tilde{f_i}=\sum_{j=B_1+1}^n [h(j)=h(i)]s(j)f_j$ for $i \in [n]\setminus [B_1]$. Then 
\begin{align*}
\E[|\tilde f_i-s(i)f_i|]=\Theta  \left(\frac{\log \frac{B_2+B_1}{B_1}}{B_2} \right)
\end{align*}
\end{theorem}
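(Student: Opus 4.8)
The plan is to follow the structure of the proof of~\Cref{unlearnedcs1} almost verbatim, the only difference being that the heaviest $B_1$ items have been removed and that we now hash into $B_2$ buckets. Fixing $i \in [n]\setminus [B_1]$, observe that
\[
\tilde f_i - s(i) f_i = \sum_{\substack{j=B_1+1 \\ j\neq i}}^{n} [h(j)=h(i)]\, s(j)\, f_j,
\]
a random variable of exactly the form treated in~\Cref{explemma,lowerexplemma}, with $\eta_j = [h(j)=h(i)]$ Bernoulli of parameter $p = 1/B_2$, $\sigma_j = s(j)$ a Rademacher, and weights $w_j = f_j$. The crucial choice is the split point: since the head of the distribution now starts at index $B_1+1$, I would let $N_1 = \{B_1+1,\dots,B_1+B_2\}\setminus\{i\}$ be the $B_2$ heaviest surviving items and $N_2 = \{B_1+B_2+1,\dots,n\}\setminus\{i\}$ the remaining light tail, and bound the contributions $X_1,X_2$ of the two groups separately.

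For the upper bound, the triangle inequality gives $\E[X_1] \le \frac{1}{B_2}\sum_{j=B_1+1}^{B_1+B_2} f_j = \frac{1}{B_2}\bigl(H_{B_1+B_2}-H_{B_1}\bigr) = O\!\left(\frac{\log\frac{B_1+B_2}{B_1}}{B_2}\right)$, which is exactly where the claimed logarithm comes from: removing the heavy head shrinks the harmonic sum from $\log B_2$ to $\log\frac{B_1+B_2}{B_1}$. For $X_2$ I would invoke~\Cref{explemma} with $\|w\|_2^2 = \sum_{j>B_1+B_2} f_j^2 \le \int_{B_1+B_2}^\infty x^{-2}\,dx = \frac{1}{B_1+B_2}$, yielding $\E[X_2] = O\!\bigl(\sqrt{1/B_2}\cdot 1/\sqrt{B_1+B_2}\bigr) = O(1/B_2)$ since $B_1+B_2\ge B_2$. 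Adding the two bounds gives $\E[|\tilde f_i - s(i)f_i|] = O\!\left(\frac{\log\frac{B_1+B_2}{B_1}}{B_2} + \frac{1}{B_2}\right)$, and the second term is absorbed into the first precisely because the hypothesis $B_2\ge B_1$ forces $\frac{B_1+B_2}{B_1}\ge 2$, so $\log\frac{B_1+B_2}{B_1} = \Omega(1)$.

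For the lower bound I would apply~\Cref{lowerexplemma} with $I = N_1$, obtaining $\E[|\tilde f_i-s(i)f_i|] \ge \frac12\cdot\frac{1}{B_2}\bigl(1-\tfrac{1}{B_2}\bigr)^{|N_1|-1}\sum_{j\in N_1} f_j$. Since $|N_1|\le B_2$ we have $\bigl(1-1/B_2\bigr)^{|N_1|-1} \ge \bigl(1-1/B_2\bigr)^{B_2-1} = \Omega(1)$, and $\sum_{j\in N_1} f_j = \Theta\bigl(\log\frac{B_1+B_2}{B_1}\bigr)$, giving the matching lower bound. The only genuinely routine points to check are that deleting the single excluded term $f_i \le 1/(B_1+1)$ from the harmonic sum does not affect its asymptotics (true because $\log\frac{B_1+B_2}{B_1}=\Omega(1)$) and the constant-$B_2$ regime, which follows directly from~\Cref{explemma,lowerexplemma} as in~\Cref{unlearnedcs1}. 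I do not anticipate any real obstacle here: essentially all the work has already been done in~\Cref{explemma,lowerexplemma,unlearnedcs1}, and the one conceptual novelty is simply recognising that excising the heavy head replaces $\log B_2$ by $\log\frac{B_1+B_2}{B_1}$.
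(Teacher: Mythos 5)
Your proposal is correct and matches the paper's own proof essentially verbatim: the same split into $N_1=[B_1+B_2]\setminus([B_1]\cup\{i\})$ and $N_2=[n]\setminus([B_1+B_2]\cup\{i\})$, the triangle inequality plus \Cref{explemma} for the upper bound, and \Cref{lowerexplemma} with $I=N_1$ for the lower bound. The extra details you verify (absorbing the $O(1/B_2)$ term via $B_2\geq B_1$, the excluded term $f_i$, small $B_2$) are exactly the routine checks the paper leaves implicit.
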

\begin{proof}
Let $N_1=[B_1+B_2]\setminus ([B_1]\cup \{i\})$ and $N_2=[n]\setminus ([B_1+B_2]\cup \{i\})$.
Let $X_1=\sum_{j\in N_1} [h(j)=h(i)]s(j)f_j$ and $X_2=\sum_{j\in N_2} [h(j)=h(i)]s(j)f_j$.
By the triangle inequality and linearity of expectation, 
\begin{align*}
\E[|X_1|]= O \left(\frac{\log \frac{B_2+B_1}{B_1}}{B_2} \right).
\end{align*}
Moreover, it follows directly from~\Cref{explemma} that  $\E\left[|X_2| \right]=O\left( \frac{1}{B_2} \right)$. Thus 
\begin{align*}
\E[|\tilde f_i-s(i)f_i|]\leq \E[|X_1|]+\E[|X_2|]=O \left(\frac{\log \frac{B_2+B_1}{B_1}}{B_2} \right),
\end{align*}
as desired. For the lower bound on $\E\left[\left|\tilde {f_i}-s(i)f_i \right| \right]$ we apply~\Cref{lowerexplemma} with $I=N_1$ to obtain that,
\begin{align*}
\E\left[\left|\tilde {f_i}-s(i)f_i \right| \right]\geq \frac{1}{2B_2} \left(1-\frac{1}{B_2} \right)^{|N_1|-1}\sum_{i\in N_1}f_i=\Omega \left(\frac{\log \frac{B_2+B_1}{B_1}}{B_2} \right).
\end{align*}
\end{proof}

\begin{corollary}\label{cro:lcs1}
Let $h:[n]\setminus [B_h] \to [B-B_h]$ and $s:[n] \to \{-1,1\}$ be truly random hash functions where $n,B,B_h \in \N$ and $B_h = \Theta(B) \leq B/2$. Define the random variable $\tilde{f_i}=\sum_{j=B_h+1}^n [h(j)=h(i)]s(j)f_j$ for $i \in [n]\setminus [B_h]$. Then $\E[|\tilde f_i-s(i)f_i|]=\Theta(1 / B)$.
\end{corollary}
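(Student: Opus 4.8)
The plan is to derive the corollary directly from \Cref{thm:lcs1} by a suitable substitution, so that no new probabilistic argument is needed. First I would set $B_1 = B_h$ and $B_2 = B - B_h$, which is exactly the split of the total budget into buckets reserved for heavy items and buckets used by the sketch. Before invoking the theorem I must verify its two hypotheses $n - B_1 \geq B_2 \geq B_1$. The first, $n - B_h \geq B - B_h$, reduces to $n \geq B$, the standing assumption that there are at least as many items as buckets; the second, $B - B_h \geq B_h$, is precisely the given condition $B_h \leq B/2$. Since the random variable $\tilde{f_i}$ in the corollary coincides with the one in \Cref{thm:lcs1} under this substitution, the theorem applies.

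With the substitution in place, and using $B_2 + B_1 = B$, \Cref{thm:lcs1} yields
\[
\E[|\tilde f_i - s(i) f_i|] = \Theta\!\left(\frac{\log \frac{B}{B_h}}{B - B_h}\right).
\]
It then remains to simplify this expression under the assumption $B_h = \Theta(B) \leq B/2$. The denominator is immediate: since $B_h \leq B/2$ we have $B - B_h \geq B/2$, and trivially $B - B_h \leq B$, so $B - B_h = \Theta(B)$.

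The one point requiring care is bounding the numerator $\log(B/B_h)$ from both sides by a constant. The condition $B_h \leq B/2$ forces $B/B_h \geq 2$, so $\log(B/B_h) \geq \log 2 > 0$ stays bounded away from zero; this is the step I would be most careful about, since a naive reading of $B_h = \Theta(B)$ alone would permit $B_h = B$ and hence a vanishing logarithm. On the other hand, $B_h = \Omega(B)$ gives $B/B_h = O(1)$ and thus $\log(B/B_h) = O(1)$. Combining the two bounds yields $\log(B/B_h) = \Theta(1)$, and substituting both estimates gives $\E[|\tilde f_i - s(i) f_i|] = \Theta(1/B)$, as claimed. The main (and only real) obstacle is therefore the constant-factor bookkeeping ensuring the logarithm remains bounded away from zero, which hinges on the explicit hypothesis $B_h \leq B/2$ and not merely on $B_h = \Theta(B)$.
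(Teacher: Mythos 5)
Your proposal is correct and matches the paper's intended derivation exactly: the paper states the corollary without proof, but the sentence preceding \Cref{thm:lcs1} prescribes precisely your substitution $B_1=B_h$, $B_2=B-B_h$, after which $\log(B/B_h)=\Theta(1)$ and $B-B_h=\Theta(B)$ give the claim. Your observation that the lower bound on the logarithm genuinely needs $B_h\leq B/2$ (not just $B_h=\Theta(B)$), and that the hypothesis $n-B_1\geq B_2$ reduces to the paper's standing assumption $n\geq B$, is careful bookkeeping the paper leaves implicit.
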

\begin{remark}
The upper bounds of~\Cref{thm:lcs1} and~\Cref{cro:lcs1} hold even without the assumption of fully random hashing. In fact, we only require that $h$ and $s$ are $2$-independent. Indeed~\Cref{explemma} holds even when the Rademachers are $2$-independent (the proof is the same). Moreover, we need $h$ to be $2$-independent as we condition on $h(i)$ in our application of~\Cref{explemma}. With $2$-independence the variables $[h(j)=h(i)]$ for $j\neq i$ are then Bernoulli variables taking value $1$ with probability $1/B_2$.
\end{remark}


\subsection{More hash functions}\label{sec:learnedk}
We now show that, like for Count-Sketch, using more hash functions does not decrease the expected error. We first state the Littlewood-Offord lemma as strengthened by Erd\H{o}s.
\begin{theorem}[Littlewood-Offord~\cite{littlewood1939number}, Erd\H{o}s~\cite{erdos1945lemma}]\label{LittlewoodOfford} 
Let $a_1,\dots,a_n\in \R$ with $|a_i|\geq1$ for $i\in [n]$. Let further  $\sigma_1,\dots,\sigma_n\in \{-1,1\}$ be random variables with $\Pr[\sigma_i=1]=\Pr[\sigma_i=-1]=1/2$ and define $S=\sum_{i=1}^n \sigma_i a_i$. For any $v\in \R$ it holds that $\Pr[|S-v|\leq 1] = O(1/\sqrt{n})$.
\end{theorem}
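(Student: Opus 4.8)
The plan is to reduce this probabilistic anti-concentration statement to a purely combinatorial counting problem over the Boolean lattice $2^{[n]}$ and then invoke Sperner's theorem. First I would normalize signs: since each $\sigma_i$ is symmetric, replacing $a_i$ by $-a_i$ leaves the distribution of $S$ unchanged, so I may assume $a_i\ge 1$ for every $i$. Each sign pattern $\sigma\in\{-1,1\}^n$ corresponds bijectively to a set $A=\{i:\sigma_i=1\}$, and then $S$ equals $S(A):=\sum_{i\in A}a_i-\sum_{i\notin A}a_i$. The event $|S-v|\le 1$ is exactly the event that $S(A)$ lands in the closed interval $[v-1,v+1]$ of length $2$, and because $\sigma$ is uniform on $\{-1,1\}^n$, the probability equals the number of such sets $A$ divided by $2^n$. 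Thus it suffices to bound the number of subsets whose signed sum falls into an interval of length $2$.

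The key observation is a monotonicity/gap property: for any two sets $A\subsetneq B$ one has $S(B)-S(A)=2\sum_{i\in B\setminus A}a_i\ge 2$, since each $a_i\ge 1$ and $B\setminus A$ is nonempty. Consequently two \emph{comparable} sets can never both have their signed sums inside a single \emph{open} interval of length $2$ (any two points of such an interval differ by strictly less than $2$). Hence for any open interval $U$ of length $2$, the family $\{A:S(A)\in U\}$ is an antichain in $2^{[n]}$, so Sperner's theorem bounds its cardinality by $\binom{n}{\lfloor n/2\rfloor}$. To deal with our interval being closed of length exactly $2$ (where the antichain argument fails at the boundary), I would simply cover $[v-1,v+1]$ by two open length-$2$ intervals, say $(v-1-\delta,\,v+1-\delta)$ and $(v-1+\delta,\,v+1+\delta)$ for a small $\delta>0$; each contributes at most $\binom{n}{\lfloor n/2\rfloor}$ subsets, so the closed interval contains at most $2\binom{n}{\lfloor n/2\rfloor}$ of them.

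Putting this together gives $\Pr[|S-v|\le 1]\le 2\binom{n}{\lfloor n/2\rfloor}/2^n$, and the final step is Stirling's approximation, which yields $\binom{n}{\lfloor n/2\rfloor}=\Theta(2^n/\sqrt{n})$ and hence the claimed $O(1/\sqrt{n})$ bound. The conceptual heart of the argument—the part I would regard as the main (and really the only) obstacle—is recognizing that the level set of subset sums forms an antichain, after which Sperner's theorem does all the work; the closed-versus-open interval technicality is genuinely minor and costs only the harmless constant factor of $2$.
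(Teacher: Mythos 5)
Your proposal is correct. The paper does not prove this statement at all---it imports it as a known result, citing Littlewood--Offord and Erd\H{o}s---and your argument is precisely Erd\H{o}s's classical proof: normalize so $a_i\geq 1$, identify sign patterns with subsets, observe that comparable sets have signed sums differing by at least $2$ so the level set of an open length-$2$ interval is an antichain, apply Sperner's theorem, and finish with Stirling; your handling of the closed interval by covering it with two open length-$2$ intervals is valid and costs only the harmless factor of $2$ absorbed into the $O(1/\sqrt{n})$.
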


Setting $B_1=B_h=\Theta(B)$ and $B_2=B-B_2=\Theta(B)$ in the theorem below gives the final bound from~\Cref{tbl:results} on L-CS with $k\geq 3$.
\begin{theorem}
Let $n\geq B_1+B_2\geq 2B_1$, $k\geq 3$ odd, and $h_1,\dots,h_k:[n] \setminus [B_1] \to [B_2/k]$ and $s_1,\dots,s_k:[n] \setminus [B_1]\to \{-1,1\}$ be independent and truly random. Define the random variable $\tilde{f_i}=\median_{\ell\in[k]}\left(\sum_{j\in [n]\setminus [B_1]} [h_\ell(j)=h_\ell(i)]s_\ell(j)f_j\right)$ for $i \in [n]\setminus [B_1]$. 
Then 
\begin{align*}
\E[|\tilde f_i-s(i)f_i|]=\Omega\left(\frac{1}{B_2}\right).
\end{align*}
\end{theorem}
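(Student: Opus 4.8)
The plan is to reduce the statement to a lower bound on the absolute median of $k$ i.i.d.\ symmetric noise variables, and then to establish anti-concentration of each such variable at scale $\Omega(1/B_2)$ via the Littlewood--Offord inequality (\Cref{LittlewoodOfford}). Writing $C^{(\ell)}=\sum_{j>B_1}[h_\ell(j)=h_\ell(i)]s_\ell(j)f_j$, the $\ell$-th (sign-corrected) estimate of $f_i$ is $s_\ell(i)C^{(\ell)}=f_i+Y_\ell$, where $Y_\ell=\sum_{j>B_1,\,j\neq i}[h_\ell(j)=h_\ell(i)]\,s_\ell(i)s_\ell(j)\,f_j$. Since $(s_\ell(i)s_\ell(j))_{j\neq i}$ are independent Rademachers, each $Y_\ell$ is symmetric about $0$ and the $Y_\ell$ are i.i.d.\ across $\ell$. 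As the median is translation-equivariant, the error equals $|\median_{\ell}Y_\ell|$, so it suffices to exhibit a threshold $t=\Omega(1/B_2)$ with $\Pr[|\median_\ell Y_\ell|\ge t]=\Omega(1)$; then $\E[|\median_\ell Y_\ell|]\ge t\cdot\Omega(1)=\Omega(1/B_2)$.

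The heart of the argument is a single-counter anti-concentration bound. Fix $\ell$, set $t=\tfrac{1}{2B_2}$, and let $M=\{j:B_1<j\le B_1+B_2\}$, a set of $B_2$ items all of frequency $f_j\ge \tfrac{1}{B_1+B_2}\ge \tfrac{1}{2B_2}=t$ (using $B_1\le B_2$). Each $j\in M\setminus\{i\}$ collides with $i$ under $h_\ell$ with probability $k/B_2$, so the number of colliding items from $M$ is binomial with mean $\Theta(k)$; by the Chernoff bound (\Cref{thm:Chernoff}), at least $m=\Omega(k)$ of them collide except with probability $e^{-\Omega(k)}$. Now condition on the collision set $J=\{j:h_\ell(j)=h_\ell(i)\}$ and on the signs of the colliding items of frequency $<t$. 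Writing $a_j=f_j/t\ge 1$ for $j\in J\cap M$, the event $\{|Y_\ell|\le t\}$ becomes one of the form $\{|\sum_{j\in J\cap M}\sigma_j a_j-v|\le 1\}$ for a fixed $v$, where the $\sigma_j$ are Rademachers. Provided $|J\cap M|\ge m=\Omega(k)$, \Cref{LittlewoodOfford} bounds this probability by $O(1/\sqrt{m})=O(1/\sqrt k)$. Combining with the $e^{-\Omega(k)}$ failure probability of the collision event yields $\Pr[|Y_\ell|\le t]=O(1/\sqrt k)$, whence by symmetry $\Pr[Y_\ell\ge t]=\tfrac12-O(1/\sqrt k)$.

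With this in hand the median amplification is identical to the one used for standard Count--Sketch in the proof of \Cref{unlearnedmultiCS}. The number of indices $\ell$ with $Y_\ell\ge t$ is binomial with parameters $k$ and $q=\tfrac12-O(1/\sqrt k)$, and $|\median_\ell Y_\ell|\ge t$ whenever at least $(k+1)/2$ of the $Y_\ell$ exceed $t$. Since $q=\tfrac12-O(1/\sqrt k)$, a Stirling estimate (exactly as for the function $f(t)$ there) shows this binomial reaches $(k+1)/2$ with probability $\Omega(1)$, giving $\Pr[\median_\ell Y_\ell\ge t]=\Omega(1)$ and hence the claim for $k$ above a constant. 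For constant $k$ one argues directly: \Cref{lowerexplemma} applied with $I=M$ yields $\Pr[Y_\ell\ge c/B_2]=\Omega(1)$ for a suitable constant $c$, and by independence all $k$ counters exceed $c/B_2$ simultaneously with probability $\Omega(1)$, forcing the median above $c/B_2$.

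I expect the single-counter anti-concentration step to be the main obstacle, for two reasons. First, the two probability estimates must balance so that they combine to $O(1/\sqrt k)$: the collision count must concentrate tightly enough that the event ``fewer than $\Omega(k)$ medium collisions'' has probability $o(1/\sqrt k)$ (here $e^{-\Omega(k)}$, since the mean is $\Theta(k)$), while Littlewood--Offord simultaneously supplies the matching $O(1/\sqrt k)$. Second, care is needed because the hypothesis of \Cref{LittlewoodOfford} demands all coefficients to have absolute value $\ge 1$: the lighter colliding items (frequency $<t$) must be removed by conditioning on their signs before invoking the lemma on the $\Omega(k)$ medium items. Both points are precisely what make the bound $\Omega(1/B_2)$ attainable rather than the weaker $\Omega(1/(B_2\sqrt k\log k))$ of standard Count--Sketch: removing the heavy head leaves $\Theta(B_2)$ items of nearly equal weight $\Theta(1/B_2)$ available to collide with $i$, so the scale $t=\Theta(1/B_2)$ is exactly where anti-concentration takes hold.
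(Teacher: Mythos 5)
Your proposal is correct and follows essentially the same route as the paper's proof: reduce to single-counter anti-concentration at scale $\Theta(1/B_2)$, isolate $\Theta(B_2)$ medium items of frequency at least $1/(2B_2)$ whose collision count with $i$ is $\Omega(k)$ except with probability $o(1/\sqrt{k})$ (the paper uses the window $(B_2,2B_2]$ and Chebyshev where you use $(B_1,B_1+B_2]$ and Chernoff), apply Littlewood--Offord conditionally on the collision set, and amplify through the median via the same Stirling estimate, with the constant-$k$ case handled separately as in \Cref{unlearnedmultiCS}. One cosmetic fix: when $B_1<B_2$, colliding items with indices in $(B_1+B_2,2B_2]$ lie outside $M$ yet have frequency $\geq t$, so for your shift $v$ to be fixed you should condition on \emph{all} signs outside $J\cap M$, not only those of items of frequency below $t$ --- harmless, since \Cref{LittlewoodOfford} holds for every fixed $v$.
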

\begin{proof}
Like in the proof of the lower bound of~\Cref{unlearnedmultiCS} it suffices to show that for each $i$ the probability that the sum $S_\ell:=\sum_{j\in [n]\setminus ([B_1]\cup \{i\})} [h_\ell(j)=h_\ell(i)]s_\ell(j)f_j$ lies in the interval $I=\left[-1/(2B_2),1/(2B_2) \right]$ is $O(1/\sqrt{k})$. Then at least half the $(S_\ell)_{\ell \in [k]}$ are at least $1/(2B_2)$ with probability $\Omega(1)$ by an application of Stirling's formula, and it follows that $\E[|\tilde f_i-s(i)f_i|]=\Omega (1/B_2)$. 

Let $\ell \in [k]$ be fixed, $N_1=[2B_2]\setminus ([B_2]\cup \{i\})$, and $N_2=[n] \setminus (N_1\cup \{i\})$, and write 
\begin{align*}
S_{\ell}=\sum_{j\in N_1}[h_\ell(j)=h_\ell(i)]s_\ell(j)f_j+\sum_{j\in N_2}[h_\ell(j)=h_\ell(i)]s_\ell(j)f_j:=X_1+X_2.
\end{align*}
Now condition on the value of $X_2$. Letting $J=\{j\in N_1:h_\ell(j)=h_\ell(i)\}$ it follows by~\Cref{LittlewoodOfford} that 
\begin{align*}
\Pr[S_\ell \in I \mid X_2]=O\left( \sum_{J'\subseteq N_1}\frac{\Pr[J=J']}{\sqrt{|J'|+1}} \right)=O\left( \Pr[|J|< k/2]+1/\sqrt{k}\right).
\end{align*}
An application of Chebyshev's inequality gives that $\Pr[|J|< k/2]=O(1/k)$, so $\Pr[S_\ell\in I]=O(1/\sqrt{k})$. Since this bound holds for any possible value of $X_2$ we may remove the conditioning and the desired result follows.
\end{proof}
\begin{remark} 
The bound above is probably only tight for $B_1=\Theta(B_2)$. Indeed, we know that it cannot be tight for all $B_1\leq B_2$ since when $B_1$ becomes very small, the bound from the standard Count-Sketch with $k\geq 3$ takes over --- and this is certainly worse than the bound in the theorem. It is an interesting open problem (that requires a better anti-concentration inequality than the Littlewood-Offord lemma) to settle the correct bound when $B_1 \ll B_2$. 
\end{remark}

\subsection{Learned Count-Sketch using a noisy heavy hitter oracle}
In~\cite{hsu2018learningbased} it was demonstrated that if the heavy hitter oracle is noisy, misclassifying an item with probability $\delta$, then the expected error incurred by Count-Min for Zipfians is 
$$
O \left(\frac{1}{\log n} \frac{\delta^2 \ln^2 B_h+\ln^2(n/B_h)}{B-B_h}\right).
$$
Here $B_h$ is the number of buckets used to store the heavy hitters and $B$ is the total number of buckets. Taking $B_h=\Theta(B)=\Theta(B-B_h)$, this bound becomes $O \left( \frac{\delta^2 \ln^2 B+\ln^2(n/B)}{B \log n}\right)$. As $\delta$ varies in $\left[0,1\right]$, this interpolates between the expected error incurred in respectively the learned case with a perfect heavy hitter oracle and the classic case. In particular it is enough to assume that $\delta =O(\ln(n/B)/\ln(B))$ in order to obtain the results in the idealized case with a perfect oracle.

We now provide a similar analysis for the learned Count-Sketch. More precisely we assume that we allocate $B_h$ buckets to the heavy hitters and $B-B_h$ to the lighter items. We moreover assume access to a heavy hitter oracle $\textbf{HH}_\delta$ such that for each $i\in [n]$, $\Pr[\textbf{HH}_\delta(i)\neq \textbf{HH}_0(i)]\leq \delta$, where $\textbf{HH}_0$ is a perfect heavy hitter oracle that correctly classifies the $B_h$ heaviest items.
\begin{theorem}\label{thm:faultyoracle}
Learned Count-Sketch with a single hash functions, a heavy hitter oracle $\textbf{HH}_\delta$, $B_h=\Theta(B)$ bins allocated to store the $B_h$ items classified as heavy and $B-B_h=\Theta(B)$ bins allocated to a Count-Sketch of the remaining items, incurs an expected error of 
$$
 O \left( \frac{(\delta \log B+\log (n/B))(1+\delta \log B)}{ B \log n} \right).
$$
\end{theorem}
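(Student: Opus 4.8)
The plan is to fix a queried item $i$, bound its expected estimation error $\E[|\tilde f_i - s(i)f_i|]$ according to whether $i$ is truly heavy or light, and then sum these contributions weighted by $f_i$ and normalize by $H_n=\Theta(\log n)$. Write $\eta_j=[\textbf{HH}_\delta(j)=0]$ for the indicator that item $j$ is classified as non-heavy and hence fed to the Count-Sketch; by assumption $\Pr[\eta_j=1]\le \delta$ for $j\le B_h$ and $\Pr[\eta_j=1]\le 1$ for $j>B_h$, and the oracle's randomness is independent of the hash function $h$ and the signs $s$. An item classified as heavy is stored exactly and incurs zero error, so the only error arises from items that land in the sketch, for which $\tilde f_i - s(i)f_i = \sum_{j\ne i}\eta_j[h(j)=h(i)]s(j)f_j$.

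The heart of the argument is a bound on the expected sketch noise conditioned on $i$ being fed to the sketch, obtained by splitting the sum over $j$ into the truly heavy part $j\le B_h$ and the light part $j>B_h$. For the heavy part, the triangle inequality together with the independence of $\eta_j$, $h$ and $s$ gives
\begin{align*}
\E\Big[\Big|\sum_{j\le B_h,\,j\ne i}\eta_j[h(j)=h(i)]s(j)f_j\Big|\Big]\le \sum_{j\le B_h}\Pr[\eta_j=1]\,\Pr[h(j)=h(i)]\,f_j\le \frac{\delta}{B-B_h}\sum_{j\le B_h}f_j=O\Big(\frac{\delta\log B}{B}\Big),
\end{align*}
using $\sum_{j\le B_h}f_j=H_{B_h}=O(\log B)$ and $B-B_h=\Theta(B)$. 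For the light part I would reuse the analysis of~\Cref{thm:lcs1}, noting that each light item is effectively included with probability at most $1/(B-B_h)$ (the product of $\Pr[\eta_j=1]\le 1$ and $\Pr[h(j)=h(i)]$) and that both relevant estimates are monotone in this probability: the ``medium'' range $B_h<j\le B$ handled by the triangle inequality contributes $O(\log(\tfrac{B}{B_h})/(B-B_h))=O(1/B)$, while the tail $j>B$ handled by~\Cref{explemma} with $p=1/(B-B_h)$ contributes $O(\tfrac{1}{\sqrt{B-B_h}}\|f_{\mathrm{tail}}\|_2)=O(1/B)$. Adding the two parts, any item fed to the sketch has expected error $O\big(\tfrac{1+\delta\log B}{B}\big)$.

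It then remains to assemble the total error $\tfrac{1}{H_n}\sum_i \E[|\tilde f_i-f_i|]f_i$. A truly light item ($i>B_h$) lands in the sketch with probability at most $1$, contributing $\sum_{i>B_h}O\big(\tfrac{1+\delta\log B}{B}\big)f_i=O\big(\tfrac{(1+\delta\log B)\log(n/B)}{B}\big)$ since $\sum_{i>B_h}f_i=O(\log(n/B))$. A truly heavy item ($i\le B_h$) is fed to the sketch only when misclassified, which happens with probability at most $\delta$, contributing $\sum_{i\le B_h}\delta\cdot O\big(\tfrac{1+\delta\log B}{B}\big)f_i=O\big(\tfrac{\delta\log B(1+\delta\log B)}{B}\big)$ since $\sum_{i\le B_h}f_i=O(\log B)$. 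Summing these and dividing by $H_n=\Theta(\log n)$ yields exactly $O\big(\tfrac{(\delta\log B+\log(n/B))(1+\delta\log B)}{B\log n}\big)$.

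The main obstacle I anticipate is bookkeeping the conditioning and independence correctly rather than any single hard inequality. One must verify that the noise seen by a (possibly misclassified) heavy item $i$ is generated by the other items $j\ne i$ and is independent of $i$'s own classification, so that the per-item noise bound $O(\tfrac{1+\delta\log B}{B})$ applies uniformly whether $i$ is heavy-but-leaked or genuinely light; the factorization $\Pr[\eta_j=1\wedge h(j)=h(i)]=\Pr[\eta_j=1]\Pr[h(j)=h(i)]$ relies on the oracle being independent of the hashing, which should be recorded as part of the model (note that the termwise triangle-inequality bound does not even require the oracle's decisions to be independent across items). A secondary but crucial point is that it is the triangle inequality, \emph{not}~\Cref{explemma}, that gives the sharp $\delta\log B/B$ bound on the heavy leakage: the $L_2$ estimate would only yield the weaker $O(\sqrt{\delta}/\sqrt B)$, so the split must route the heavy head through the $L_1$/triangle-inequality estimate and the light tail through~\Cref{explemma}.
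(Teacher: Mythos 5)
Your proposal is correct and follows essentially the same route as the paper's proof: a per-item bound of $O\bigl(\frac{1+\delta\log B}{B}\bigr)$ obtained by routing the leaked heavy head $j\le B_h$ through the triangle inequality (giving $O(\delta\log B/B)$) and the light items through a second-moment estimate (giving $O(1/B)$), then assembling with the extra factor $\delta$ for misclassified heavy queries and normalizing by $H_n=\Theta(\log n)$. The only cosmetic difference is that you split the light part into a medium range ($B_h<j\le B$, triangle inequality) and a tail ($j>B$, \Cref{explemma}), whereas the paper applies one Jensen/second-moment bound to all of $j>B_h$ at once; since $B_h=\Theta(B)$ both give $O(1/B)$, and your remarks on oracle--hash independence and on why the heavy head must go through the $L_1$ bound match the paper's implicit assumptions.
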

\begin{proof}
Let $h:[n] \to [B-B_h]$ and $s:[n] \to \{-1,1\}$ be the hash functions used for the Count-Sketch. In the analysis to follow, it is enough to assume that they are $2$-independent. Suppose item $i$ is classified as non-heavy. For $j \in [n]$, let $\eta_j=[h(j)=h(i)]$, and let $\alpha_j$ be the indicator for item $j$ being classified as non-heavy. Then 
$$
|\tilde f_i-f_i|=\left| \sum_{j \in [n]\setminus \{i\}}\alpha_j \eta_j s(j) f_j\right| \leq \sum_{j \in [B_h] \setminus \{i\}} \alpha_j \eta_jf_j
+ \left| \sum_{j \in [n]\setminus (B_h \cup\{i\})}\alpha_j \eta_j s(j) f_j \right|:= S_1+S_2
$$
Note that $\E[S_1]=O\left(\frac{\delta \log B_h}{B-B_h} \right)=O\left(\frac{\delta \log B}{B} \right)$. For $S_2$, we let $p_j=\Pr[\alpha_j \eta_j=1]\leq \frac{1}{B-B_h}=O(\frac{1}{B})$. Then 
$$
\E[S_2] \leq (\E[S_2^2])^{1/2}=\left(\sum_{j \in [n]\setminus (B_h \cup\{i\})} p_jf_j^2 \right)^{1/2}=O \left(\frac{1}{B} \right),
$$
using that $\E[s(i) s(j)]=0$ for $i\neq j$ as $s$ is $2$-independent.
It follows that $\E[|\tilde f_i-f_i|]=O\left(\frac{1+\delta \log B}{B} \right)$, given that item $i$ is classified as non-heavy. Let $N=\sum_{i \in [n]}f_i=\Theta(\log n)$. As the probability of item $i\in [B_h]$ being classified as non-heavy is at most $\delta$, the the expected error is upper bounded by
$$
\frac{1}{N} \left (\delta \sum_{j \in [B_h]\setminus \{i\}} f_i +\sum_{j \in [n]\setminus (B_h \cup\{i\})} f_i\right) \cdot O\left(\frac{1+\delta \log B}{B} \right)= O \left( \frac{(\delta \log B+\log (n/B))(1+\delta \log B)}{ B \log n} \right),
$$
as desired.
\end{proof}
We see that with $\delta=1$, we recover the bound of $\frac{\log B}{B}$ presented in~\Cref{tbl:results} for the classic Count-Sketch. On the other hand, it is enough to assume that $\delta=O(1/\log B)$ in order to obtain the bound of $O \left( \frac{\log (n/B)}{B \log n} \right)$, which is what we obtain with a perfect heavy hitter oracle.

\section{Experiments}\label{sec:experiment}
In this section, we provide the empirical evaluation of CountMin, CountSketch and their learned counterparts under Zipfian distribution. Our empirical results complement the theoretical analysis provided earlier in this paper.
\subparagraph*{Experiment setup.} We consider a {\em synthetic} stream of $n = 10K$ items where the frequencies of the items follow the standard Zipfian distribution (i.e., with $\alpha=1$). To be consistent with our assumption in our theoretical analysis, we scale the frequencies so that the frequency of item $i$ is $1/i$. In our experiments, we vary the values of the number of buckets ($B$) and the number of rows in the sketch ($k$) as well as the number of predicted heavy items in the learned sketches. We remark that in this section we assume that the heavy hitter oracle predicts {\em without errors}.

We run each experiment 20 times and take the average of the estimation error defined in eq.~\eqref{eq:simplified-error}.

\begin{figure*}[!h]
\minipage{0.5\textwidth}
		\includegraphics[width=\textwidth]{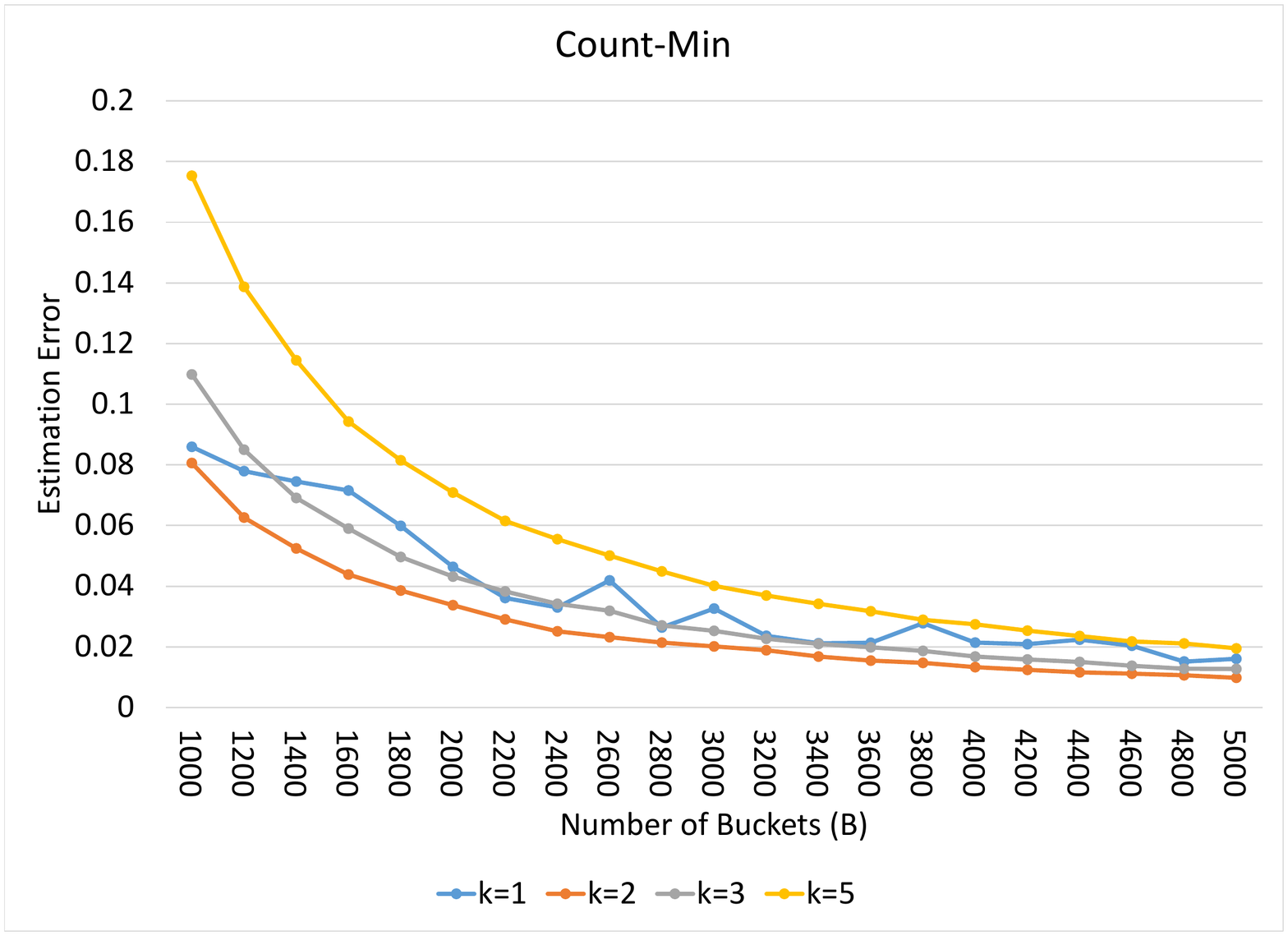}
\endminipage\hfill
\minipage{0.5\textwidth}
		\includegraphics[width=\textwidth]{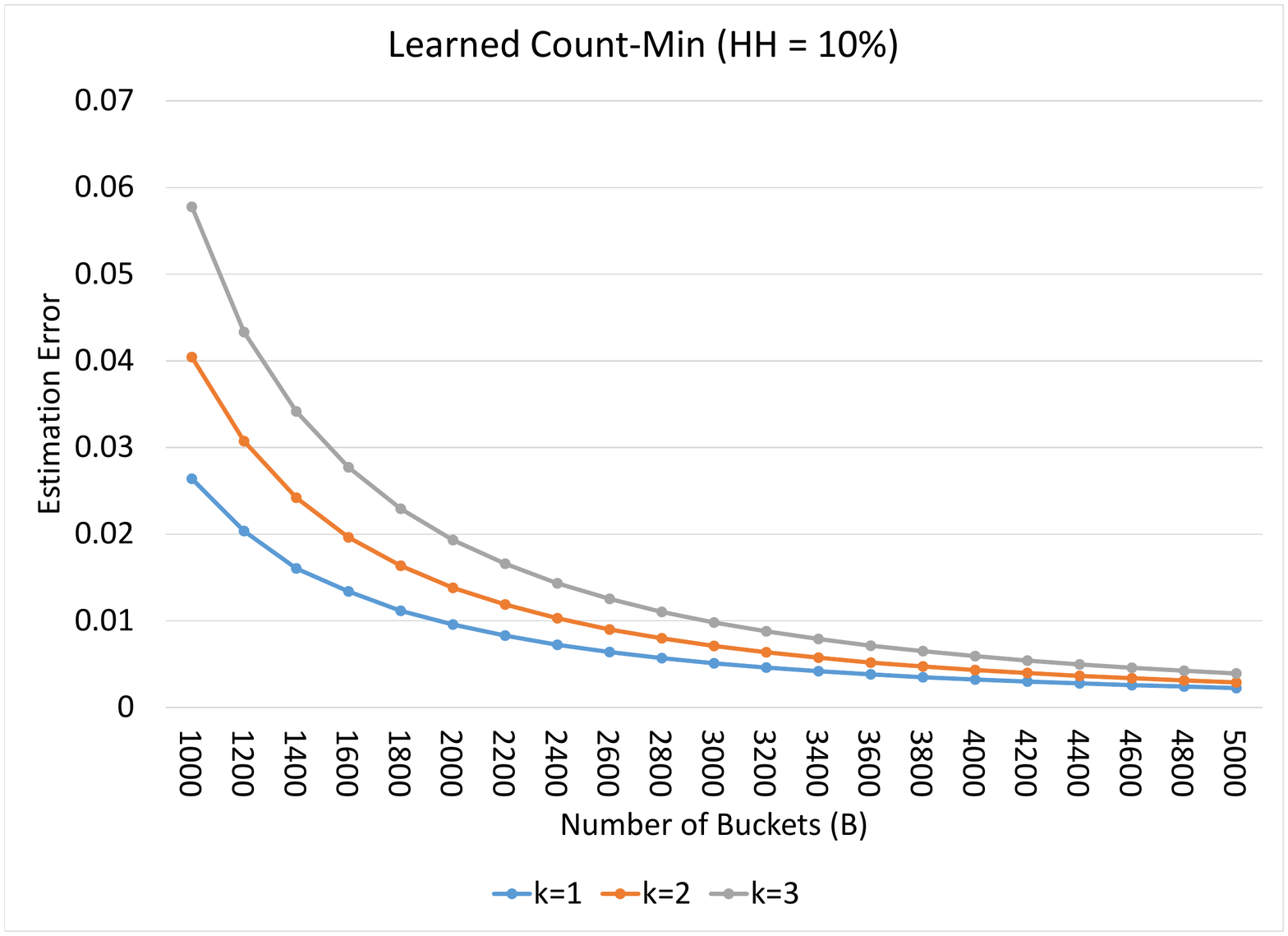}
\endminipage
\caption{The performance of (Learned) Count-Min with different number of rows.}
\label{fig:cm}
\end{figure*}
\begin{figure*}[!h]
\minipage{0.5\textwidth}
		\includegraphics[width=\textwidth]{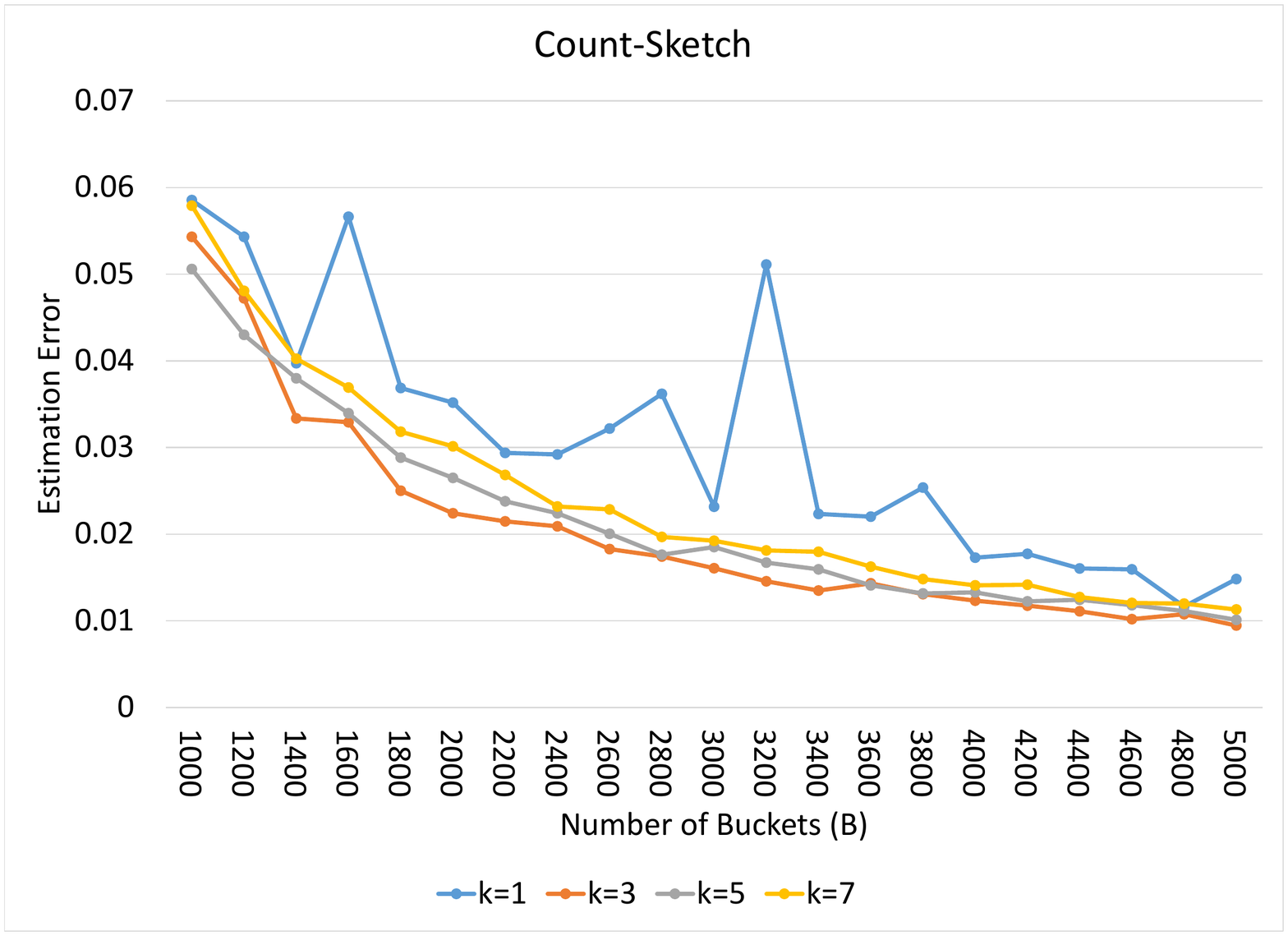}
\endminipage\hfill
\minipage{0.5\textwidth}
		\includegraphics[width=\textwidth]{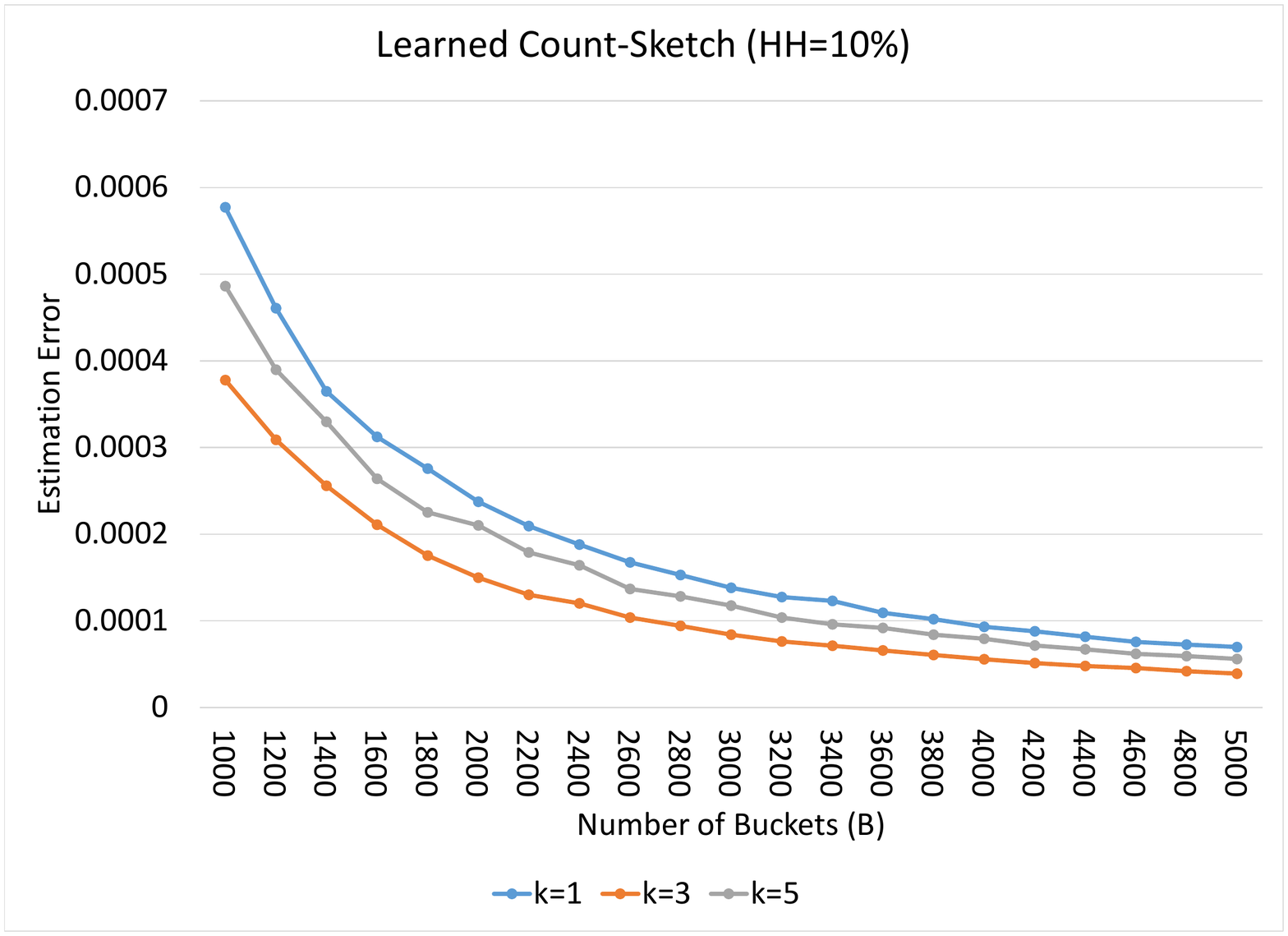}
\endminipage
\caption{The performance of (Learned) Count-Sketch with different number of rows.}
\label{fig:cs}
\end{figure*}

\subparagraph*{Sketches with the same number of buckets but different shapes.} 
Here, we compare the empirical performances of both standard and learned variants of Count-Min and Count-Sketch with varying choices for the parameter. More precisely, we fix the sketch size and vary the number of rows (i.e., number of hash functions) in the sketch.  

As predicted in our theoretical analysis, Figures~\ref{fig:cm} and~\ref{fig:cs} show that setting the number of rows to some constant larger than $1$ for standard CM and CS, leads to a smaller estimation error as we increase the size of the sketch. In contrast, in the learned variant, the average estimation error increases in $k$ being smallest for $k=1$, as was also predicted by our analysis.

\subparagraph*{Learned vs. Standard Sketches.} 
\begin{table}[!t]
\centering	
\resizebox{.95\textwidth}{!}{%
\renewcommand{\arraystretch}{.85}
\begin{tabular}{@{}l|lll|lll@{}}
\toprule
\textsf{\textbf{B}} & \textsf{\textbf{CM ($\boldsymbol{k=1}$)}} & \textsf{\textbf{CM ($\boldsymbol{k=2}$)}} & \textsf{\textbf{L-CM}} & \textsf{\textbf{CS ($\boldsymbol{k=1}$)}} & \textsf{\textbf{CS ($\boldsymbol{k=3}$)}} & \textsf{\textbf{L-CS}} \\ \midrule
1000 & 0.085934 & 0.080569 & \cellcolor[HTML]{9AFF99}0.026391 & 0.058545 & 0.054315 & \cellcolor[HTML]{9AFF99}0.000577138 \\
1200 & 0.077913 & 0.06266  & \cellcolor[HTML]{9AFF99}0.020361 & 0.054322 & 0.047214 & \cellcolor[HTML]{9AFF99}0.000460688 \\
1400 & 0.074504 & 0.052464 & \cellcolor[HTML]{9AFF99}0.016036 & 0.03972  & 0.033348 & \cellcolor[HTML]{9AFF99}0.00036492  \\
1600 & 0.071528 & 0.043798 & \cellcolor[HTML]{9AFF99}0.01338  & 0.056626 & 0.032925 & \cellcolor[HTML]{9AFF99}0.000312238 \\
1800 & 0.059898 & 0.038554 & \cellcolor[HTML]{9AFF99}0.011142 & 0.036881 & 0.025003 & \cellcolor[HTML]{9AFF99}0.000275648 \\
2000 & 0.046389 & 0.033746 & \cellcolor[HTML]{9AFF99}0.009556 & 0.035172 & 0.022403 & \cellcolor[HTML]{9AFF99}0.000237371 \\
2200 & 0.036082 & 0.029059 & \cellcolor[HTML]{9AFF99}0.008302 & 0.029388 & 0.02148  & \cellcolor[HTML]{9AFF99}0.000209376 \\
2400 & 0.032987 & 0.025135 & \cellcolor[HTML]{9AFF99}0.007237 & 0.02919  & 0.020913 & \cellcolor[HTML]{9AFF99}0.00018811  \\
2600 & 0.041896 & 0.023157 & \cellcolor[HTML]{9AFF99}0.006399 & 0.032195 & 0.018271 & \cellcolor[HTML]{9AFF99}0.00016743  \\
2800 & 0.026351 & 0.021402 & \cellcolor[HTML]{9AFF99}0.005694 & 0.036197 & 0.017431 & \cellcolor[HTML]{9AFF99}0.000152933 \\
3000 & 0.032624 & 0.020155 & \cellcolor[HTML]{9AFF99}0.005101 & 0.023175 & 0.016068 & \cellcolor[HTML]{9AFF99}0.000138081 \\
3200 & 0.023614 & 0.018832 & \cellcolor[HTML]{9AFF99}0.004599 & 0.051132 & 0.01455  & \cellcolor[HTML]{9AFF99}0.000127445 \\
3400 & 0.021151 & 0.016769 & \cellcolor[HTML]{9AFF99}0.004196 & 0.022333 & 0.013503 & \cellcolor[HTML]{9AFF99}0.000122947 \\
3600 & 0.021314 & 0.015429 & \cellcolor[HTML]{9AFF99}0.003823 & 0.022012 & 0.014316 & \cellcolor[HTML]{9AFF99}0.000109171 \\
3800 & 0.027798 & 0.014677 & \cellcolor[HTML]{9AFF99}0.003496 & 0.025378 & 0.013082 & \cellcolor[HTML]{9AFF99}0.000102035 \\
4000 & 0.021407 & 0.013279 & \cellcolor[HTML]{9AFF99}0.00322  & 0.017303 & 0.012312 & \cellcolor[HTML]{9AFF99}0.0000931   \\
4200 & 0.020883 & 0.012419 & \cellcolor[HTML]{9AFF99}0.002985 & 0.017719 & 0.011748 & \cellcolor[HTML]{9AFF99}0.0000878   \\
4400 & 0.022383 & 0.011608 & \cellcolor[HTML]{9AFF99}0.002769 & 0.016037 & 0.011097 & \cellcolor[HTML]{9AFF99}0.0000817   \\
4600 & 0.020378 & 0.011151 & \cellcolor[HTML]{9AFF99}0.002561 & 0.015941 & 0.010202 & \cellcolor[HTML]{9AFF99}0.0000757   \\
4800 & 0.015114 & 0.010612 & \cellcolor[HTML]{9AFF99}0.002406 & 0.011642 & 0.010757 & \cellcolor[HTML]{9AFF99}0.0000725   \\
5000 & 0.01603  & 0.009767 & \cellcolor[HTML]{9AFF99}0.002233 & 0.014829 & 0.009451 & \cellcolor[HTML]{9AFF99}0.0000698   \\ \bottomrule
\end{tabular}
}\caption{The estimation error of different sketching methods under Zipfian distribution. In this example, the number of unique items $n$ is equal to $10K$. In the learned variants, number of rows, $k$, is equal to $1$ and the {\em perfect} heavy hitter oracles detect top $c$-frequent items where $c=B/10$.}\label{tbl:experiments}
\end{table}
\begin{figure*}[!h]
\minipage{0.5\textwidth}
		\includegraphics[width=\textwidth]{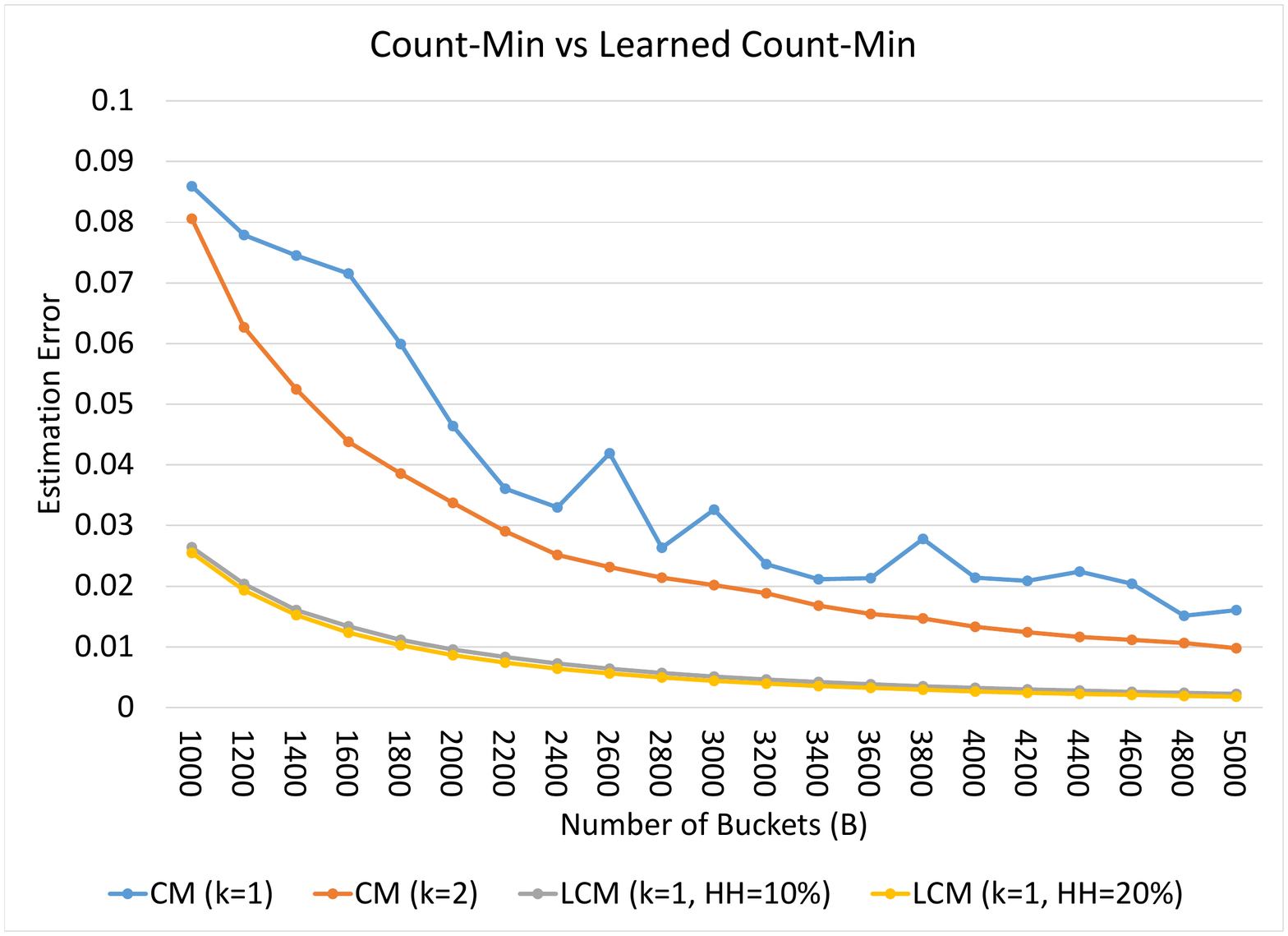}
\endminipage\hfill
\minipage{0.5\textwidth}
		\includegraphics[width=\textwidth]{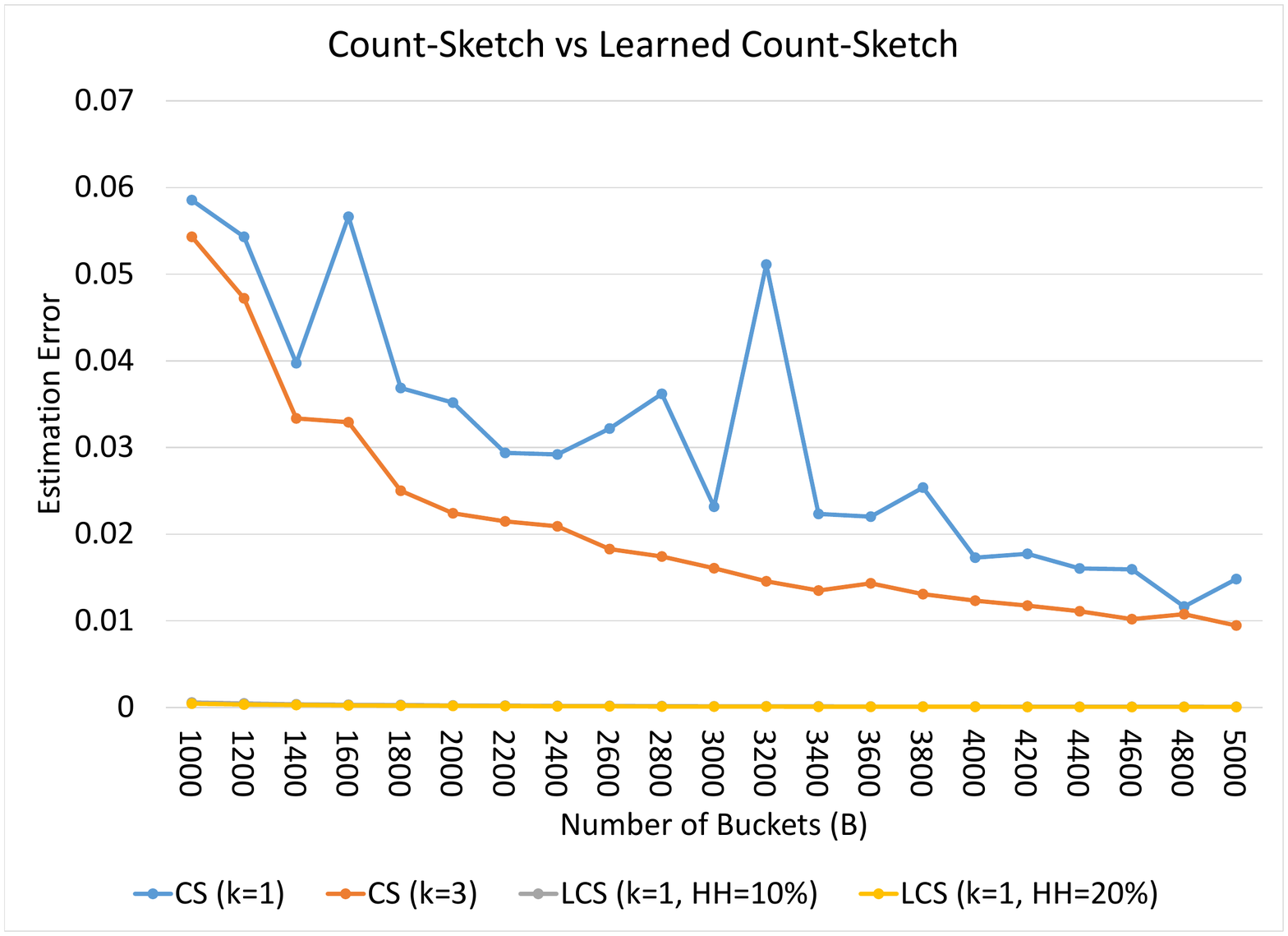}
\endminipage
\caption{The comparison of the performance of learned and standard variants of Count-Min and Count-Sketch. }
\label{fig:learned-standard}
\end{figure*}

In Figure~\ref{fig:learned-standard}, we compare the performance of learned variants of Count-Min and Count-Sketch with the standard Count-Min and Count-Sketch. To be fair, we assume that each bucket that is assigned a heavy hitter consumes two bucket of memory: one for counting the number of times the heavy item appears in the stream and one for indexing the heavy item in the data structure.     

We observe that the learned variants of Count-Min and Count-Sketch significantly improve upon the estimation error of their standard ``non-learned'' variants.  We note that the estimation errors for the learned Count-Sketches in Figure~\ref{fig:learned-standard} are not zero but very close to zero; see Table~\ref{tbl:experiments} for the actual values.

\bibliographystyle{alpha}
\bibliography{lib}
\appendix

\section{Count-Min for General Zipfian (with $\alpha\neq 1$)}\label{appendix}
In this appendix we provide an analysis of the expected error with Count-Min in the case with input coming from a general Zipfian distribution, i.e., $f_i \propto \frac{1}{i^{\alpha}}$, for some fixed $\alpha>0$. By scaling we  can assume that $f_i = \frac{1}{i^{\alpha}}$ with no loss of generality. Our results on the expected error is presented in~\Cref{tbl:results3} below. We start by analyzing the standard Count-Min sketch that does not have access to a machine learning oracle.
\begin{table}[h!]
\centering	
\resizebox{.85\textwidth}{!}{%
\renewcommand{\arraystretch}{1.5}
\begin{tabular}{l|l l} 
\toprule
& $k=1$ & $k > 1$\\
\midrule
\textsf{\textbf{CM, $\alpha<1$}} & $\Theta \left( \frac{n^{2-2\alpha}}{B} \right)$	&	$\Theta \left({k{n^{2-2\alpha}}\over B} \right)$ \\
\textsf{\textbf{CM, $\alpha>1$}} & $O\left( \frac{1}{B}\right)$	&	$(O(1))^k (\log (B))^{k/\alpha+1} \cdot \left( \frac{k^k}{B^k}+\frac{k^\alpha}{B^\alpha}\right) $ and $\Omega \left( \frac{k^k}{B^k} + \frac{k^\alpha}{(B \log k)^\alpha}\right)$\\
\textsf{\textbf{L-CM, $\alpha<1$}} & $\Theta \left( \frac{n^{2-2\alpha}}{B} \right)$ & $\Omega \left( \frac{n^{2-2\alpha}}{B} \right)$\\ 
\textsf{\textbf{L-CM, $\alpha>1$}} & $\Theta \left( B^{1-2\alpha} \right)$ & $\Omega \left( B^{1-2\alpha} \right)$ \\ 
\bottomrule
\end{tabular}
}
\caption{The (scaled) expected errors $\Err(\sF, \tilde{\sF}_{\sA}) =\sum_{i \in [n]} f_i |f_i - \tilde{f}_i|$ of classic and learned Count-Min with $k$ hash functions when the input has a Zipfian distribution with exponent  $\alpha\neq1$.  
The expected errors can be found by normalizing with $\sum_{i \in [n]} f_i$ which is $\Theta(n^{1-\alpha})$ for $\alpha<1$ and $\Theta(1)$ for $\alpha>1$. We note that when $k>1$ is a constant, the upper and lower bounds for CM for $\alpha>1$ are within logarithmic factors of each other. In particular we obtain the combined bound of $\tilde\Theta\left( \frac{1}{B^k}+\frac{1}{B^k} \right)$ in this case, demonstrating that the bounds, even if they appear complicated, are almost tight.}\label{tbl:results3}
\end{table}

\subsection{Standard Count-Min}
We begin by considering the case $\alpha<1$, in which case we have the following result.
\begin{theorem}\label{thm:CSsmallalpha}
Let $0<\alpha <1$ be fixed and $f_i=1/i^\alpha$ for $i \in [n]$. Let $n,B,k \in \N$ with $k\geq 1$ and $B\leq n/k$. Let further $h_1,\dots,h_k: [n] \to [B]$ be independent  and truly random hash functions. For $i \in [n]$ define the random variable $\tilde{f_i}=\min_{\ell \in [k]} \left( \sum_{j\in [n]} [h_{\ell}(j)=h_\ell(i)]f_j \right)$. For any $i\in [n]$ it holds that $\E[|\tilde f_i- f_i|]=\Theta \left( \frac{n^{1-\alpha}}{B} \right)$.
\end{theorem}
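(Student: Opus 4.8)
The plan is to mirror the proof of \Cref{thm:simplecm}, exploiting that for $\alpha<1$ the distribution is flat enough that the tail dominates and additional hash functions cannot help.

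\textbf{Upper bound.} Since $\tilde f_i-f_i=\min_{\ell\in[k]}Z_\ell$ with $Z_\ell=\sum_{j\ne i}[h_\ell(j)=h_\ell(i)]f_j\ge 0$, we have $|\tilde f_i-f_i|\le Z_1$, and by linearity of expectation $\E[Z_1]=\frac1B\sum_{j\ne i}f_j\le\frac1B\sum_{j=1}^n j^{-\alpha}=\Theta(n^{1-\alpha}/B)$. This bounds $\E[\min_\ell Z_\ell]$ for \emph{every} $k\ge 1$, so the upper bound is immediate and no analog of \Cref{simpleanalysis} is needed: unlike the $\alpha=1$ case, the head $\sum_{j\le B}f_j=\Theta(B^{1-\alpha})$ already contributes only $\Theta(B^{-\alpha})=O(n^{1-\alpha}/B)$ after division by $B$, so a single hash function is already tight.

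\textbf{Lower bound.} Here lies the content: I must show that more hash functions cannot reduce the error below $\Omega(n^{1-\alpha}/B)$. Set $N_2=[n]\setminus([B]\cup\{i\})$ and $W_\ell=\sum_{j\in N_2}f_j[h_\ell(j)=h_\ell(i)]$, so that $|\tilde f_i-f_i|\ge\min_{\ell\in[k]}W_\ell$ while $\E[W_\ell]=\frac1B\sum_{j\in N_2}f_j=\Theta(n^{1-\alpha}/B)$. It suffices to prove $\min_\ell W_\ell=\Omega(\E[W_\ell])$ with probability $\Omega(1)$. To that end I apply Bennett's inequality (\Cref{thm:Bennett}) to $S_\ell=W_\ell-\E[W_\ell]$, a sum of independent mean-zero terms bounded by $M=B^{-\alpha}$ with $\sigma^2=\Var[S_\ell]\le\frac1B\sum_{j>B}j^{-2\alpha}$. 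Taking $t=\E[W_\ell]/2$, the crucial simplification is that $t/M=(n/B)^{1-\alpha}$ \emph{independently} of the variance regime, so using the asymptotics of $h$ (\Cref{asymptotics}) the Bennett exponent $\frac{\sigma^2}{M^2}h\bigl(Mt/\sigma^2\bigr)$ reduces to $\Omega\bigl((n/B)^{1-\alpha}\log(n/B)\bigr)$ in all of the cases $\alpha<1/2$, $\alpha=1/2$, $\alpha>1/2$ (these differ only in how $\sigma^2$ splits between the prefactor and the argument of $h$). Invoking $B\le n/k$ then yields $\Pr[W_\ell\le\E[W_\ell]/2]\le\exp(-\Omega(k^{1-\alpha}\log k))$.

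\textbf{Combining over the $k$ functions, and the main obstacle.} As the $(W_\ell)_{\ell\in[k]}$ are i.i.d., the probability that all exceed $\E[W_\ell]/2$ is $(1-\exp(-\Omega(k^{1-\alpha}\log k)))^k$, and the main obstacle is keeping this $\Omega(1)$ across the whole admissible range $2\le k\le n/B$. For $k$ above a constant $k_0$ this is clear, since $k\cdot\exp(-\Omega(k^{1-\alpha}\log k))\to 0$ (the positive power $k^{1-\alpha}$ dominates the single $\log k$ lost to the union bound). For constant $k<k_0$ Bennett may be too weak, and I instead use a second-moment (Paley--Zygmund) estimate: $\Var[W_\ell]/\E[W_\ell]^2\le \frac{B\sum_{j>B}f_j^2}{(\sum_{j>B}f_j)^2}$ is bounded by a universal constant whenever $B\le n/2$ (one checks this in each of the three $\alpha$-regimes), giving $\Pr[W_\ell>\E[W_\ell]/2]=\Omega(1)$, and independence then handles the finitely many small $k$. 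In either case $\min_\ell W_\ell=\Omega(n^{1-\alpha}/B)$ with probability $\Omega(1)$, so $\E[|\tilde f_i-f_i|]=\Omega(n^{1-\alpha}/B)$, completing the proof.
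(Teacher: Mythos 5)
Your proof is correct and takes essentially the same route as the paper's: the upper bound via the expectation of a single hash function's sum, and the lower bound by restricting to the tail $N_2$, applying Bennett's inequality with $M=B^{-\alpha}$ and $t=\E[W_\ell]/2$ to get failure probability $\exp(-\Omega(k^{1-\alpha}\log k))$ using $B\le n/k$, and multiplying over the $k$ independent hash functions. Your only departure is the explicit Paley--Zygmund patch for constant $k$, which the paper absorbs into the assertion $(1-k^{-\Omega(k^{1-\alpha})})^k=\Omega(1)$ --- Bennett already suffices there since $n/B\ge k\ge 2$ makes the failure probability a constant bounded below $1$ --- so this is a careful but inessential addition.
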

We again note the phenomenon that with a \emph{total} of $B$ buckets, i.e., replacing $B$ by $B/k$ in the theorem, the expected error is $\Theta\left( \frac{kn^{1-\alpha}}{B} \right)$, which only increases as we use more hash functions. 
\begin{proof}
For a fixed $\ell \in [k]$ we have that
\begin{align*}
\E\left[\sum_{j\in [n]\setminus \{i\}} [h_{\ell}(j)=h_\ell(i)]f_j \right]=\frac{1}{B}\sum_{j \in[n]\setminus \{i\}}\frac{1}{j^\alpha}=O \left( \frac{n^{1-\alpha}}{B} \right),
\end{align*}
and so $\E[|\tilde f_i- f_i|]=O \left( \frac{n^{1-\alpha}}{B} \right)$. 

For the lower bound, we define $N=[n]\setminus ([B] \cup \{i\})$ and for $\ell \in [k]$, $X_\ell=\sum_{j \in N}[h_{\ell}(j)=h_\ell(i)]f_j$. Simple calculations yield that $\E[X_\ell]=\Theta \left( \frac{n^{1-\alpha}}{B} \right)$ and 
\begin{align*}
\Var[X_\ell]= \begin{cases}
\Theta \left( \frac{\log \left( \frac{n}{B} \right)}{B} \right), & \alpha=1/2, \\
\Theta \left(\frac{n^{1-2\alpha}}{B} \right), & \alpha <1/2, \\
\Theta \left(B^{-2\alpha} \right), & \alpha>1/2.
\end{cases}
\end{align*}
Using Bennett's inequality (\Cref{thm:Bennett}), with $M=B^{-\alpha}$ we obtain that 
\begin{align*}
\Pr[X_\ell\leq \E[X_{\ell}]/2]\leq \begin{cases}
\exp \left(- \Omega(\log (n/B) h(\left(\frac{n}{B} \right)^{1/2} \frac{1}{\log (n/B)})) \right), & \alpha=1/2, \\
\exp \left(- \Omega\left(  \left( \frac{n}{B} \right)^{1-2\alpha} h\left( \left( \frac{n}{B} \right)^{\alpha}\right)\right) \right), & \alpha <1/2, \\
\exp \left(- \Omega\left( h\left( \left( \frac{n}{B} \right)^{1-\alpha}\right)\right) \right), & \alpha>1/2.
\end{cases}
\end{align*}
Using that $n\geq k B$ and~\Cref{asymptotics} we in either case obtain that 
\begin{align*}
\Pr[X_\ell\leq \E[X_{\ell}]/2]=\exp\left(-\Omega( k^{1-\alpha} \log k) \right)=k^{-\Omega(k^{1-\alpha})}.
\end{align*}
As the events $(X_\ell> \E[X_{\ell}]/2)_{\ell \in [k]}$ are independent, they happen simultaneously with probability $(1-k^{-\Omega(k^{1-\alpha})})^k=\Omega(1)$. If they all occur, then $|\tilde f_i- f_i|=\Omega \left( \frac{n^{1-\alpha}}{B} \right)$, so it follows that $\E[|\tilde f_i- f_i|]=\Omega \left( \frac{n^{1-\alpha}}{B} \right)$, as desired.
\end{proof}
Next, we consider the case $\alpha>1$. In this case we have the following theorem where we obtain the result presented in~\Cref{tbl:results3} by replacing $B$ with $B/k$.
\begin{theorem}\label{thm:CSexponent}
Let $\alpha >1$ be fixed and $f_i=1/i^\alpha$ for $i \in [n]$. Let $n,B,k \in \N$ with $k\geq 2$ and $B\leq n/k$. Let further $h_1,\dots,h_k: [n] \to [B]$ be independent  and truly random hash functions. For $i \in [n]$ define the random variable $\tilde{f_i}=\min_{\ell \in [k]} \left( \sum_{j\in [n]} [h_{\ell}(j)=h_\ell(i)]f_j \right)$. For any $i\in [n]$ it holds that
\begin{align*}
\E[|\tilde f_i- f_i|]\leq C^k(\log (B))^{k/\alpha+1} \cdot \left( \frac{1}{B^k}+\frac{1}{B^\alpha}\right),
\end{align*}
for some constant $C$ depending only on $\alpha$. Furthermore, $\E[|\tilde f_i- f_i|]=\Omega \left( \frac{1}{B^k} + \frac{1}{(B \log k)^\alpha}\right)$.
\end{theorem}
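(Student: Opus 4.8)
The plan is to fix $i$, write the per-table error as $Z_\ell = \sum_{j\neq i}[h_\ell(j)=h_\ell(i)]f_j \ge 0$ so that $\tilde f_i - f_i = \min_{\ell\in[k]} Z_\ell$, and to split each $Z_\ell$ according to the head $N_1 = [B]\setminus\{i\}$ and the tail $N_2 = [n]\setminus([B]\cup\{i\})$, writing $Z_\ell = H_\ell + T_\ell$ with $H_\ell = \sum_{j\in N_1}[h_\ell(j)=h_\ell(i)]f_j$ and $T_\ell = \sum_{j\in N_2}[h_\ell(j)=h_\ell(i)]f_j$. The starting point for the upper bound is the elementary inequality $\min_\ell Z_\ell \le \min_\ell H_\ell + \max_\ell T_\ell$ (take the table minimizing $H_\ell$), so that $\E[|\tilde f_i-f_i|] \le \E[\min_\ell H_\ell] + \E[\max_\ell T_\ell]$, and I would bound the two pieces separately. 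The tail is the easy piece: since $\alpha>1$ the total tail mass converges, $\E[T_\ell] = \tfrac1B\sum_{j>B}j^{-\alpha} = O(B^{-\alpha})$, so the crude bound $\E[\max_\ell T_\ell] \le \sum_\ell \E[T_\ell] = k\cdot O(B^{-\alpha})$ already suffices once the factor $k$ is absorbed into the $C^k$ of the claimed bound.

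The head is where the real work lies, and here I would reuse the union-bound argument from the proof of~\Cref{simpleanalysis}. For $t$ in the relevant range $(0,O(1)]$ (note $H_\ell \le \sum_{j\le B}f_j = O(1)$) and a parameter $s=t/m$, if $H_\ell \ge t$ then either some head item of weight $>s$ collides with $i$ or at least $m$ head items collide. Since the head items of weight exceeding $s$ are exactly those with index below $s^{-1/\alpha}$, a union bound gives $\Pr[H_\ell\ge t] \le \frac{(m/t)^{1/\alpha}+1}{B} + (e/m)^m$; choosing $m = \Theta(\log B/\log\log B)$ kills the second term and yields the clean per-table estimate $\Pr[H_\ell\ge t] \le \frac{C'(\log B)^{1/\alpha}}{B\,t^{1/\alpha}}$. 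By independence of the $k$ hash functions, $\E[\min_\ell H_\ell] = \int_0^{O(1)} \Pr[H_\ell\ge t]^k\,dt$, and I would split this integral at $t_0 = (\log B)/B^\alpha$, the point where the per-table bound reaches $1$. The region $t<t_0$ contributes at most $t_0 = O((\log B)/B^\alpha)$, while on $t\ge t_0$ the integrand is at most $\frac{(C')^k(\log B)^{k/\alpha}}{B^k}\,t^{-k/\alpha}$.

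Evaluating $\int_{t_0}^{O(1)} t^{-k/\alpha}\,dt$ is the fiddly but routine heart of the argument, and I expect the bookkeeping here --- matching the exact power of $\log B$ --- to be the main (if modest) obstacle. The integral is governed by the sign of $1-k/\alpha$: when $k<\alpha$ it is dominated by the upper endpoint and produces the $\frac{(\log B)^{k/\alpha}}{B^k}$ term; when $k>\alpha$ it is dominated by the lower endpoint $t_0$ and collapses into an $O\!\left(\frac{\log B}{B^\alpha}\right)$ contribution; and the boundary case $k=\alpha$ produces an extra logarithmic factor, which is precisely what forces the exponent $k/\alpha+1$ in the claimed factor $(\log B)^{k/\alpha+1}$. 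Collecting the three regimes, with the constants $\alpha/|k-\alpha|$ and $(C')^k$ absorbed into $C^k$, gives $\E[\min_\ell H_\ell] = O\!\left(C^k(\log B)^{k/\alpha+1}\big(B^{-k}+B^{-\alpha}\big)\right)$, which together with the tail bound completes the upper bound.

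For the lower bound I would argue the two terms separately and combine them via $a+b\le 2\max(a,b)$. For the $\Omega(B^{-k})$ term, let $A$ be the event that the heaviest admissible item (item $1$, or item $2$ if $i=1$, which has weight $\Omega(1)$) collides with $i$ in all $k$ tables; by independence $\Pr[A]=B^{-k}$ and on $A$ every $Z_\ell=\Omega(1)$, so $\E[\min_\ell Z_\ell] \ge \Pr[A]\cdot\Omega(1)=\Omega(B^{-k})$. For the $\Omega\big((B\log k)^{-\alpha}\big)$ term, I would use the tail window $R=(B,\,2B\log k]\subseteq[n]$, valid since $2B\log k\le kB\le n$: its $\Theta(B\log k)$ items each land in $i$'s bucket with probability $1/B$, so per table at least one of them collides with probability $\ge 1-(1-1/B)^{|R|}=1-k^{-\Theta(1)}$, which exceeds $1-\tfrac1{2k}$ once $k$ is past a constant (bounded $k$ is immediate, since then $(B\log k)^{-\alpha}=\Theta(B^{-\alpha})$ follows from the single-table estimate). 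Any such collision forces $Z_\ell \ge f_{2B\log k}=\Omega\big((B\log k)^{-\alpha}\big)$, and since the $k$ tables are independent, all of them achieve this simultaneously with probability $(1-\tfrac1{2k})^k=\Omega(1)$, whence $\E[\min_\ell Z_\ell]=\Omega\big((B\log k)^{-\alpha}\big)$, as required.
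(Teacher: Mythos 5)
Your proposal is correct and follows essentially the same route as the paper's proof: the same head/tail split at index $B$, the same union bound (``either an item of weight $>s$ collides or at least $t/s$ items collide'') yielding a per-table tail estimate of the form $\frac{(\log B)^{1/\alpha}}{Bt^{1/\alpha}}$, the same integration of the $k$-th power split by the sign of $k/\alpha-1$, and the same two lower-bound events (item $1$ colliding in all $k$ tables for $\Omega(B^{-k})$, and saturating a window of $\Theta(B\log k)$ tail items in every table for $\Omega((B\log k)^{-\alpha})$). The only deviations are cosmetic --- a fixed $m=\Theta(\log B/\log\log B)$ instead of the paper's $t$-dependent choice $t/s=\Theta(\log(Bt^{1/\alpha}))$, the crude bound $\E[\max_\ell T_\ell]\le k\,\E[T_\ell]$ absorbed into $C^k$ in place of the paper's conditioning argument, and your explicit handling of $i=1$ and constant $k$, which the paper glosses over.
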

\begin{proof}
Let us start by proving the lower bound. Let $N=[\lfloor B\log k \rfloor]$. With probability 
$$
\left(1- \left(1-1/B \right)^{|N\setminus \{i\}|} \right)^k\geq \left(1-e^{\frac{|N\setminus \{i\}|}{B}} \right)^k=\Omega(1)
$$
 it holds that for each $\ell\in[k]$ there exists $j\in N\setminus\{i\}$ such that $h_{\ell}(j)=h_{\ell}(i)$. In this case $|\tilde f_i- f_i|\geq \frac{1}{(B \log k)^\alpha}$, so it follows that also $\E[|\tilde f_i- f_i|]\geq \frac{1}{(B \log k)^\alpha}$.
Note next that with probability $1/B^k$, $h_\ell(1)=h_{\ell}(i)$ for each $\ell \in [k]$. If this happens, $|\tilde f_i- f_i|\geq 1$, so it follows that $\E[|\tilde f_i- f_i|]\geq 1/B^k$ which is the second part of the lower bound.

Next we prove the upper bound. The technique is very similar to the proof of~\Cref{thm:simplecm}. We define $N_1=[B]\setminus \{i\}$ and $N_2=[n] \setminus ([B] \cup \{i\})$. We further define $X_1^{(\ell)}=\sum_{j\in N_1} [h_{\ell}(j)=h_\ell(i)]f_j$ and $X_2^{(\ell)}=\sum_{j\in N_2} [h_{\ell}(j)=h_\ell(i)]f_j$ for $\ell \in [k]$. 
Note that for any $\ell\in [k]$, $\E[X_2^{(\ell)}]=O\left(\frac{1}{B^\alpha} \right)$, so it suffices to bound $\E[\min_{\ell \in [k]}(X_1^{(\ell)})]$. Let $t\geq 3/B^\alpha$ be given. A similar union bound to that given in the proof of~\Cref{thm:simplecm} gives that for any $s\leq t$,
\begin{align*}
\Pr[X_1^{(\ell)}\geq t]\leq \binom{B}{t/s}\frac{1}{B^{t/s}}+\frac{1}{Bs^{1/\alpha}}\leq \left( \frac{es}{t} \right)^{t/s}+ \frac{(t/s)^{1/\alpha}}{Bt^{1/\alpha}}.
\end{align*}
Choosing $s$ such that $t/s=\Theta(\log (Bt^{1/\alpha}))$ is an integer, we obtain the bound
\begin{align*}
\Pr[X_1^{(\ell)}\geq t]  \leq C_1 \frac{(\log (Bt^{1/\alpha}))^{1/\alpha}}{Bt^{1/\alpha}}=C_1 \frac{(\log (Bt^{\gamma}))^{\gamma}}{Bt^{\gamma}},
\end{align*}
where we have put $\gamma=1/\alpha$ and $C_1$ is a universal constant.
Let $Z=\min_{\ell \in [k]}(X_1^{(\ell)})$. Note that $Z\leq \sum_{j=1}^\infty 1/j^{\alpha}\leq C_2$, where $C_2$ is a constant only depending on $\alpha$. Thus
\begin{align*}
\E[Z]&\leq \frac{3}{B^\alpha}+\int_{3/B^\alpha}^{C_2} \Pr[Z \geq t]\, dt\leq \frac{3}{B^\alpha}+\int_{3/B^\alpha}^{C_2} \left(C_1 \frac{(\log (Bt^{\gamma}))^{\gamma}}{Bt^{\gamma}} \right)^k\, dt \\
& \leq \frac{3}{B^\alpha}+\frac{C_3^k \log(B)^{k/\alpha}}{B^k}
\int_{3/B^\alpha}^{C_2} \frac{1}{t^{k/\alpha}}\, dt
\end{align*}
for some constant $C_3$ (depending on $\alpha$). If $k\leq \alpha$, the integral is $O(\log B)$ and this bound suffices. If $k>\alpha$, the integral is $O(B^{k-\alpha})$, which again suffices to give the desired bound.
\end{proof}

\begin{remark}
As discussed in~\Cref{remark:lowindependence} we only require the hash functions to be $O(\log B)$-independent in the proof of the upper bound of~\Cref{thm:CSexponent}. In the upper bound of~\Cref{thm:CSsmallalpha} we only require the hash functions to be $2$-independent.
\end{remark}

\subsection{Learned Count-Min}
We now proceed to analyse the learned Count-Min algorithm which has access to an oracle which, given an item, predicts whether it is among the $B$ heaviest items. The algorithm stores the frequencies of the $B$ heaviest items in $B$ individual buckets, always outputting the exact frequency when queried one of these items. On the remaining items it performs a regular Count-Min sketch with a single hash function hashing to $B$ buckets. 
\begin{theorem}\label{thm:LCMgenZipf}
Let $\alpha >0$ be fixed and $f_i=1/i^\alpha$ for $i \in [n]$. Let $n,B\in \N$ with $2B\leq n$ and $h: [n] \to [B]$ be a $2$-independent hash functions. For $i \in [n]$ define the random variable $\tilde{f_i}= \sum_{j\in [n]\setminus [B]} [h(j)=h(i)]f_j $. Then
\begin{align*}
\E[|\tilde f_i- f_i|]=
\begin{cases}
\Theta \left( \frac{n^{1-\alpha}}{B} \right), & \alpha<1 \\
\Theta \left( B^{-\alpha} \right), & \alpha>1.
\end{cases}
\end{align*}
\end{theorem}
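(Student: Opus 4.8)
The plan is to exploit the fact that the learned Count-Min estimate never underestimates, so that the error random variable is \emph{nonnegative}; this collapses the whole problem to a deterministic computation and removes any need for a concentration argument. I would first restrict attention to a light item $i\in[n]\setminus[B]$, which is exactly the regime in which the sketch is queried: the $B$ heaviest items are stored in private buckets and incur zero error, and for heavy $i$ the stated bound genuinely fails (there $|\tilde f_i-f_i|\approx f_i=i^{-\alpha}$, which can be as large as a constant), so this restriction is the intended one. For such $i$ the diagonal term $j=i$ contributes $f_i$ to $\tilde f_i$, hence
\begin{align*}
|\tilde f_i - f_i| = \tilde f_i - f_i = \sum_{j \in [n]\setminus([B]\cup\{i\})} [h(j)=h(i)]\, f_j \geq 0.
\end{align*}

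Next I would take expectations term by term. Since $h$ is $2$-independent, $\E[[h(j)=h(i)]]=1/B$ for every $j\neq i$, so by linearity of expectation
\begin{align*}
\E[|\tilde f_i - f_i|] = \frac{1}{B} \sum_{j \in [n]\setminus([B]\cup\{i\})} \frac{1}{j^\alpha} = \frac{1}{B}\left(\sum_{j=B+1}^n \frac{1}{j^\alpha} - \frac{1}{i^\alpha}\right).
\end{align*}
Crucially, this is an \emph{exact} equality rather than a one-sided estimate, so both the upper and the lower bound of the theorem reduce to a single deterministic estimate of the truncated generalized harmonic sum $T:=\sum_{j=B+1}^n j^{-\alpha}$. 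I would obtain matching two-sided bounds on $T$ by the standard monotone sandwich $\int_{B+1}^{n+1} x^{-\alpha}\,dx \le T \le \int_{B}^{n} x^{-\alpha}\,dx$.

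Evaluating the antiderivative then splits into the two regimes. For $\alpha<1$ we get $T=\Theta\!\left(n^{1-\alpha}-B^{1-\alpha}\right)$, and the hypothesis $2B\le n$ forces $n^{1-\alpha}\ge 2^{1-\alpha}B^{1-\alpha}$, so the subtracted term is a constant factor smaller and $T=\Theta(n^{1-\alpha})$, giving $\E[|\tilde f_i-f_i|]=\Theta(n^{1-\alpha}/B)$. For $\alpha>1$ the tail converges and $T=\Theta(B^{1-\alpha})$, giving $\E[|\tilde f_i-f_i|]=\Theta(B^{-\alpha})$. The only point needing a word of care is the correction $-i^{-\alpha}$: since $i>B$ it is at most $(B+1)^{-\alpha}$, i.e.\ bounded by a single non-maximal summand of $T$, and deleting it leaves $\sum_{j=B+2}^n j^{-\alpha}$, which is still $\Theta(T)$ in both regimes. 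I do not expect a genuine obstacle here; the essential observation is simply that the nonnegativity of the Count-Min error turns the bound into an exact harmonic-sum estimate, in sharp contrast to the standard Count-Min lower bound (\Cref{thm:simplecm}), where a one-sided deviation of the tail contribution had to be controlled via Bennett's inequality.
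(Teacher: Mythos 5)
Your proposal is correct and takes essentially the same route as the paper: the paper's (one-line) proof likewise uses that for light $i$ the diagonal term cancels, so that by linearity of expectation $\E[|\tilde f_i - f_i|] = \frac{1}{B}\sum_{j\in[n]\setminus([B]\cup\{i\})} j^{-\alpha}$ exactly, and then reads off the two regimes from this truncated sum. Your integral sandwich, the handling of the removed term $i^{-\alpha}$, and the explicit restriction to $i\in[n]\setminus[B]$ are precisely the details the paper leaves implicit.
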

\begin{proof}
Both results follows using linearity of expectation. 
\begin{align*}
\E[|\tilde f_i- f_i|]=\frac{1}{B} \sum_{j\in [n]\setminus( [B] \cup \{i\})}\frac{1}{j^\alpha}=\begin{cases}
\Theta \left( \frac{n^{1-\alpha}}{B} \right), & \alpha <1, \\
\Theta \left( B^{-\alpha} \right), & \alpha >1.
\end{cases}
\end{align*}
\end{proof}
\begin{corollary}\label{cor:learnedgeneralzipf}
Using the learned Count-Min on input coming from a Zipfian distribution with exponent $\alpha$, it holds that 
\begin{align*}
\E\left[\sum_{i \in [n]} f_i \cdot |\tilde f_i- f_i|\right]=
\begin{cases}
\Theta \left( \frac{n^{2-2\alpha}}{B} \right), & \alpha <1, \\
\Theta \left( B^{1-2\alpha} \right), & \alpha >1.
\end{cases}
\end{align*}
\end{corollary}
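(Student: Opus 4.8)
The plan is to derive the corollary directly from the per-item bound of~\Cref{thm:LCMgenZipf} via linearity of expectation, being careful to account for the fact that the learned algorithm stores the $B$ heaviest items exactly. First I would observe that for every heavy item $i \in [B]$ the learned sketch returns the exact frequency, so $|\tilde f_i - f_i| = 0$ and these items contribute nothing to the sum. Consequently,
\begin{align*}
\E\left[\sum_{i \in [n]} f_i \cdot |\tilde f_i - f_i|\right] = \sum_{i \in [n] \setminus [B]} f_i \cdot \E[|\tilde f_i - f_i|],
\end{align*}
and it remains to insert the per-item estimate supplied by~\Cref{thm:LCMgenZipf}. A key simplification is that this estimate is \emph{uniform} over the light items: removing a single term $1/i^\alpha$ from the sum $\frac{1}{B}\sum_{j \in [n]\setminus([B]\cup\{i\})} 1/j^\alpha$ does not change its order of magnitude, so the same $\Theta$-bound holds for every $i \in [n]\setminus [B]$.

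With this in hand, the computation reduces to evaluating the weighted tail sum $\sum_{i=B+1}^n 1/i^\alpha$. For $\alpha < 1$ the per-item error is $\Theta(n^{1-\alpha}/B)$, and since (by the standard integral comparison, using the assumption $2B \leq n$) the tail sum satisfies $\sum_{i=B+1}^n 1/i^\alpha = \Theta(n^{1-\alpha})$, I obtain
\begin{align*}
\Theta\!\left(\frac{n^{1-\alpha}}{B}\right)\cdot \Theta(n^{1-\alpha}) = \Theta\!\left(\frac{n^{2-2\alpha}}{B}\right).
\end{align*}
For $\alpha > 1$ the per-item error is $\Theta(B^{-\alpha})$, and here it is essential that we sum only over the light items, since $\sum_{i=B+1}^n 1/i^\alpha = \Theta(B^{1-\alpha})$ is the tail of a convergent series, rather than $\Theta(1)$ as one would get by summing over all of $[n]$. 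This yields $\Theta(B^{-\alpha})\cdot \Theta(B^{1-\alpha}) = \Theta(B^{1-2\alpha})$, matching the claim.

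There is essentially no real obstacle here, as the statement is a direct corollary of~\Cref{thm:LCMgenZipf}; the only point requiring care is the bookkeeping of which items enter the sum. The heavy items must be excluded because they carry zero error, and for $\alpha > 1$ this exclusion is exactly what produces the extra factor $B^{1-\alpha}$ in the weight sum and hence the exponent $1-2\alpha$ rather than $-\alpha$. The underlying harmonic-tail asymptotics $\sum_{i=B+1}^n i^{-\alpha} = \Theta(n^{1-\alpha})$ for $\alpha<1$ and $\Theta(B^{1-\alpha})$ for $\alpha>1$ are routine integral estimates, and both the matching upper and lower bounds follow since the per-item $\Theta$-bound is two-sided.
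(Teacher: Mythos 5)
Your proposal is correct and follows essentially the same route as the paper, which derives the corollary from the per-item bound of~\Cref{thm:LCMgenZipf} by linearity of expectation; your explicit bookkeeping (heavy items contribute zero, the per-item bound is uniform over light items, and the tail sums $\sum_{i>B} i^{-\alpha}$ equal $\Theta(n^{1-\alpha})$ for $\alpha<1$ and $\Theta(B^{1-\alpha})$ for $\alpha>1$) is exactly what the paper leaves implicit. Your remark that excluding the heavy items is what produces the exponent $1-2\alpha$ rather than $-\alpha$ for $\alpha>1$ is a correct and worthwhile clarification of the paper's terse argument.
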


Why are we only analysing learned Count-Min with a single hash function? After all, might it not be conceivable that more hash functions can reduce the expected error? It turns out that if our aim is to minimize the expected error $\Err(\sF, \tilde{\sF}_{\sA})$ we cannot do better than in~\Cref{cor:learnedgeneralzipf}. Indeed, we can employ similar techniques to those used in~\cite{hsu2018learningbased} to prove the following lower bound extending their result to general exponents $\alpha \neq 1$.

\begin{theorem}\label{thm:CMlowerbound}
Let $\alpha >0$ be fixed and $f_i=1/i^\alpha$ for $i \in [n]$. Let $n,B\in \N$ with $n\geq c B$ for some sufficiently large constant $c$ and let $h: [n] \to [B]$ be \textbf{any} function. For $i \in [n]$ define the random variable $\tilde{f_i}= \sum_{j\in [n]} [h(j)=h(i)]f_j $. Then
\begin{align*}
\sum_{i \in [n]}f_i \cdot |\tilde f_i- f_i|]=
\begin{cases}
\Omega \left( \frac{n^{2-2\alpha}}{B} \right), & \alpha<1 \\
\Omega \left( B^{1-2\alpha} \right), & \alpha>1.
\end{cases}
\end{align*}
\end{theorem}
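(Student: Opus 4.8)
The plan is to exploit positivity of the frequencies: since every $f_j>0$ we have $\tilde f_i\ge f_i$, so $|\tilde f_i-f_i|=\tilde f_i-f_i=\sum_{j\neq i}[h(j)=h(i)]f_j$, and the quantity to bound is exactly a sum over colliding ordered pairs,
\begin{align*}
\Phi(h):=\sum_{i\in[n]} f_i\,|\tilde f_i-f_i|=\sum_{\substack{i,j\in[n],\,i\neq j\\ h(i)=h(j)}} f_i f_j.
\end{align*}
The core tool is a single counting lemma. Fix a window $S\subseteq[n]$ and write $m_b=|h^{-1}(b)\cap S|$ for the number of items of $S$ landing in bucket $b$; the number of colliding unordered pairs inside $S$ is $\sum_{b\in[B]}\binom{m_b}{2}$. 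Because $\sum_b m_b=|S|$ is fixed, convexity of $x\mapsto\binom{x}{2}$ together with Jensen's inequality over the $B$ buckets gives, \emph{for every} $h$,
\begin{align*}
\sum_{b\in[B]}\binom{m_b}{2}\ge B\binom{|S|/B}{2}=\frac{|S|(|S|-B)}{2B}.
\end{align*}
Bounding each colliding pair below by $\phi^2$, where $\phi:=\min_{i\in S}f_i$, and passing to ordered pairs yields the key estimate
\begin{align*}
\Phi(h)\ge\sum_{\substack{i,j\in S,\,i\neq j\\ h(i)=h(j)}} f_i f_j\ge\frac{|S|(|S|-B)}{B}\,\phi^2 .
\end{align*}

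It then remains to choose $S$ so that its items have comparable frequencies (so replacing each $f_if_j$ by $\phi^2$ loses only a constant) and so that $|S|-B=\Omega(|S|)$. For $0<\alpha<1$ I would take the tail window $S=\{i:n/2<i\le n\}$, so that $|S|=\Theta(n)$ and $\phi=n^{-\alpha}$; since $n\ge cB$ with $c$ a large constant, $|S|-B\ge n/2-B=\Omega(n)$, and the displayed bound becomes $\Omega(n^2/B)\cdot n^{-2\alpha}=\Omega(n^{2-2\alpha}/B)$. For $\alpha>1$ the tail is negligible and I would instead take the head window $S=\{1,\dots,2B\}$, so $|S|=2B$, $|S|-B=B$, and $\phi=(2B)^{-\alpha}$; then the bound becomes $\frac{2B\cdot B}{B}(2B)^{-2\alpha}=2^{1-2\alpha}B^{1-2\alpha}=\Omega(B^{1-2\alpha})$, with $2B\le n$ guaranteed by $n\ge cB$. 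All implicit constants depend only on $\alpha$.

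The main conceptual obstacle is recognizing that a \emph{global} convexity bound over all of $[n]$ fails: an adversarial $h$ can place each heavy item in its own bucket and force the unavoidable collisions onto the very lightest items, so the all-items count weighted by the global frequency floor $n^{-2\alpha}$ is far too weak when $\alpha>1$, and the naive Cauchy--Schwarz bound $\frac{1}{B}(\sum_i f_i)^2-\sum_i f_i^2$ can even be negative. The fix, and the only real content, is to restrict to a window on which the frequencies vary by at most a constant factor, so that the pairwise lower bound $\phi^2$ is tight up to constants; the hypothesis $n\ge cB$ is used precisely to ensure $|S|-B=\Omega(|S|)$ in the tail window, i.e.\ that a constant fraction of the window items are forced into collisions no matter how $h$ is chosen. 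Verifying the asymptotics of $|S|$ and $\phi$ for the two windows is then a routine computation.
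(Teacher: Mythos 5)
Your proof is correct, and for the case $\alpha<1$ it takes a genuinely different route from the paper. The paper handles $\alpha<1$ by keeping the actual weights: writing $S_b$ for the total frequency of tail items ($i>B$) hashing to bucket $b$, it lower bounds the error by $\sum_b S_b^2-\sum_{i>B}f_i^2$, applies Jensen to get $\sum_b S_b^2\geq \frac{1}{B}\bigl(\sum_{i>B}f_i\bigr)^2$, and then needs integral-comparison estimates plus a three-way sub-case analysis around $\alpha=1/2$ to show the subtracted term $\sum_{i>B}f_i^2$ is of lower order. Your unweighted collision-counting lemma on the half-window $\{n/2<i\le n\}$, with the frequency floor $\phi=n^{-\alpha}$, sidesteps the subtraction term entirely and needs no case analysis in $\alpha$ --- this is exactly the right fix for the ``negative Cauchy--Schwarz'' obstacle you identify, and it is where your hypothesis $n\geq cB$ does its work, just as in the paper. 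For $\alpha>1$ your argument is essentially the paper's in disguise: the paper pigeonholes on the window $[3B]\setminus[B]$ to find $B$ items forced to collide, each contributing $(3B)^{-2\alpha}$, while your Jensen bound on $S=[2B]$ produces the same $\Omega(B)$ forced collisions with floor $(2B)^{-2\alpha}$ (the paper's exclusion of the head $[B]$ is not needed for the lower bound, as your version shows). The net trade-off: the paper's weighted argument for $\alpha<1$ retains the full tail mass and would give sharper constants, while your single convexity lemma unifies both regimes, is more elementary, and applies verbatim to any window of comparable frequencies. One cosmetic point: when passing from unordered to ordered pairs you gain a factor $2$, which you silently absorb --- worth stating, since you use the ordered-pair identity $\Phi(h)=\sum_{i\neq j,\,h(i)=h(j)}f_if_j$.
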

A simple reduction shows that Count-Min with a total of $B$ buckets and any number of hash functions cannot provide and expected error that is lower than the lower bound in~\Cref{thm:CMlowerbound} (see~\cite{hsu2018learningbased}).
\begin{proof}
We subdivide the exposition into the cases $0<\alpha<1$ and $\alpha>1$.
\paragraph{Case 1: $0<\alpha <1$.}
In this case
\begin{align}\label{eq:squaresofbuckets}
\sum_{i \in [n]} f_i \cdot |\tilde f_i- f_i|\geq \sum_{i \in [n]\setminus [B]} f_i \cdot |\tilde f_i- f_i|
&=\sum_{b \in [B]}\left(\sum_{j\in[n]\setminus[B]: h(j)=b} f_j\right)^2-\sum_{i \in[n]\setminus [B]}f_i^2 \nonumber \\
&=\sum_{b \in [B]}S_b^2-\sum_{i \in[n]\setminus [B]}f_i^2,
\end{align}
where we have put $S_b=\sum_{j\in[n]\setminus [B]: h(j)=b}f_j$, the total weight of items hashing to bucket $b$. Now by Jensen's inequality
\begin{align*}
\sum_{b \in [B]}S_b^2\geq\frac{1}{B} \left(\sum_{i \in [n]\setminus [B]}f_i\right)^2
\end{align*}
Furthermore, we have the estimates
\begin{align*}
\sum_{i \in [n]\setminus [B]}f_i=\sum_{i=B}^n \frac{1}{i^\alpha}-\frac{1}{B^\alpha}  \geq \int_{B}^n x^{-\alpha} \, dx-\frac{1}{B^\alpha} =\frac{1}{1-\alpha}(n^{1-\alpha}-B^{1-\alpha})-\frac{1}{B^\alpha},
\end{align*}
and
\begin{align*}
\sum_{i \in [n] \setminus [B]} f_i^2\leq \int_{B}^n x^{-2\alpha}= \begin{cases}
\frac{1}{1-2\alpha}(n^{1-2\alpha}-B^{1-2\alpha}), &\alpha\neq 1/2 \\
\log (n/B), & \alpha=1/2.
\end{cases}
\end{align*}
Here we have used the standard technique of comparing a sum to an integral.
Assuming that $n\geq cB$ for some sufficiently large constant $c$ (depending on $\alpha$), it follows that $\sum_{b \in [B]}S_b^2=\Omega \left(\frac{n^{2-2\alpha}}{B} \right)$. It moreover follows (again for $n$ sufficiently large) that,
\begin{align*}
\sum_{i \in[n]\setminus [B]}f_i^2= \begin{cases}
O(\log (n/B)), & \alpha=1/2, \\
O(n^{1-2 \alpha}), & \alpha <1/2, \\
O(B^{1-2\alpha}), & \alpha>1/2.
\end{cases}
\end{align*}
Plugging into~\eqref{eq:squaresofbuckets}, we see that in each of the three cases $\alpha <1/2$, $\alpha=1/2$ and $\alpha>1/2$ it holds that $\sum_{b \in [B]}S_b^2-\sum_{i \in[n]}f_i^2= \Omega \left(\frac{n^{2-2\alpha}}{B}\right)$.

\paragraph{Case 2: $0\alpha>1$.} For this case we simply assume that $n\geq 3B$. Let $I\subseteq [3B] \setminus [B]$ consist of those $i$ satisfying that $h(i)=h(j)$ for some $j\in [3B] \setminus [B]$, $j\neq i$. Then $|I| \geq B$ and if $i\in I$, then $f_i\geq (3B)^{-\alpha}$ and $|\tilde f_i-f_i|\geq (3B)^{-\alpha}$. Thus
\begin{align*}
\sum_{i \in [n]} f_i \cdot |\tilde f_i- f_i|\geq\sum_{i \in I}f_i \cdot |\tilde f_i- f_i| \geq B (3B)^{-2\alpha}=\Omega(B^{1-2\alpha}).
\end{align*}
\end{proof}

\subsection{Learned Count-Min using a noisy oracle}
As we did in the case $\alpha=1$ (\Cref{thm:faultyoracle}), we now present an analogue to~\Cref{thm:LCMgenZipf} when the heavy hitter oracle is noisy. Note that the results in~\Cref{tbl:results3} demonstrates that we obtain no asymptotic improvement using the heavy hitter oracle when $0<\alpha<1$ and therefore we only consider the case $\alpha>1$. We show the following trade-off between the classic and learned case, as the error probability, $\delta$, that the heavy hitter oracle misclassifies an item, varies in $[0,1]$.

\begin{theorem}\label{thm:faultyoracle2}
Suppose that the input follows a generalized Zipfian distribution with $n\geq B$ different items and exponent $\alpha$ for some constant $\alpha>1$. Learned Count-Sketch with a single hash functions, a heavy hitter oracle $\textbf{HH}_\delta$, $B_h=\Theta(B)$ bins allocated to the $B_h$ items classified as heavy and $B-B_h=\Theta(B)$ bins allocated to a Count-Sketch of the remaining items, incurs an expected error of 
$$
O\left(\frac{1}{B} \left(\delta+B^{1-\alpha} \right)^2\right)
$$
\end{theorem}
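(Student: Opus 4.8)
The plan is to follow the template of the proof of~\Cref{thm:faultyoracle} (the $\alpha=1$ case), conditioning on the queried item $i$ being classified as non-heavy, since only then does it incur error, and then decomposing its error contribution into a ``misclassified heavy'' part and a ``genuine tail'' part. Concretely, with $h:[n]\to[B-B_h]$ and $s:[n]\to\{-1,1\}$ the ($2$-independent) hash functions of the Count-Sketch, writing $\eta_j=[h(j)=h(i)]$ and letting $\alpha_j$ indicate that item $j$ is classified non-heavy, I would bound
\[
|\tilde f_i-f_i|\le \sum_{j\in[B_h]\setminus\{i\}}\alpha_j\eta_j f_j+\Bigl|\sum_{j\in[n]\setminus([B_h]\cup\{i\})}\alpha_j\eta_j s(j)f_j\Bigr|:=S_1+S_2,
\]
exactly as in the $\alpha=1$ case.

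The one genuine difference from $\alpha=1$ is that for $\alpha>1$ the relevant series converge, which is what shrinks both contributions. For $S_1$ each term has $\Pr[\alpha_j\eta_j=1]\le \delta/(B-B_h)$ (a heavy item is misclassified with probability at most $\delta$, independently of the hash), so $\E[S_1]\le \frac{\delta}{B-B_h}\sum_{j\le B_h}f_j=O(\delta/B)$ since $\sum_{j\le B_h}j^{-\alpha}=O(1)$. For $S_2$ I would invoke the $L_2$ bound of~\Cref{explemma} (valid under $2$-independence, since $\E[s(j)s(j')]=0$ for $j\ne j'$): $\E[S_2]\le(\E[S_2^2])^{1/2}=\bigl(\sum_{j>B_h}p_jf_j^2\bigr)^{1/2}$ with $p_j\le 1/(B-B_h)=O(1/B)$, and $\sum_{j>B_h}j^{-2\alpha}=O(B^{1-2\alpha})$ then gives $\E[S_2]=O(B^{-\alpha})$. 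Hence, uniformly over $i$, the conditional error is $\E[|\tilde f_i-f_i|\mid i\text{ non-heavy}]=O\bigl(\tfrac{\delta}{B}+B^{-\alpha}\bigr)=O\bigl(\tfrac{1}{B}(\delta+B^{1-\alpha})\bigr)$.

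Finally I would aggregate over the query distribution. Writing $N=\sum_i f_i=\Theta(1)$ for $\alpha>1$, only items classified non-heavy contribute: a truly heavy $i\in[B_h]$ is misclassified with probability at most $\delta$, and a truly light $i$ reaches the sketch with probability at most $1$. Thus the total queried weight reaching the sketch is at most $\delta\sum_{j\le B_h}f_j+\sum_{j>B_h}f_j=\delta\cdot O(1)+O(B^{1-\alpha})=O(\delta+B^{1-\alpha})$, and multiplying by the conditional error and dividing by $N$ yields $\frac1N\cdot O(\delta+B^{1-\alpha})\cdot O\bigl(\tfrac1B(\delta+B^{1-\alpha})\bigr)=O\bigl(\tfrac1B(\delta+B^{1-\alpha})^2\bigr)$. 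The main point to get right is that \emph{both} the conditional error and the aggregated query weight are governed by the \emph{same} quantity $\delta+B^{1-\alpha}$ — the former through the split $\E[S_1]+\E[S_2]$, the latter through the split of queried items into heavy and light — and it is this coincidence that produces the clean square; everything else is the routine comparison of the Zipfian sums $\sum j^{-\alpha}$ and $\sum j^{-2\alpha}$ to convergent integrals.
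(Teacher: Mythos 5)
Your proposal is correct and follows essentially the same route as the paper: condition on the queried item being classified non-heavy, split its error into the misclassified-heavy part and the genuine tail, bound the conditional error by $O\bigl(\tfrac{1}{B}(\delta+B^{1-\alpha})\bigr)$, and aggregate over the query distribution whose sketch-bound weight is $O(\delta+B^{1-\alpha})$, producing the square. The only cosmetic difference is that the paper's proof bounds the tail by plain linearity of expectation (Count-Min style, $\tfrac{1}{B-B_h}\sum_{j>B_h}f_j=O(B^{-\alpha})$) rather than your $L_2$/Jensen bound via~\Cref{explemma}, which gives the same $O(B^{-\alpha})$ since the series converges for $\alpha>1$.
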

\begin{proof}
The proof is very similar to that of~\Cref{thm:faultyoracle}
Let $h:[n] \to [B-B_h]$ be the hash function used for the Count-Min. In the analysis to follow, it is enough to assume that it is $2$-independent. Suppose item $i$ is classified as non-heavy. The expected error incurred by item $i$ is then
$$
\frac{1}{B-B_h} \left(\delta\sum_{j \in [B_h] \setminus \{i\}} f_j+\sum_{j \in [n] \setminus ([B_h]\cup \{i\})} f_j \right)=O\left( \frac{\delta +B^{1-\alpha}}{B} \right).
$$
Letting $N=\sum_{i \in [n]}f_i=O(1)$, the expected error (as defined in~\eqref{eq:simplified-error}) is at most
$$
\frac{1}{N}\left(\delta\sum_{j \in [B_h] \setminus \{i\}} f_j+\sum_{j \in [n] \setminus ([B_h]\cup \{i\})} f_j \right)\cdot O\left( \frac{\delta +B^{1-\alpha}}{B} \right)=O\left(\frac{1}{B} \left(\delta+B^{1-\alpha} \right)^2\right),
$$
as desired.
\end{proof}
For $\delta=1$, we recover the bound for the classic Count-Min. We also see that it suffices that $\delta=O(B^{1-\alpha})$ in order to obtain the same bound as with a perfect heavy hitter oracle.


\section{Concentration bounds}\label{appendix2}
In this appendix we collect some concentration inequalities for reference in the main body of the paper. The inequality we will use the most is Bennett's inequality. However, we remark that for our applications, several other variance based concentration result would suffice, e.g., Bernstein's inequality. 
 \begin{theorem}[Bennett's inequality~\cite{bennett1962}]\label{thm:Bennett}
Let $X_1,\dots,X_n$ be independent, mean zero random variables. Let $S=\sum_{i=1}^n X_i$, and $\sigma^2,M>0$ be such that $\Var[S]\leq \sigma^2$ and $|X_i|\leq M$ for all $i\in [n]$. For any $t\geq 0$,
\begin{align*}
\Pr[S\geq t]\leq \exp \left(-\frac{\sigma^2}{M^2} h \left( \frac{tM}{\sigma^2} \right)\right),
\end{align*}
where $h:\R_{\geq 0} \to \R_{\geq 0 }$ is defined by $h(x)=(x+1) \log (x+1)-x$. The same tail bound holds on the probability $\Pr[S\leq -t]$.
\end{theorem}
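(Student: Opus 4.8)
The plan is to prove the upper-tail bound by the classical Cramér–Chernoff (exponential moment) method and then obtain the lower-tail bound by symmetry. First I would fix $\lambda > 0$ and apply Markov's inequality to $e^{\lambda S}$, giving $\Pr[S \ge t] \le e^{-\lambda t}\E[e^{\lambda S}]$. Since the $X_i$ are independent, $\E[e^{\lambda S}] = \prod_{i=1}^n \E[e^{\lambda X_i}]$, so the whole problem reduces to controlling the moment generating function of a single mean-zero, bounded variable and then optimizing over $\lambda$ at the end.

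The key step is the per-variable bound: for a mean-zero $X$ with $|X| \le M$ I would show $\E[e^{\lambda X}] \le \exp(\frac{\E[X^2]}{M^2}(e^{\lambda M} - 1 - \lambda M))$. The engine behind this is the elementary fact that $\varphi(y) = (e^y - 1 - y)/y^2$ is nondecreasing on $\R$, which I would verify by a short derivative computation (equivalently, its power series $\sum_{k\ge 2} y^{k-2}/k!$ has nonnegative coefficients, combined with a sign argument at the origin). Writing $y = \lambda x$ and using $x \le M$, monotonicity yields the pointwise inequality $e^{\lambda x} - 1 - \lambda x \le \frac{x^2}{M^2}(e^{\lambda M} - 1 - \lambda M)$; taking expectations and using $\E[X]=0$ kills the linear term, after which $1 + u \le e^u$ turns the resulting $1 + (\cdots)$ into the claimed exponential. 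Multiplying over $i$ and using $\sum_i \E[X_i^2] = \Var[S] \le \sigma^2$ — which goes in the correct direction precisely because $e^{\lambda M} - 1 - \lambda M \ge 0$ — gives $\E[e^{\lambda S}] \le \exp(\frac{\sigma^2}{M^2}(e^{\lambda M}-1-\lambda M))$.

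It then remains to minimize $g(\lambda) = -\lambda t + \frac{\sigma^2}{M^2}(e^{\lambda M} - 1 - \lambda M)$ over $\lambda > 0$. Setting $g'(\lambda)=0$ yields $e^{\lambda M} = 1 + \frac{tM}{\sigma^2}$, i.e. $\lambda = \frac1M\log(1 + \frac{tM}{\sigma^2})$; substituting this value and writing $u = tM/\sigma^2$ collapses the exponent to $-\frac{\sigma^2}{M^2}((1+u)\log(1+u) - u) = -\frac{\sigma^2}{M^2} h(u)$, exactly the stated bound. Finally, applying the identical argument to $-S = \sum_i(-X_i)$, whose summands have the same bound $M$, mean zero, and variance, delivers the matching tail bound on $\Pr[S \le -t]$. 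I expect the only genuinely delicate point to be the monotonicity of $\varphi$ (equivalently, the sharpness of the single-variable MGF bound); the surrounding steps are a mechanical Chernoff optimization.
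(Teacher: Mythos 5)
Your argument is correct, and there is nothing in the paper to compare it against: the paper states Bennett's inequality as a quoted black-box tool (Theorem~\ref{thm:Bennett}, cited to Bennett 1962) and gives no proof, so your Cram\'er--Chernoff derivation is exactly the missing standard proof. The chain Markov $\to$ product of MGFs $\to$ per-variable bound $\to$ optimization at $\lambda=\frac1M\log(1+\frac{tM}{\sigma^2})$ is complete, and you correctly flag the two directional subtleties (that $e^{\lambda M}-1-\lambda M\ge 0$ is what lets you enlarge $\sum_i\E[X_i^2]$ to $\sigma^2$, and that $\E[X]=0$ kills the linear term). One caution on the step you yourself identify as delicate: nonnegativity of the coefficients in $\varphi(y)=(e^y-1-y)/y^2=\sum_{k\ge 2}y^{k-2}/k!$ only gives monotonicity on $y\ge 0$, since odd powers appear; on the negative half-line you genuinely need the derivative computation, e.g.\ $\varphi'(y)=\psi(y)/y^3$ with $\psi(y)=(y-2)e^y+y+2$, where $\psi''(y)=ye^y$ forces $\psi'\ge 0$ everywhere (as $\psi'(0)=0$ is its minimum-in-sign point), hence $\psi\le 0$ on $(-\infty,0]$ and $\psi\ge 0$ on $[0,\infty)$, so $\varphi'\ge 0$ on both half-lines. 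Your "sign argument at the origin" presumably means this, but as written the power-series remark alone would not suffice. A final cosmetic point: the upper tail uses only the one-sided hypothesis $X_i\le M$; the two-sided bound $|X_i|\le M$ in the statement is what makes your symmetry step for $\Pr[S\le -t]$ immediate, since $-X_i$ then satisfies the same hypotheses with the same $M$ and variance.
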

\remark\label{asymptotics}For $x\geq 0$, $\frac{1}{2}x \log (x+1) \leq h(x) \leq x \log (x+1)$. We will use these asymptotic bounds repeatedly in this paper. 

A corollary of Bennett's inequality is the classic Chernoff bounds.

\begin{theorem}[Chernoff~\cite{Che52:chernoff}]\label{thm:Chernoff}
Let $X_1,\dots,X_n\in [0,1]$ be independent random variables and $S=\sum_{i=1}^n X_i$. Let $\mu=\E[S]$. Then
\begin{align*}
\Pr[S\geq (1+\delta)\mu ] \leq \exp(-\mu h(\delta)).
\end{align*}
\end{theorem}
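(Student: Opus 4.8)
The plan is to derive this bound directly from Bennett's inequality (\Cref{thm:Bennett}), exactly as the preceding sentence promises. First I would center the summands: put $Y_i = X_i - \E[X_i]$ for $i \in [n]$, so that the $Y_i$ are independent, mean-zero, and $S - \mu = \sum_{i=1}^n Y_i$. Since $X_i \in [0,1]$ (and hence $\E[X_i] \in [0,1]$), we have $Y_i \in [-1,1]$, so $|Y_i| \leq 1$ and I may take $M = 1$ as the almost-sure bound required by Bennett's inequality.

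Next I would produce a valid variance parameter. The elementary observation is that $x^2 \leq x$ for every $x \in [0,1]$, whence $\E[X_i^2] \leq \E[X_i]$ and therefore
\[
\Var[S] = \sum_{i=1}^n \Var[X_i] \leq \sum_{i=1}^n \E[X_i^2] \leq \sum_{i=1}^n \E[X_i] = \mu.
\]
Because \Cref{thm:Bennett} permits \emph{any} $\sigma^2$ with $\Var[S] \leq \sigma^2$, I am free to take $\sigma^2 = \mu$.

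Finally I would apply \Cref{thm:Bennett} to $S - \mu = \sum_{i=1}^n Y_i$ with $t = \delta\mu$, $M = 1$, and $\sigma^2 = \mu$, which yields
\[
\Pr[S \geq (1+\delta)\mu] = \Pr[S - \mu \geq \delta\mu] \leq \exp\!\left(-\frac{\mu}{1}\, h\!\left(\frac{\delta\mu \cdot 1}{\mu}\right)\right) = \exp(-\mu\, h(\delta)),
\]
which is precisely the claim. I expect essentially no obstacle here: the only steps meriting a word of justification are the variance bound via $x^2 \leq x$, and the legitimacy of overestimating the variance by $\mu$. The latter is built into the hypothesis of \Cref{thm:Bennett}, and is in any case harmless because the exponent $\tfrac{\sigma^2}{M^2}\, h(\tfrac{tM}{\sigma^2})$ is nonincreasing in $\sigma^2$ (one checks that $h(u)/u$ is nondecreasing since $h'(u)=\log(u+1)$ and $u \geq \log(u+1)$), so replacing $\Var[S]$ by the larger quantity $\mu$ only weakens the tail bound.
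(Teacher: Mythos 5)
Your derivation is correct and follows exactly the route the paper intends: \Cref{thm:Chernoff} is stated without proof, introduced only by the remark that it is a corollary of Bennett's inequality, and your centering step, the variance bound $\Var[S]\leq\mu$ via $x^2\leq x$ on $[0,1]$, and the application of \Cref{thm:Bennett} with $t=\delta\mu$, $M=1$, $\sigma^2=\mu$ supply precisely the standard details of that derivation. Your closing observation that the exponent $\frac{\sigma^2}{M^2}h\left(\frac{tM}{\sigma^2}\right)$ is nonincreasing in $\sigma^2$ is a correct but unneeded extra check, since the hypothesis of \Cref{thm:Bennett} already permits any overestimate of $\Var[S]$.
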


Even weaker than Chernoff's inequality is Hoeffding's inequality.

\begin{theorem}[Hoeffding~\cite{hoeffding1963inequality}]\label{thm:Hoeffding}
Let $X_1,\dots,X_n \in [0,1]$ be independent random variables. Let $S=\sum_{i=1}^n X_i$. Then 
\begin{align*}
\Pr[S-\E[S]\geq t]\leq e^{-\frac{2t^2}{n}}.
\end{align*}
\end{theorem}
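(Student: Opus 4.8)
The plan is to prove this sub-Gaussian tail bound by the standard exponential-moment (Chernoff--Cram\'er) method, the same machinery underlying Bennett's inequality (\Cref{thm:Bennett}) above, but tuned so as to extract the exact constant $2t^2/n$. Write $Y_i=X_i-\E[X_i]$, so that each $Y_i$ is mean-zero and takes values in an interval of length at most $1$ (since $X_i\in[0,1]$), and set $Y=S-\E[S]=\sum_{i=1}^n Y_i$. First I would fix $\lambda>0$ and apply Markov's inequality to the nonnegative variable $e^{\lambda Y}$, giving
\[
\Pr[Y\geq t]=\Pr[e^{\lambda Y}\geq e^{\lambda t}]\leq e^{-\lambda t}\,\E[e^{\lambda Y}].
\]
By independence the moment generating function factorizes, $\E[e^{\lambda Y}]=\prod_{i=1}^n \E[e^{\lambda Y_i}]$, so the whole estimate reduces to a uniform bound on the one-variable MGF $\E[e^{\lambda Y_i}]$.

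The key step is Hoeffding's lemma: for a mean-zero random variable $Y$ supported in an interval $[a,b]$ one has $\E[e^{\lambda Y}]\leq e^{\lambda^2(b-a)^2/8}$. I would prove this by using convexity of $x\mapsto e^{\lambda x}$ to bound $e^{\lambda Y}$ pointwise by the chord through $(a,e^{\lambda a})$ and $(b,e^{\lambda b})$, taking expectations (which kills the linear term because $\E[Y]=0$), and writing the resulting expression as $e^{\psi(\lambda)}$ with $\psi(\lambda)=\log\!\bigl((1-p)e^{\lambda a}+p\,e^{\lambda b}\bigr)$ for $p=-a/(b-a)$. A short computation shows $\psi(0)=\psi'(0)=0$ and $\psi''(\lambda)\leq (b-a)^2/4$ for all $\lambda$, since $\psi''(\lambda)$ is the variance of a two-point distribution supported in $[a,b]$ (exponentially tilted), hence at most a quarter of the square of its range. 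Taylor's theorem then gives $\psi(\lambda)\leq \lambda^2(b-a)^2/8$, and applying this with $b-a\leq 1$ yields $\E[e^{\lambda Y_i}]\leq e^{\lambda^2/8}$ for every $i$.

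Combining the two steps gives $\Pr[Y\geq t]\leq e^{-\lambda t+n\lambda^2/8}$, and it remains only to optimize the free parameter: the exponent is minimized at $\lambda=4t/n$, which produces the claimed bound $e^{-2t^2/n}$. I expect the only genuine obstacle to be Hoeffding's lemma, and within it the variance bound $\psi''(\lambda)\leq (b-a)^2/4$; everything else (Markov, factorization, optimization over $\lambda$) is routine. It is worth remarking that one cannot simply quote the Poisson-type bounds stated above, e.g.\ Bennett's inequality (\Cref{thm:Bennett}) with $\sigma^2=n/4$ and $M=1$, since the resulting exponent $\tfrac{n}{4}h(4t/n)$ agrees with $2t^2/n$ only to leading order and is in fact strictly smaller for $t>0$ (as $h(x)=\tfrac{x^2}{2}-\tfrac{x^3}{6}+\cdots$); extracting the exact sub-Gaussian constant is precisely what the direct MGF argument accomplishes.
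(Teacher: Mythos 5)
The paper does not prove this statement at all --- it is quoted as a black-box classical bound with a citation to Hoeffding's 1963 paper --- so there is no internal proof to compare against; your argument is the standard (indeed essentially Hoeffding's original) exponential-moment proof and it is correct: Hoeffding's lemma applied with $b-a\leq 1$ gives $\E[e^{\lambda Y_i}]\leq e^{\lambda^2/8}$, and the optimal choice $\lambda=4t/n$ turns $e^{-\lambda t+n\lambda^2/8}$ into exactly $e^{-2t^2/n}$. Your closing remark is also accurate: since $h'(x)=\log(1+x)\leq x$ implies $h(x)\leq x^2/2$ for $x\geq 0$, Bennett's inequality with $\sigma^2=n/4$, $M=1$ yields the exponent $\frac{n}{4}h\left(\frac{4t}{n}\right)\leq \frac{2t^2}{n}$, which is strictly weaker for $t>0$, so the direct MGF argument (via the variance bound $\psi''(\lambda)\leq (b-a)^2/4$ for the tilted two-point law) is genuinely needed to get the constant $2$.
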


\end{document}